\newtheorem*{claim*}{Claim}
\newtheorem*{corollary}{Corollary}
\newlist{alphalist}{enumerate}{1}
\setlist[alphalist,1]{label=\alph*.}
  \providecommand\BibTeX{{%
    \normalfont B\kern-0.5em{\scshape i\kern-0.25em b}\kern-0.8em\TeX}}}
\newcommand{\R}{\mathbb{R}}
\newtheorem{claim}{Claim}
\begin{document}

\AddToShipoutPicture*{ 
  \AtPageUpperLeft{ 
    \setlength\unitlength{1in} 
    \put(0.5,-0.5){ 
      \makebox(0,0)[left]{\textbf{Accepted by ACM Transactions on Graphics for 2024 publication. This document is the preprint version.}}
    }
  }
}

\title{Spectral Total-Variation Processing of Shapes - Theory and Applications}

\author{Jonathan Brokman}
\email{jonathanbrok@gmail.com}
\affiliation{%
  \institution{Technion}
  \streetaddress{Technion City}
  \city{Haifa}
  \country{Israel Institute of Technology}
  \postcode{32000}
}

\author{Martin Burger}
\email{martin.burger@desy.de}
\affiliation{%
  \institution{Helmholtz Imaging, Deutsches Elektronen-Synchrotron DESY}
  \streetaddress{Notkestr. 85}
  \city{Hamburg}
  \country{Germany}
  \postcode{22607}
}
\affiliation{%
  \institution{Fachbereich Mathematik, Universit\"at Hamburg}
  \streetaddress{Bundesstr. 55}
  \city{Hamburg}
  \country{Germany}
  \postcode{20146}
}
 
\author{Guy Gilboa}
\email{guy.gilboa@ee.technion.ac.il}
\affiliation{%
  \institution{Technion}
  \streetaddress{Technion City}
  \city{Haifa}
  \country{Israel Institute of Technology}
  \postcode{32000}
}

\renewcommand{\shortauthors}{Brokman et al.}

\begin{abstract}
We present an analysis of total-variation (TV) on non-Euclidean parameterized surfaces, a natural representation of the shapes used in 3D graphics. Our work explains recent experimental findings in shape spectral TV \cite{fumero2020nonlinear} and adaptive anisotropic spectral TV \cite{biton2022adaptive}. A new way to generalize set convexity from the plane to surfaces is derived by characterizing the TV eigenfunctions on surfaces. Relationships between TV, area, eigenvalue, eigenfunctions and their discontinuities are discovered. Further, we expand the shape spectral TV toolkit to include versatile zero-homogeneous flows demonstrated through smoothing and exaggerating filters. Last but not least, we propose the first TV-based method for shape deformation, characterized by deformations along geometrical bottlenecks. We show these bottlenecks to be aligned with eigenfunction discontinuities. This research advances the field of spectral TV on surfaces and its application in 3D graphics, offering new perspectives for shape filtering and deformation.
\end{abstract}

\begin{CCSXML}
<ccs2012>
   <concept>
       <concept_id>10002950</concept_id>
       <concept_desc>Mathematics of computing</concept_desc>
       <concept_significance>500</concept_significance>
       </concept>
 </ccs2012>
\end{CCSXML}

\ccsdesc[500]{Mathematics of computing}

\keywords{geometry processing, total-variation, 3-Laplacian, nonlinear, non-Euclidean, nonlinear spectral processing}


\maketitle

\begin{figure}[htbp]
\includegraphics[width=0.5\textwidth]{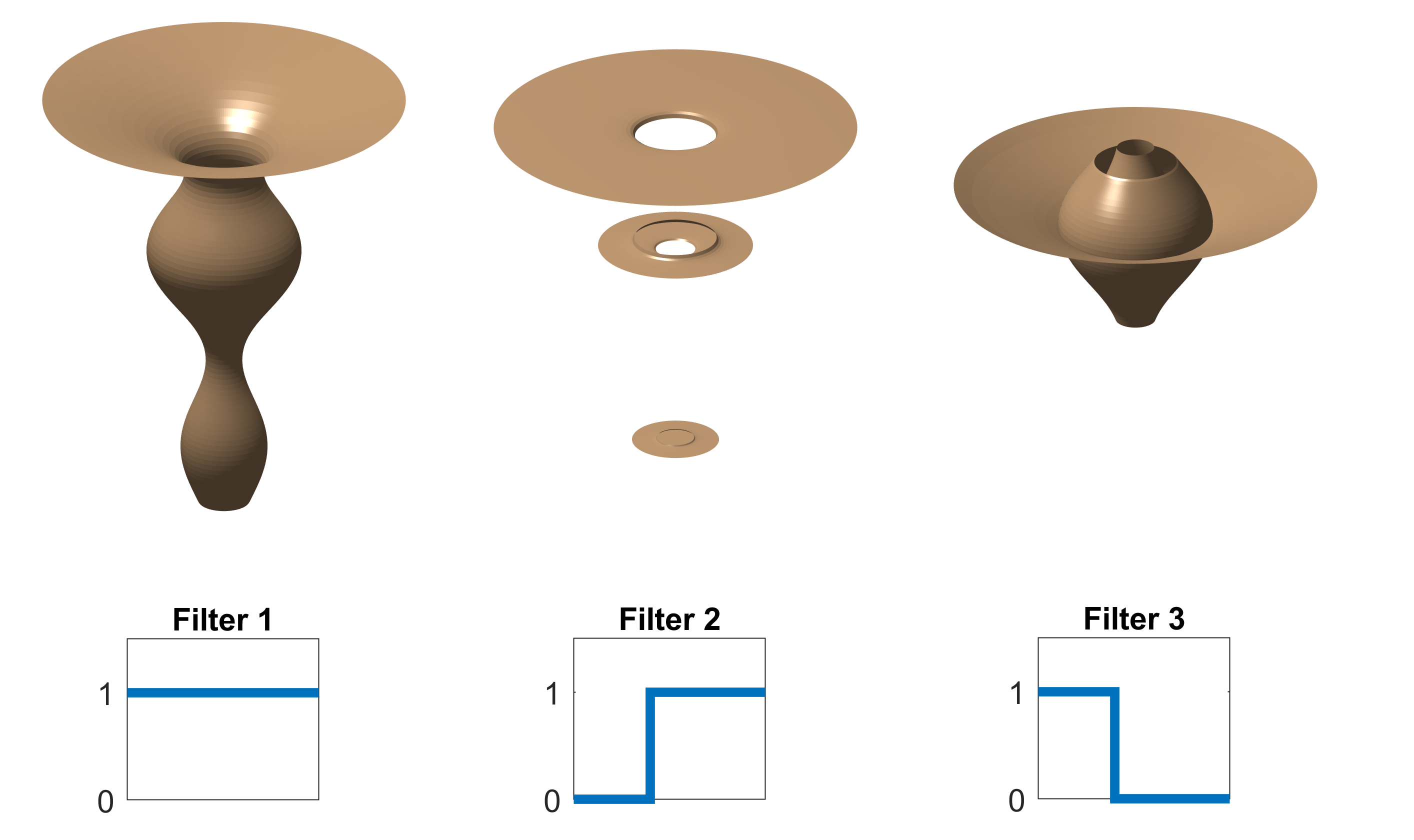} 
\caption{Spectral TV filtering of a shape, demonstrating the application of different filters to isolate TV eigenfunctions and reveal geometric details. Left: The original shape with an all-pass filter, displaying the complete spectrum. Middle: A low-pass filter uncovers the shape’s fundamental structure. From a theoretical point of view - it isolates eigenfunctions as lower-frequency components. Right: A high-pass filter brings forth the geometrical details on-top the foundational structure. The geometry is partitioned into three distinct subsets based on eigenfunction discontinuities (details in Fig. \ref{sinc}). This delineates the original shape by its bottlenecks, which are also observed in the boundaries of each of the three nested, concentric surfaces of the right panel. These bottleneck boundaries are a direct manifestation of the eigenfunction properties, which our manuscript investigates through novel theoretical and empirical analysis.}
\label{fig:sinc_hq}
\end{figure}

\section{Introduction}

Spectral geometry processing is a widely used technique in computer graphics. It involves breaking down shapes and their functions into spectral components, harnessing the multiscale nature and the well studied properties of eigenfunctions
\cite{taubin1995signal, sorkine2004laplacian, wetzler2013laplace, aflalo2013scale, bracha2020shape}. 
Spectral processing is useful for various computer graphics applications, such as mesh filtering, surface decoding, segmentation (spectral partitioning), shape deformation and general analysis. For a comprehensive exploration of this approach, including a survey and overview, please refer to  \cite{cammarasana2021localised}, and the references therein.

Generally, the spectral approach is based on linear algebra and harmonic analysis. It can be viewed as a generalization of Fourier basis to Riemannian manifolds and to graphs.
In recent years, there has been a growing branch  studying nonlinear spectral decompositions, see e.g. \cite{gilboa2018book}. In this setting, a signal is decomposed into spectral components related to nonlinear eigenfunctions of the form $\lambda u \in \partial J(u)$, where $\partial J$ is the subdifferential of a convex functional $J$. 
It can be readily seen that when $J$ is the Dirichlet energy, $J(u)=\|\nabla u\|_2^2 $, we have $\partial J(u)=\nabla J(u)= -\Delta u$, where $\Delta$ denotes the Laplacian. Thus, the eigen-decomposition reduces back to a linear Fourier-type analysis. In this sense, it is a generalization of linear spectral methods. 
Most of the research was dedicated to absolutely one-homogeneous functionals \cite{gilboa2014total:30,burger2016spectral,bungert2019nonlinear} such as total variation. These new signal analysis methods enabled crisp, edge-preserving decompositions and representations. However, the theory and applications were restricted mostly to images and signals in Euclidean domains. In this study we develop a theoretical setting for nonlinear spectral processing on surfaces, demonstrating the potential benefits of this approach to computer graphics. 

\begin{figure}[htbp]
\includegraphics[width=0.5\textwidth]{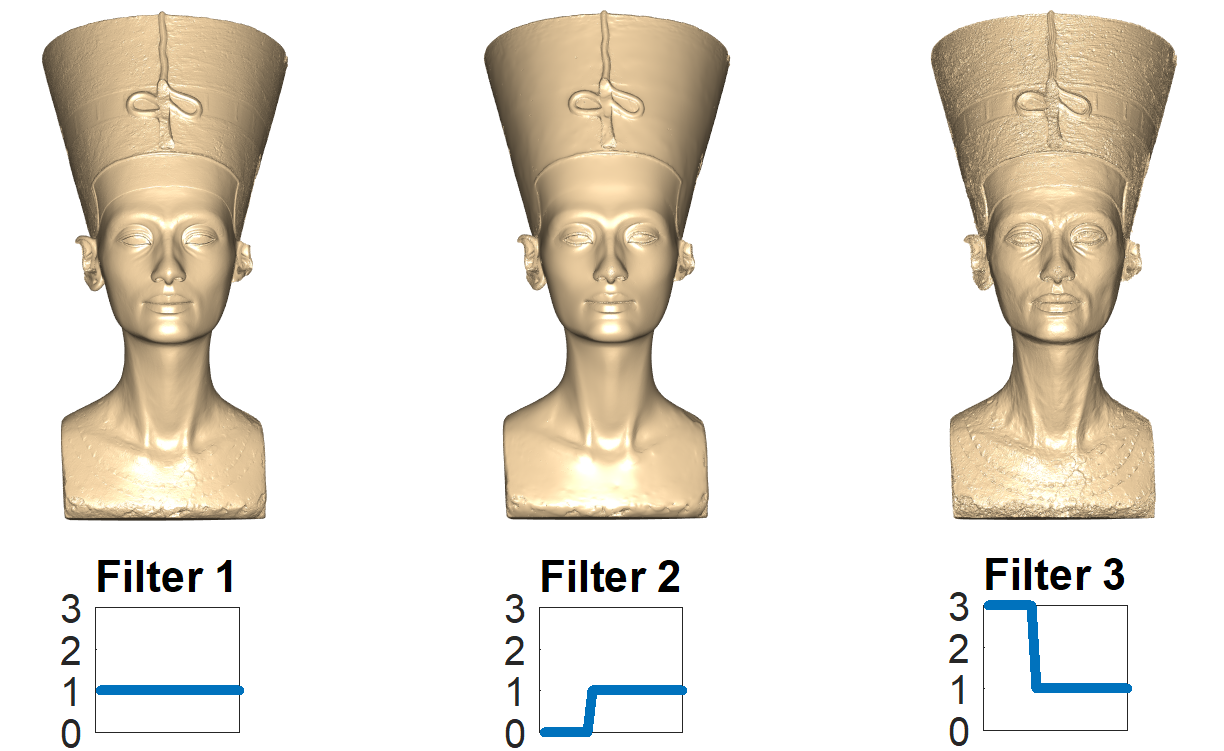} 
\caption{"Bust of Queen Nefertiti"\protect\footnotemark: Nonlinear spectral filtering applied using one of our proposed methods, presented in Sec. \ref{resmoothing method}}
\label{fig: nefertities}
\end{figure}

Total Variation (TV) is a popular regularization functional, well-known for its edge-preserving properties \cite{chambolle2004algorithm}. For a smooth function $u:\Omega \rightarrow \R$,  the TV functional is
\begin{equation}
TV(u)=\int_\Omega |\nabla u(x)|dx.      
\end{equation}
Total variation has found extensive applications in various domains and tasks over the past three decades.
Most notably, this approach was extensively used in image processing, for denoising, deconvolution, texture modeling, super-resolution, segmentation and more 
\cite{rof92,chambolle2011first,burger2013guide,wei2019regional,aujol2006structure,leng2021total,pascal2021automated, zhang2022robust}.  

\footnotetext{"Bust of Queen Nefertiti". Ägyptisches Museum und Papyrussammlung. The original sculpture was created in 1345 BC by Thutmose. Scanned by Nora Al-Badri and Jan Nikolai Nelles. \href{https://nefertitihack.alloversky.com/}{Available} under a \href{https://creativecommons.org/licenses/by-sa/4.0/}{Creative Commons license}.}

The exploration of Total Variation extends beyond images and encompasses shapes, point clouds, and graphs \cite{duan2019noise, sawant2020review, dinesh2020super}. Consequently, it is natural to harness the $TV$ functional for its nonlinear spectral properties in fields like spectral geometry. Such an approach was initiated by \cite{fumero2020nonlinear}, which introduced Spectral Total Variation \cite{gilboa2013spectral, Gilboa_spectv_SIAM_2014} for geometry analysis. Nonetheless, transitioning from images to shapes presents new challenges.

In the domain of image processing, regularization predominantly revolves around manipulating a function $u$ defined within a Euclidean domain. However, when it comes to geometry processing, the usage of $u$ as a representation of the processed shape introduces a significant distinction, 
as the assumption of a Euclidean metric 
is typically not valid. For Spectral TV, this distinction sparks unique and unexplored considerations.




The primary focus of this paper centers around the theory of  TV eigenfunctions and Spectral TV within the realm of 3D geometry. Our main contributions are:
\begin{itemize}
    \item A new way to generalize the definition of convex sets from planar domains to surfaces. This generalization arises as a property of total variation (TV) eigenfunctions on non-Euclidean surfaces, addressing theoretical inquiries of recent years \cite{fumero2020nonlinear}, \cite{biton2022adaptive}.
    \item Further quantitative relationships between eigenvalue, area, and the total variation of the eigenfunctions are derived. 
    \item Numerical demonstrations of our theoretical findings on the eigenfunction properties are provided, Figs. \ref{fig: spheres flow}, \ref{Toroid}, \ref{sinc}.
    \item Spectral TV is extended to include novel zero-homogeneous flows, utilized for spectral filtering of shapes, where each flow induces distinct filtering capabilities  - while being multi-scale as expected from spectral methods. See Figs. \ref{lion vase bands}, \ref{victoria michael limbs}, \ref{fig:4_armadil_caricature}.
    \item We present the first TV based solution to the Shape Deformation task, as described in the tutorial by  \cite{sorkine2009interactive}. This approach typically results in the concentration of deformation around geometrical bottlenecks, see Figs. \ref{fig:snail}, \ref{fig:additional_stretch}.
    \item The discovered properties of eigenfunctions are observed in both our shape filtering and shape deformation results, enhancing the understanding of these methods, notably in Figs. \ref{sinc}, \ref{armadil0 filtering}, \ref{fig:deformation_process},  \ref{fig:stretch_flow}. 
\end{itemize}

Through these contributions, we not only derive new generalizations of key theoretical concepts from Euclidean planes to non-Euclidean surfaces but also showcase how these theoretical advancements can be applied to shape processing tasks, offering new insights as well as methodologies. For 3D model details, see footnote under references. Our results are reproducible via our \href{https://drive.google.com/drive/folders/1QbfY7sR_Oonw3_0MFFGiRfRHmOJIAtrG?usp=sharing}{official implementation}.

\section{Previous Work}
Shape filtering has a long and rich history. The  pivotal work of \cite{taubin1995signal} proposed to utilize the shape-induced Laplacian eigenfunctions as a basis for shape filtering, in an analogue manner to classical signal processing techniques. A transform is computed by projecting the shape onto the basis, where filtering is obtained by weighted reconstruction via this basis. Many variations of this method were utilized for different tasks (e.g. \cite{sorkine2004laplacian}). Over time, the Laplace-Beltrami became the standard Laplacian of choice, for spectral applications, and in general \cite{wetzler2013laplace} 
\footnote{Such an adaptation of \cite{taubin1995signal} can be found e.g. in \cite{vallet2008spectral}, which proposed a computationally efficient shape filtering,  and demonstrated some core filtering capabilities: Shape exaggeration, detail enhancement, shape smoothing and regularization.}. For non-rigid shape processing, spectral representations of a scale-invariant version of the Laplace-Beltrami was introduced in \cite{aflalo2013scale}, and later combined with deep learning for shape correspondence \cite{bracha2020shape}, providing a non-rigid spectral analogue of  the popular spatial-domain approach of \cite{litany2017deep}.

 While Total Variation (TV) is traditionally used for image processing (see \cite{A2TV_Properties_chambolle2010introduction,BurgerOsher2013_TV_Zoo} for theory and applications), it was used in computer graphics as well,  for surface denoising, reconstruction, segmentation, and super-resolution \cite{zhong2018mesh, liu2017dirac, zhang2015variational, dinesh20193d, zhang2020total, kerautret2020geometric, dinesh2020super}. 
 Only recently, in 2020, a TV-based shape filtering approach was proposed \cite{fumero2020nonlinear}, advocating the application of spectral TV to the normals of the shape.

 TV and shape smoothing flows have a substantial background, spanning a considerable duration. The authors of \cite{kimmel1998image} established a close relation between TV image smoothing and Laplace-Beltrami shape smoothing. These techniques where based on discrete diffusion processes. Among a plethora of Laplacian-based diffusion processes we note two fundamental flows: the Mean Curvature Flow (MCF), and the Willmore flow, which gave rise to a new problem of singularities \cite{huisken1990asymptotic, blatt2009singular, desbrun1999implicit}. More stable variations of them where proposed throughout the years, for instance  \cite{crane2013robust} and \cite{kazhdan2012can}.  In \cite{elmoataz2008nonlocal} a generalization of the Laplacian-based approaches was proposed. Using graph-oriented operators they introduced $p$-Laplace flows for mesh fairing, including the 1-Laplace flow, which is a TV-flow, the gradient flow minimizing the TV energy. 

Various methods have been proposed to minimize numerically TV and a $L^2$ square fidelity term ($J_{TV}(u)+\lambda\|f-u\|_2^2$, often referred to as the ROF model \cite{rof92}), 
see for instance 
\cite{chambolle2004algorithm, chambolle2011first}. 
For implementing a TV-flow, one should note that the term $-\textrm{div}\large( \frac{\nabla u}{|\nabla u|}\large)$ is valid for the gradient estimation of the energy only for non-vanishing gradients of $u$.
Two common approaches are adopted to overcome this.
One is to use a regularized TV model
$J_{TV-\varepsilon}:= \int_\Omega \sqrt{|\nabla u(x)|^2+\varepsilon^2}dx.$ In this case, one does not obtain singularities at zero gradients and a straightforward explicit method can be used. Drawbacks of this approach are that an additional $\varepsilon$ parameter is introduced and that the flow is smoother (somewhat less edge preserving). Moreover, the evolution time step is highly restrictive, proportional to $\varepsilon$. Alternatively, one can approximate the gradient descent process as a series of proximal TV minimizations and use non-smooth solvers to obtain fully-edge preserving solutions, e.g. by \cite{chambolle2004algorithm} or \cite{chambolle2011first}. This yields more faithful results.
These methods can be regarded as semi-implicit methods, which are unconditionally stable (with respect to the time step parameter, as opposed to explicit methods).

In our work we adapt two such processes for  geometry processing: The first is the shape flow introduced in \cite{kazhdan2012can}, which proposed a  gradient descent semi-implicit flow  (see \cite{parikh2014proximal} for semi-implicit approaches).
It is adapted by changing their proposed gradient direction to conform with our setting.
The second is the iterative re-weighted L1 minimization scheme, introduced in \cite{bronstein2016consistent}, which is adapted by using a vectorial version of it, and again, selecting our operator of choice to replace their gradient direction. The shape deformation problem is classically solved as a constrained energy-minimization problem, for instance \cite{botsch2007linear, sorkine2007rigid}, which we also explore in our work.

Another approach which relates to our method is the family of shape processing techniques which process the displacement fields over a smooth version of the surface. This is mainly used for shape smoothing,  exaggeration and also for detail transfer, for instance \cite{cignoni2005simple, sorkine2009interactive, digne2012similarity}. Recently, \cite{yifan2021geometry} leveraged two separate neural networks for the over-smoothed shape and for the displacement.

Our most significant contribution is to the theory of spectral TV. Let us recap significant landmarks of this domain for the Euclidean setting.
 Spectral TV was introduced in \cite{gilboa2013spectral,Gilboa_spectv_SIAM_2014}, facilitating nonlinear edge-preserving image filtering. Essentially, the idea is to decompose a signal into nonlinear spectral elements related to eigenfunctions of the total-variation subdifferential. These nonlinear eigenfunctions raise important theoretical aspect - as their  properties enable an understanding of the behaviour of both TV regularization, and Spectral Total Variation processing.
The method is based on evolving gradient descent with respect to the TV functional, also known as the TV-flow \cite{tvFlowAndrea2001}. The spectral elements decay linearly in this flow.  Different decay rates correspond to different scales, where in the case of a single eigenfunction the rate is exactly the eigenvalue. Theoretical underpinning was performed for the spatially discrete case in \cite{burger2016spectral}, which also extended the concepts of spectral TV to decompositions based on general convex absolutely one-homogeneous functionals. Decompositions based on minimizations with the Euclidean norm, as well as with inverse-scale-space flows \cite{iss} were also proposed. The space-continuous setting was later analyzed in \cite{bungert2019nonlinear}. For the one-dimensional TV setting, it was shown that the spectral elements are orthogonal to each other. Various applications were suggested for image enhancement, manipulation and fusion \cite{BenningFusion2017,hait2019spectral}. A common thread related to gradient flows of one-homogeneous functionals is that they are based on zero-homogeneous operators. Note that other homogeneities are also possible, see for instance \cite{bungert2020asymptotic,cohen2020introducing}.

While in non-Euclidean domains the theory of TV eigenfunctions and spectral TV is not as developed as in the Euclidean case, other aspects of TV on non-Euclidean domains are highly researched.  A TV framework on manifolds was defined  in \cite{ben2007well}. This research was in the context of nonlinear hyperbolic conservation laws on Riemannian manifolds. The authors prove bounds and stability of the minimizing flow. Another example is anisotropic TV, which is typically applied for image processing (see \cite{grasmair2010anisotropic}), and can be analyzed as a functional on a non-Euclidean domain - as shown in \cite{biton2022adaptive}, which conducted experiments with spectral TV as well. In \cite{fumero2020nonlinear} experiments with geodesically convex sets\footnote{Geodesically convex sets are subsets of a manifold in which any two points have the geodesics between them contained in the set. Sometimes, it is further assumed that there is a unique geodesic between two such points.} were conducted. In their experiments, geodesically convex sets exhibited a linear decay throughout the minimizing flow. Theoretical explanations of this phenomenon were left as opened research questions. In the following we lay theoretical foundations for these  experimental results, validated by our proposed spectral nonlinear and non-Euclidean framework.





\section{Preliminaries} \label{sec: parametric surf} 
In this work, the processed shape is assumed to be a 2D manifold $M \subset \R^3$.  We assume $M$ is smooth, with a smooth parameterization
\footnote{$M$ locally behaves like $\R^2$. See for instance smooth manifold as "coordinate system" defined in  Thm. 5-2 of \cite{spivak2018calculus}. See also local diffeomorphism for instance in \cite{lee2013smooth}.}
$S(\omega_1, \omega_2) = (x(\omega_1, \omega_2), y(\omega_1, \omega_2), z(\omega_1, \omega_2))$, $\omega_1,\omega_2\in \Omega$,  i.e.  
\begin{equation}\label{eq: patmeteric manifold}
S:\Omega\subset\R^2\rightarrow M.
\end{equation}
Namely, $S$ is differentiable and invertible. In the following  we outline important well-known properties of this setting, which we use in our work. A celebrated resource for these properties  is  \cite{do2016differential}.

Let $f:M\rightarrow \R$, and $u = f\circ S(\omega_1, \omega_2)$ i.e. $u:\Omega\rightarrow\R$. Let $T_qM$ be the plane tangent to $M$ at point $S(\omega_1, \omega_2)= q \in M$. It can be shown that $\frac{\partial S}{\partial \omega_1}, \frac{\partial S}{\partial \omega_2} \in \R^3$, denoted $S_{\omega_1}, S_{\omega_2}$,  span $T_qM$ at point $q$, assuming they are linearly independent.   A field $F$ on $M$ assigns each $q$ with a vector in $T_q$, i.e. $F:M\rightarrow\cup(T_q)_{q \in M}$. 
Let $[t_1, t_2] \subset \mathbb{R}$ a connected interval. We denote a differentiable parametric curve $(\omega_1(t), \omega_2(t)), t \in [t_1, t_2]$ by $\tilde{\gamma}(t) : [t_1, t_2] \rightarrow \Omega$, and its mapped curve by $\gamma = S(\tilde{\gamma}(t)) \subset M$, i.e. $\gamma$ is a differentiable parametric curve which maps $\gamma:[t_1, t_2]\rightarrow M$. We write $\frac{d\gamma}{dt}$ as $\gamma_t$, often interpreted as the velocity at time $t$. Using the chain rule, $\gamma_t$ can be calculated as $\gamma_t = S_{\omega_1}\frac{\partial \omega_1}{\partial t} + S_{\omega_2}\frac{\partial \omega_2}{\partial t}= J\tilde{\gamma}_t$ where $J(\omega_1, \omega_2)$ is the Jacobian matrix with columns $(S_{\omega_1}, S_{\omega_2})$. For a vector $\tilde{a}$ originating from a point $\omega_1, \omega_2$,  similarly to the velocity vectors $\tilde{\gamma}_t$, $\tilde{a}$ is mapped to $M$ as
\begin{equation} \label{eq: Jacobian mapping}
a(q) = J(\omega_1, \omega_2)\tilde{a}, 
\end{equation}
where $q = S(\omega_1, \omega_2)$. Note that $a(q) \in T_q$. The induced inner product on $\Omega$  is $\langle\tilde{a}, \tilde{b}\rangle_g = (J\tilde{a})^T(J\tilde{b})=\tilde{a}^T(J^TJ)\tilde{b}$ . Hence,  when $M$ is parameterized using $(\Omega, S)$, it is equipped with the metric $g=J^TJ$:
\begin{equation} \label{eq: metric}
g(\omega_1,\omega_2) = \begin{pmatrix}
S_{\omega_1}^TS_{\omega_1} & S_{\omega_1}^TS_{\omega_2}\\
S_{\omega_2}^TS_{\omega_1} & S_{\omega_2}^TS_{\omega_2}
\end{pmatrix}.
\end{equation}
$S_{\omega_1}, S_{\omega_2}$ are assumed to be linearly independent, thus $g$ is positive definite (and invertible).  The vector magnitude  can now be calculated using $\tilde{a}, g$ as $||a||_2 = ||\tilde{a}||_g:=||J\tilde{a}||_2=\sqrt{\langle\tilde{a},\tilde{a}\rangle_g}$ . As a consequence, the length of $\gamma$  is $L(\gamma) = \int_c^d||\gamma_t||_2dt  = \int_c^d ||\tilde{\gamma}_t||_g dt$. 
The normal to $\gamma(t)$, denoted $n(t)$, is defined as the normalized intersection of two planes: perpendicular to $\gamma_t(t)$, and tangent to $M$ on $q=\gamma(t)$.

Let $C = S(\tilde{C})\subset M$ with $\tilde{C} \subset \Omega$. The perimeter of $C$ is the length of its boundary $\partial C$. Let $\gamma^C$ be a parameterized curve  mapping $t \in [c, d)$ to the whole of $\partial C$, then
\begin{equation} \label{eq: boundary length}
per(C) = \int_{\partial C} dl = \int_c^d ||\tilde{\gamma}^C_t||_g dt=L(\gamma^C),
\end{equation}
where the first equality expresses integration on the boundary regardless of the choice of parameterization, and $\gamma^C = S(\tilde{\gamma}^C)$.
The area of $C$ is
\begin{equation} \label{eq: submanifold area}
|C|= \int_C\,dM=\int_{\tilde{C}} \,da,
\end{equation}
where $da = \sqrt{|g|}d\omega_1d\omega_2$ is often referred to as an \emph{area element}, and the first equality expresses integration on the manifold, regardless of the choice of the parameterization of $S$ with respect to $\Omega$.

Let two functions $f_1, f_2$ be defined on the manifold $M$, with their corresponding representations in $\Omega$, $u_1, u_2$, respectively, defined in a similar manner to $u, f$ above. Throughout this work we consider functions to lie in a Hilbert space, commonly denoted as $\Gamma_{L_2}(M)$ (see for instance \cite{guneysu2015functions}), in which the inner product between two such functions is
\begin{equation} \label{function prod}
\int_Mf_1f_2dM=\int_{\Omega}u_1u_2da.
\end{equation}






Followingly, a gradient operator $\nabla_g$ which satisfies
$\langle\nabla_g u(\omega_1, \omega_2), \tilde{w}
\rangle_g = \lim\limits_{h \to 0}\frac{f(q+hw) - f(q)}{h}$, 
$\forall w = J\tilde{w}\in T_q$, with $||w||_2=||\tilde{w}||_g=1$ is obtained. The divergence operator $\nabla_g \cdot$ is then given by the adjoint of $\nabla_g$ , i.e. it satisfies,
\begin{equation} \label{eq: cont conjugate}
\int_\Omega \nabla_g\cdot \tilde{F} u  \,da = \int_\Omega \langle \tilde{F}, \nabla_gu \rangle_g \,da.
\end{equation}
Finally, we can define the divergence theorem on manifolds: 
 Let $C \subset M$ be a compact set  with a smooth boundary $\partial C$ and a boundary normal $n^C$. Then, $C$ may be treated as a manifold in its own right. Let $F$ be a vector field on $C$ ($F$ is compactly supported), then
 \begin{equation} \label{eq: div th}
 \int_C\nabla \cdot F \, dM = \int_{\partial C} F^Tn^C \, dl.
 \end{equation}
 We can express \eqref{eq: div th} also in the parameterization domain\footnote{This is usually stated in more general setting, for instance see proposition 4.9 in \cite{gallot1990riemannian}. For usage example see proof of Thm. 1 in \cite{fumero2020nonlinear}. 
 }:
\begin{equation}\label{eq: par div th}
\int_{\tilde{C}}\nabla_g \cdot \tilde{F} \, da = \int_{t_1}^{t_2}\langle \tilde{F}, \tilde{n}^C\rangle_g ||\tilde{\gamma}^C_t||_g\, dt,
\end{equation}
where $\tilde{\gamma}^C_t$ is defined as in Eq. \eqref{eq: boundary length}, as a smooth parametric curve along $\partial \tilde{C}$, i.e.  $\gamma^C = S(\tilde{\gamma}^C)$  is along $\partial C$.

The $\mathcal{P}$-Laplace-Beltrami is defined as
\begin{equation} \label{eq: g p Laplacian}
\Delta_{g,\mathcal{P}} (u):=\nabla_g \cdot (|\nabla_g u|^{\mathcal{P}-2} \nabla_g u).
\end{equation}
For $\mathcal{P}=2$, $\Delta_{g,\mathcal{P}} (u)$ coincides with the Laplace-Beltrami operator,  yielding a diffusion process on surfaces by,
\begin{equation} \label{eq: lin diff}
\frac{\partial u}{\partial t}=\Delta_{g,2} (u),\,\,u(0)=f.
\end{equation}
Other special cases we will discuss are $\mathcal{P}=3$, and $\mathcal{P}=1$.



\section{Parametric setting formulation}
\label{sec: parametric settings}

 In this section we present in detail the fundamentals of total variation on parametric surfaces. These serve our theoretical investigations and numerical demonstrations, presented later.
 
 We analyze functions on surfaces in the setting of Sec. \ref{sec: parametric surf}. 
 Let $M$ be a smooth manifold given as a differentiable and invertible parametric surface $S(\Omega)$, $\Omega \subset \R^2$ with domain variables $\omega_1,\,\omega_2 \in \Omega$. The Jacobian $J$ maps vectors from $\Omega$ to $M$, inducing the metric $g$, see Eqs. \eqref{eq: patmeteric manifold}, \eqref{eq: metric}.
 We assume to have a function $u:\Omega\rightarrow\R$, and a field $z:\Omega\rightarrow\R^2$. We assume that $z$  is differentiable with compact support. Note that $u$ is not necessarily continuous. We examine the following non-Euclidean TV functional,
\begin{equation} \label{eq: NETV} 
NETV(u) := \sup_{z}\int_{\Omega} u \nabla_g \cdot z \,da	\,\, s.t. \,\,||z||_g \leq 1 \,\forall \omega_1, \omega_2 \in \Omega.
\end{equation} 
This is a special case of TV on Riemannian manifolds, reduced to our parametric setting of two-dimensional surfaces (compatible with mesh processing). For the general case, see  \cite{miranda2003functions,ben2007well}.

$NETV(u)$ has two notable special cases: The Euclidean metric - which is obtained for a planar $M$, and the non-Euclidean integral formulation $\int_\Omega ||\nabla_gu||_gda$ - which is obtained for a differentiable function $u$. In this paper our focus is on non-Euclidean metrics, and non-continuous functions (unless otherwise noted).


Usually, a Neumann boundary condition is assumed, achieving invariance to shift by a scalar,
\begin{equation}\label{eq: addconst}
NETV( u + \alpha) =  NETV(u),\,\,\forall \alpha \in \R.
\end{equation}
Note that if $M$ is closed, then the boundary is empty - hence the Neumann boundary condition holds trivially.
Another commonly used property is that $NETV$ is absolutely one-homogeneous, i.e.
\begin{equation} \label{eq: NETV 1hom}
NETV(\alpha u) = |\alpha| NETV(u),\,\,\forall \alpha \in \R.
\end{equation}
If a field $z$ admits the supremum of $NETV(u)$, Eq. \eqref{eq: NETV}, then the field $sign(\alpha) z$ admits $NETV(\alpha u)$.
For an in-depth derivation and generalization of basic properties such as the above, see \cite{miranda2003functions}, Section 3.

Being absolutely one-homogeneous, we know that the  subdifferential of $NETV$, as stated for instance in \cite{burger2016spectral},  is the following set\footnote{The deduced subdifferential needs a product space. We consider the definition of Eq. \eqref{function prod}.}:

\begin{equation} \label{eq: 1hom differential}
\begin{gathered}
\partial NETV(u) =\\
\{p : NETV(v)\geq \int_\Omega p\,v\,  da \,\forall v, \,  NETV(u) = \int_\Omega p\,u\, da \},
\end{gathered}
\end{equation}
where $v$ is assumed to have the same Neumann conditions as $u$. By Eqs. \eqref{eq: 1hom differential}, \eqref{eq: NETV} we have

\begin{numcases}{\nabla_g \cdot z \in \partial NETV(u) \Rightarrow}
      NETV(u) = \int_\Omega u\nabla_g \cdot z\, da.\label{eq: subdiff to NETV} \\[1\baselineskip]
   NETV(v) \geq \int_\Omega v\nabla_g \cdot z\, da.\label{eq: other subdiff NETV},
\end{numcases}
and the converse
\begin{equation} \label{eq: converse subdiff}
\int_\Omega u\nabla_g \cdot z\, da = NETV(u) \Rightarrow \nabla_g \cdot z \in \partial NETV(u).
\end{equation}


The $NETV$ minimizing flow, performed on a function $f:\Omega \rightarrow \mathbb{R}$ , is defined as
\begin{equation} \label{eq: scalar flow}
u_t = -p(t), \quad p(t) \in \partial NETV(u(t) , \quad u(0) = f, \,\, t\geq 0, 
\end{equation} 
where $u_t=\frac{\partial u} {\partial t}$.
Recently, \cite{bungert2020asymptotic} proved that eigenfunctions are exposed upon decay by such flows as asymptotic solutions (just before extinction). This enables us to reveal eigenfunctions numerically, by simulating Eq. \eqref{eq: scalar flow}, as done in Figs.  \ref{sinc_set}, \ref{fig: spheres flow}, \ref{Toroid}.



\begin{figure} [htbp]
\includegraphics[width=0.5\textwidth]{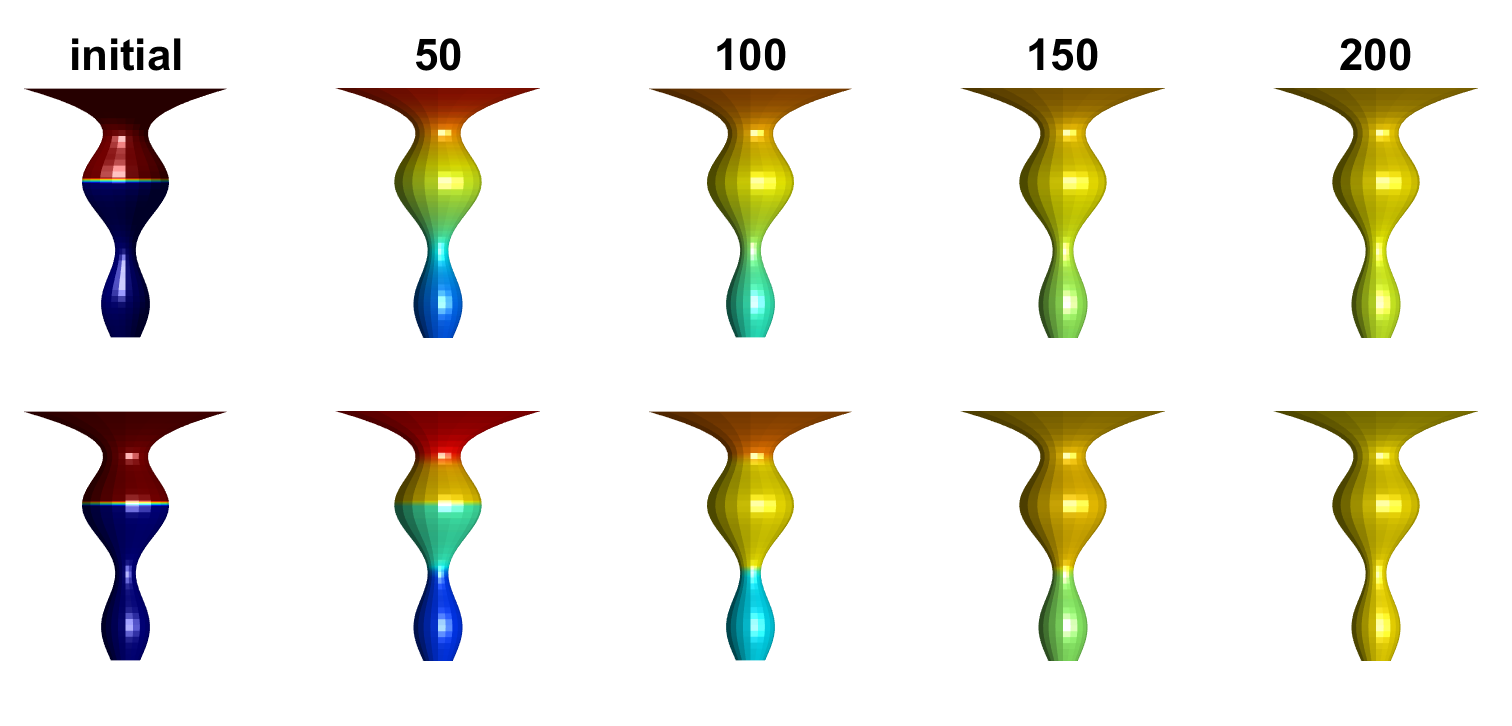}
\caption{$M$ is a surface-revolution of a part of a translated sinc curve, inducing a non-Euclidean metric. $f$ is initialized as an indicator function of a "sleeve set". Upper row: linear diffusion. Bottom row: $NETV$ minimizing flow. Unlike linear diffusion, the function remains piecewise constant throughout the flow -  dividing the surface $M$ to subsets. Iter 50: New boundaries, of small perimeters emerge; iter 100: Initial boundaries subside; iter 150: The sets merge so that only the minimal perimeter remains - which is numerically shown to be an eigenfunction (by Thm. 2.3 in \cite{bungert2020asymptotic}).  When an eigenfunction indicates a set (as is the case here), we call it an eigenset. In the Euclidean case - the eigensets' shape and behaviour have well studied properties \cite{andreu2001minimizing, bellettini2002total}. In contrast, eigensets of the non-Euclidean case are less understood . We introduce novel theoretical properties of the eigensets in non-Euclidean domains.}
\label{sinc_set}.
\end{figure}

\subsection{Indicator Functions}
\label{sec: indicator theory}
Indicator functions are non-continuous functions which have an important role in total variation analysis. In our non-Euclidean setting we analyze the indicator function of a subset $C \subset M$,
\begin{equation}
\chi^C(q) = \begin{cases}
      1 \, &  q \subset C \\
      0 & q \subset {M \backslash C},
    \end{cases}
\end{equation}
for which we construct 
\begin{equation}\label{eq: chi tildee}
\tilde{\chi}^C = \chi^C \circ S,
\end{equation}
i.e. $\tilde{\chi}^C$ is the indicator of $\tilde{C} \subset \Omega$ where %
\begin{equation} \label{eq: C tilde}
\tilde{C} = \{\,S^{-1}(q)\,|\,q\in C\,\}. 
\end{equation}
For convenience, let us define the $NETV$ of a set as the $NETV$ of its indicator function $NETV(C) := NETV(\tilde{\chi}^C)$, i.e.
\begin{equation}\label{eq: NETV of C}
NETV(C) := \sup_{z}\int_{\tilde{C}} \nabla_g \cdot z \,da	\,\, s.t. \,\,||z||_g \leq 1 \,\forall \omega_1, \omega_2 \in \Omega.
\end{equation}
In the following we assume the setting in which the divergence Thm. on manifolds \eqref{eq: par div th} holds, i.e.: $C$ is a connected set with a smooth boundary $\partial C$. The boundary has normals  $n^C$ (perpendicular to $\partial C$ and tangent to $M$) with a corresponding $\tilde{n}^C$ s.t. $n^C=J\tilde{n}^C$, where $J$ is the Jacobian from Sec. \ref{sec: parametric surf}. In addition, we assume a parameterized curve $\gamma^C:t \in [t_1, t_2)\rightarrow\partial C$ for which we construct $\tilde{\gamma}^C = S^{-1} \circ\gamma^C$, a parameterized curve along $\partial \tilde{C}$\footnote{To understand how $S$ maps boundaries from one domain to another refer to \cite{lee2013smooth}, Thm. 2.18.}.

Let a field $z$ that is normal to the boundary of $C$ on (almost all) boundary points, and of norm less than or equal to one everywhere on $M$, i.e.
 \begin{equation}
 z=\tilde{n}^C \,\,for \,a.e.\, \omega_1, \omega_2 \in \partial \tilde{C},\,\,\,
 ||z||_g \leq 1 \,\forall \omega_1, \omega_2 \in \Omega,
 \end{equation}
where $a.e.$ stands for "almost every". Then $z$ admits the supremum of Eq. \eqref{eq: NETV of C} for $NETV(C)$. Such a z exists if the boundary of $C$ is differentiable almost everywhere. This condition is satisfied by the assumption of a smooth boundary.

Last property but not least, we have that
\begin{equation}
NETV(C)=per(C).    
\end{equation}
For a proof and further details see Sec. \ref{app:indicator_proofs} in the Appendix.


\section{Theoretical Findings\label{sec: thry}}
Here we generalize the theory from the Euclidean setting to closed non-Euclidean manifolds\footnote{Note that closed manifolds induce a different boundary condition than the non-closed case. The treatment of both cases is similar - for brevity we show the closed case only.} of the parametric surface setting (Sec. \ref{sec: parametric settings}). Ultimately a new generalization of convexity is derived, as we prove properties of the eigenfunctions of the sub-gradient. The theory is demonstrated numerically as well.

\subsection{Derivations}

 \begin{definition}\label{def: eigenfunction}
 $u$ is an eigenfunction of $NETV$  if  $\exists \xi: \Omega\rightarrow \R^2$ s.t.
 \begin{alphalist}
 \item  $\lambda u \in \partial  NETV(u)$
\item $\nabla_g \cdot \xi = \lambda u$
 \end{alphalist}
 for some $\lambda \in \R$.
 \end{definition}
 
 \begin{definition}
 $C$ is an eigenset if $\exists \beta>0$ s.t.
 \begin{equation} \label{eq: psi def}
 \psi:=\tilde{\chi}^C - \beta \tilde{\chi}^{{M\backslash C}}
 \end{equation}
is an eigenfunction of $NETV$, where  $M\backslash C$ is the complement of $C$, and $\tilde{\chi}^{{M \backslash C}}$ is defined as in Eq. \eqref{eq: chi tildee}.
 \end{definition}
 The existence of eigensets should be verified for each $M$ individually. We provide an example of such verification for a torus in the appendix. We also find eigensets for various manifolds in our numerical experiments.

\begin{figure}[htbp]
\includegraphics[width=0.5\textwidth]{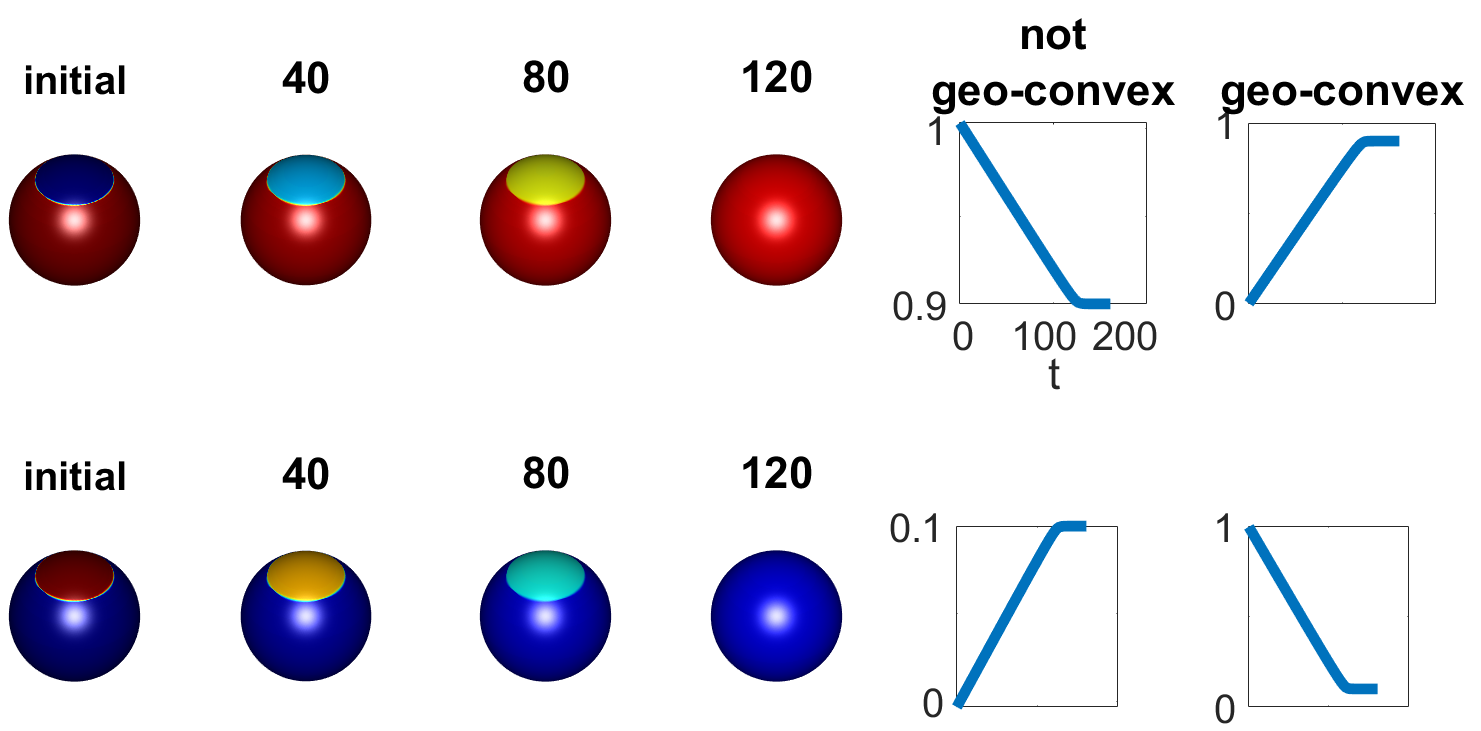} 
\caption{$NETV$ minimizing flow on a spherical manifold $M$. The initial function $f$ assumes two options: An indicator function of a geodesically convex spherical cap $C$ (lower row), or of $M \backslash C$, which is not geodesically-convex (upper row). From right to left: Values of $q_1 \in C$ throughout the flow, Values of $q_2 \in M \backslash C$ throughout the flow, and the flow portrayed as color on $M$. $C$ and $ M \backslash C$ remain intact throughout the flow, and their values change linearly with time $t$, until they decay completely. Such a behaviour implies that both $f$s are eigenfunctions, i.e. $C$ and $M \backslash C$ are eigensets. Noteably $M \backslash C$ is not geodesically convex. However, in the Euclidean case, eigensets must be convex sets. This raises the question: Is there an alternative notion of convexity on manifolds, other than geodesical convexity, which characterizes eigensets similarly to the Euclidean case? In the following we define and prove such a notion.}
\label{fig: spheres flow}
\end{figure}


\begin{claim}  
If $C$ is an eigenset with $\psi$ as in \eqref{eq: psi def} and $\lambda \neq 0$, then

\begin{equation} \label{eq: beta}
\beta=\frac{|C|}{|{M \backslash C}|},
\end{equation}
where $|C|, |{M \backslash C}|$ are areas of  $C, {M \backslash C}$.

\end{claim}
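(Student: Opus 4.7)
The plan is to exploit the shift-invariance property of $NETV$ (Eq. \eqref{eq: addconst}) to show that any element of $\partial NETV$ must have zero mean on $M$, and then evaluate $\int_\Omega \psi\,da$ directly from the definition of $\psi$.

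First I would establish the mean-zero condition on subgradients. Take any $p\in\partial NETV(u)$. By Eq. \eqref{eq: 1hom differential}, for every admissible test function $v$ we have $NETV(v)\geq \int_\Omega p\,v\,da$. Choosing $v = u \pm \alpha$ for an arbitrary constant $\alpha\in\R$, the Neumann-type shift invariance \eqref{eq: addconst} gives $NETV(u\pm\alpha)=NETV(u)=\int_\Omega p\,u\,da$, which forces $\pm\alpha\int_\Omega p\,da \leq 0$. Since $\alpha$ has arbitrary sign, we conclude $\int_\Omega p\,da = 0$.

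Next, apply this to the eigenfunction relation. Since $C$ is an eigenset, $\lambda\psi\in\partial NETV(\psi)$, so by the previous step $\lambda \int_\Omega \psi\,da = 0$. The assumption $\lambda\neq 0$ yields $\int_\Omega \psi\,da = 0$. Substituting the definition \eqref{eq: psi def} and splitting the integral via the area element from Eq. \eqref{eq: submanifold area}:
\begin{equation*}
0=\int_\Omega \psi\,da = \int_{\tilde{C}} 1\,da \;-\; \beta \int_{\widetilde{M\backslash C}} 1\,da = |C| - \beta\,|M\backslash C|.
\end{equation*}
Solving for $\beta$ gives the claimed identity $\beta = |C|/|M\backslash C|$.

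The argument is short and the only subtle point is the first step: justifying mean-zero of subgradients from shift invariance. This is standard for absolutely one-homogeneous functionals on Euclidean domains, but one should verify here that the shift $u\mapsto u+\alpha$ preserves the admissibility class used to define $\partial NETV$ in \eqref{eq: 1hom differential} (i.e., the Neumann boundary condition, trivial for closed $M$). Once that is in place, the remainder of the proof is a direct computation.
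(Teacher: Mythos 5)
Your proof is correct, but it takes a different route from the paper. The paper never invokes shift invariance; instead it uses part (b) of the eigenfunction definition — the existence of a field $\xi$ with $\nabla_g\cdot\xi=\lambda\psi$ — together with the manifold divergence theorem on a closed $M$ (empty boundary), which gives $\int_\Omega\nabla_g\cdot\xi\,da=0$ and hence $\lambda\int_\Omega\psi\,da=0$ directly. You instead prove an abstract convex-analysis lemma: testing the subgradient inequality of Eq.~\eqref{eq: 1hom differential} with $v=u\pm\alpha$ and using the invariance \eqref{eq: addconst} forces every $p\in\partial NETV(u)$ to have zero mean, and then you apply this to $p=\lambda\psi$ using only part (a) of the definition. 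Both arguments finish with the same computation $\int_\Omega\psi\,da=|C|-\beta|M\backslash C|$ via Eq.~\eqref{eq: submanifold area}. Your version buys generality — it needs neither the field representation of the subgradient element nor the divergence theorem, only shift invariance (which, as you correctly note, rests on the Neumann condition, trivially satisfied for closed $M$, so the admissibility of the constant-shifted test function is fine) — whereas the paper's version keeps the field $\xi$ in play, which is the object reused in Claim~\ref{cl: eigenset area tv} and Theorem~\ref{cl: locally minimal NETV}, so its geometric route sets up the later proofs. Either argument is a valid proof of Eq.~\eqref{eq: beta}.
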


\begin{proof}
By manifold divergence theorem (see for instance proposition 4.9 in \cite{gallot1990riemannian}) we have for closed manifolds that $\int_\Omega \nabla_g \cdot z da=0$, since their boundary is an empty set\footnote{See for instance chapter 4.A.2 in\cite{gallot2004differential}}. $\psi$ is an eigenfunction, i.e. $\exists \xi$ as in Def. \ref{def: eigenfunction}, namely $\nabla_g \cdot \xi = \lambda \psi$. Plugging this we have $\int_\Omega \nabla_g \cdot \xi \,da=\int_\Omega \lambda \psi \,da=0$. Plugging definition \eqref{eq: psi def} we get
\begin{equation}\label{eq: lambda integral}
\lambda\int_\Omega\tilde{\chi}^C-\beta\tilde{\chi}^{{M \backslash C}}da=0.
\end{equation}
Thus, since $\lambda \neq 0$, by Eq. \eqref{eq: submanifold area}
\begin{equation}
|C|-\beta |{M \backslash C}|=0.
\end{equation}
\end{proof}
We note that in case $C=M$, we have by Eq. \eqref{eq: lambda integral} that $\lambda|M|=0$, i.e. $\lambda=0$. This settles well with the following claim.

\begin{claim}\label{cl: eigenset area tv}
If $C$ is an eigenset, then
\begin{equation} \label{eq: eigenvalue NETV area}
\lambda|C| = NETV(C).
\end{equation}
\end{claim}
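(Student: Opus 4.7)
The plan is to exploit three facts about $NETV$: the subdifferential identity \eqref{eq: subdiff to NETV}, absolute one-homogeneity \eqref{eq: NETV 1hom}, and shift invariance \eqref{eq: addconst} (which holds automatically on a closed $M$). The $\beta$ from the previous claim will carry most of the bookkeeping, and the key observation is that on a closed manifold the two indicators $\tilde\chi^C$ and $\tilde\chi^{M\setminus C}$ sum to the constant $1$, so $\psi$ is, up to an additive constant, a scalar multiple of $\tilde\chi^C$.

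First, I would use that $\psi$ is an eigenfunction: there exists $\xi$ with $\nabla_g\cdot \xi = \lambda\psi$ and $\nabla_g\cdot \xi \in \partial NETV(\psi)$. Plugging into \eqref{eq: subdiff to NETV} and using $(\tilde\chi^C)^2=\tilde\chi^C$, $(\tilde\chi^{M\setminus C})^2=\tilde\chi^{M\setminus C}$, $\tilde\chi^C\tilde\chi^{M\setminus C}=0$, together with \eqref{eq: submanifold area}, gives
\begin{equation*}
NETV(\psi) \;=\; \int_\Omega \psi\,\nabla_g\cdot\xi\,da \;=\; \lambda\int_\Omega \psi^2\,da \;=\; \lambda\bigl(|C|+\beta^2|M\setminus C|\bigr).
\end{equation*}

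Next, since $\tilde\chi^C+\tilde\chi^{M\setminus C}\equiv 1$ on $M$, I rewrite $\psi=(1+\beta)\tilde\chi^C-\beta$. Applying shift invariance \eqref{eq: addconst} and one-homogeneity \eqref{eq: NETV 1hom}, and recalling $NETV(C):=NETV(\tilde\chi^C)$,
\begin{equation*}
NETV(\psi) \;=\; NETV\bigl((1+\beta)\tilde\chi^C\bigr) \;=\; (1+\beta)\,NETV(C).
\end{equation*}
Equating the two expressions for $NETV(\psi)$ and substituting $\beta=|C|/|M\setminus C|$ from the previous claim gives $|C|+\beta^2|M\setminus C|=|C|(1+\beta)$, so after dividing by $(1+\beta)>0$ the desired identity $NETV(C)=\lambda|C|$ drops out.

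The only subtle point is the edge case $\lambda=0$, since the formula for $\beta$ was proved only for $\lambda\neq 0$. I would treat it separately: if $\lambda=0$, then $0\in\partial NETV(\psi)$, which forces $\psi$ to minimize $NETV$; as $NETV\geq 0$ and is attained at constants, this yields $NETV(\psi)=0$, and consequently $NETV(C)=0=\lambda|C|$ (covering also the trivial case $C=M$). This is really the only place the argument requires care; the rest is routine manipulation of the two algebraic identities above.
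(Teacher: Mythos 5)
Your argument is exactly the paper's proof: both evaluate $\int_\Omega \psi\,\nabla_g\cdot\xi\,da$ two ways (once via $\nabla_g\cdot\xi=\lambda\psi$ and once via the subdifferential identity together with $\psi=(1+\beta)\tilde\chi^C-\beta$, one-homogeneity and shift invariance), then cancel the factor $1+\beta$ after inserting $\beta=|C|/|M\setminus C|$. Your separate handling of $\lambda=0$ (where $0\in\partial NETV(\psi)$ forces $NETV(\psi)=0$) is a small, correct addition the paper leaves implicit, but it does not change the approach.
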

\begin{proof}

Let $\psi$ as in \eqref{eq: psi def}. Plugging the eigenfunction property $\nabla_g \cdot \xi = \lambda \psi$ to $\int_\Omega\psi\nabla_g \cdot \xi \,da$ yields
\begin{equation} \label{eq: on one}
\int_\Omega\psi\nabla_g \cdot \xi \,da = \lambda \int_\Omega \psi^2 \,da = \lambda (|C| + \beta^2|{M \backslash C}|) = \lambda |C|  (1+\beta),
\end{equation}
where the last equality uses Eq. \eqref{eq: beta} as follows: $ |C| + \beta^2|{M \backslash C}|$ $  =  |C|(1+\beta^2\frac{|M \backslash C|}{|C|}) $ $= |C|  (1+\beta)$. On the other hand, we use Def. \ref{def: eigenfunction} again, namely $\nabla_g \cdot \xi \in \partial NETV(\psi)$ - which we plug to Eq. \eqref{eq: subdiff to NETV} and get
\begin{equation} \label{eq: on other}
\int_\Omega\psi \nabla_g \cdot \xi \,da = NETV(\psi) = NETV((1+\beta)\tilde{\chi}^C -  \beta) = (1+\beta)NETV(C),
\end{equation}
where the last equality is by Eqs. \eqref{eq: NETV 1hom}, \eqref{eq: addconst}, and the equality before uses $\psi =\tilde{\chi}^C - \beta \tilde{\chi}^{{M\backslash C}}=(1+\beta)\tilde{\chi}^C -  \beta$. Thus we have by Eqs. \eqref{eq: on one} and \eqref{eq: on other} the relation
\begin{equation} 
\int_\Omega\psi \nabla_g \cdot \xi \,da =\lambda |C|  (1+\beta)= (1+\beta)NETV(C),
\end{equation}
i.e.
\begin{equation}
 \lambda |C| =  NETV(C).
\end{equation}
\end{proof}
\begin{corollary}  

Plugging Eq. \eqref{eq: indictr NETV} to Eq. \eqref{eq: eigenvalue NETV area} we have

\begin{equation}\label{eq: eigenvalue perimeter area}
 \lambda |C| =  NETV(C)=per(C).
\end{equation}

\end{corollary}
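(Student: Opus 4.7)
The plan is essentially a one-line substitution, so the main task is just to line up the two ingredients cleanly and verify that their hypotheses are compatible. The first ingredient is Claim \ref{cl: eigenset area tv}, which has just been proved and gives $\lambda|C| = NETV(C)$ under the assumption that $C$ is an eigenset. The second ingredient is the identity $NETV(C) = per(C)$ established in Sec.~\ref{sec: indicator theory} (the equation labelled \eqref{eq: indictr NETV}). That identity rests on exhibiting an admissible field $z$ with $\|z\|_g \le 1$ and $z = \tilde{n}^C$ almost everywhere on $\partial\tilde C$, and then invoking the parameterized manifold divergence theorem \eqref{eq: par div th} to rewrite $\int_{\tilde C} \nabla_g\!\cdot\! z\, da$ as $\int_{t_1}^{t_2} \langle \tilde n^C,\tilde n^C\rangle_g \|\tilde\gamma^C_t\|_g\, dt = \int_{t_1}^{t_2} \|\tilde\gamma^C_t\|_g\, dt$, which is $per(C)$ by \eqref{eq: boundary length}.

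With both ingredients in hand, the corollary follows by substituting $per(C)$ for $NETV(C)$ in \eqref{eq: eigenvalue NETV area}, yielding the chain $\lambda|C| = NETV(C) = per(C)$. There is no real obstacle: the only thing worth checking is hypothesis compatibility, namely that an eigenset $C$ automatically carries the smooth-boundary assumption required for the $NETV(C) = per(C)$ derivation in Sec.~\ref{sec: indicator theory}. This is guaranteed by the standing smoothness assumption on boundaries used throughout the indicator-function analysis, so both equalities apply to the same $C$ and the corollary is immediate.
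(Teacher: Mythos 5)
Your proposal is correct and matches the paper's own reasoning: the corollary is obtained exactly by substituting the identity $NETV(C)=per(C)$ from Eq.~\eqref{eq: indictr NETV} into Claim~\ref{cl: eigenset area tv}, and your remark on the smooth-boundary hypothesis simply makes explicit the standing assumption of Sec.~\ref{sec: indicator theory}. No gaps.
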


\begin{definition}
A set $C\subset M$ with a nonempty $\partial C$ is called a \emph{minimal perimeter set} if every set $D$ that contains $C$ has a larger perimeter, i.e. 
\begin{equation} \label{eq: minimal perimeter}
per(C) \leq per(D) \,\, \forall D \supset C.
\end{equation}
 \end{definition}
 
 \begin{definition}
A set $C \subset M$ with a nonempty $\partial C$ is called a \emph{locally minimal perimeter set} if 
 \begin{equation} \label{eq: generalized minimal perimeter}
per(C) \leq \frac{|{M \backslash C}|}{|{M \backslash D}|} per(D) \,\, \forall D \supset C.
\end{equation}
 \end{definition}
 This definition is a weaker than the previous definition, since $\frac{|{M \backslash C}|}{|{M \backslash D}|} > 1$.
 The locality of this definition can be explained as follows: The larger $\frac{|{M \backslash C}|}{|{M \backslash D}|} $ is, the further away some points in $D$ must be from the local neighbourhood of $C$. 
 Noteable cases:
 \begin{itemize}
 \item  Any minimal perimeter set is also a locally minimal perimeter set.
\item Closed surfaces have $per(M)=0$ thus they do not contain minimal perimeter sets, but they do contain {\bfseries locally} minimal perimeter sets.
\item In the Euclidean case, both these definitions coincide with the set being convex.
 \end{itemize}

 \begin{theorem} \label{cl: locally minimal NETV}  Any eigenset $C$, a subset of a closed $M$, satisfies
\begin{equation} 
NETV(C) \leq \frac{|{M \backslash C}|}{|{M \backslash D}|} NETV(D), \,\, \forall D \supset C.
\end{equation}
\end{theorem}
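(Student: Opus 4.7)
The plan is to use the defining subgradient inequality of $\psi$ as an eigenfunction (Eq.~\eqref{eq: other subdiff NETV}) applied to a carefully chosen test function built from $D$. Specifically, since $C$ is an eigenset there exist $\xi$ and $\lambda$ such that $\nabla_g\cdot\xi=\lambda\psi\in\partial NETV(\psi)$, with $\psi=\tilde\chi^C-\beta\tilde\chi^{M\backslash C}$ and $\beta=|C|/|M\backslash C|$. By Eq.~\eqref{eq: other subdiff NETV}, for every admissible $v$,
\begin{equation}
NETV(v)\;\geq\;\int_\Omega v\,\nabla_g\cdot\xi\,da\;=\;\lambda\int_\Omega v\,\psi\,da.
\end{equation}
The natural choice is simply $v=\tilde\chi^D$, which is admissible since $M$ is closed so the Neumann condition is vacuous.

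The left-hand side is $NETV(v)=NETV(D)=per(D)$, by Eq.~\eqref{eq: NETV of C} and the perimeter identity. For the right-hand side I would split the integral over $D=C\cup(D\backslash C)$, using $D\supset C$, and that $\psi=1$ on $C$ while $\psi=-\beta$ on $D\backslash C\subset M\backslash C$:
\begin{equation}
\int_\Omega \tilde\chi^D\psi\,da \;=\; |C|-\beta\,|D\backslash C|.
\end{equation}
Now I would substitute $|C|=\beta\,|M\backslash C|$ (from Eq.~\eqref{eq: beta}) and use the decomposition $|M\backslash C|=|M\backslash D|+|D\backslash C|$ (valid because $C\subset D$) to collapse the right-hand side to $\beta\,|M\backslash D|$.

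Combining these two computations yields $NETV(D)\geq \lambda\beta\,|M\backslash D|$. Then I would invoke Claim~\ref{cl: eigenset area tv}, namely $\lambda|C|=NETV(C)$, together with $\beta=|C|/|M\backslash C|$, to rewrite
\begin{equation}
\lambda\beta \;=\; \frac{NETV(C)}{|C|}\cdot\frac{|C|}{|M\backslash C|} \;=\; \frac{NETV(C)}{|M\backslash C|}.
\end{equation}
Substituting back and rearranging gives exactly $NETV(C)\leq\frac{|M\backslash C|}{|M\backslash D|}NETV(D)$, as desired.

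I do not anticipate a serious obstacle: the argument is essentially a clever one-line use of the subgradient inequality, and all subsequent manipulations are algebraic identities already established earlier in the section. The only subtle point to verify is that the indicator function $\tilde\chi^D$ is legitimate as a test element in Eq.~\eqref{eq: other subdiff NETV} (an $L^2$ function with the right boundary behavior), which is immediate on a closed manifold. If a non-closed analogue were needed, one would instead pick $v=\tilde\chi^D-\alpha\tilde\chi^{M\backslash D}$ with an auxiliary constant $\alpha$ and divide by $1+\alpha$ at the end; in the closed case both choices yield the same conclusion so the simpler one suffices.
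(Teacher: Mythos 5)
Your proposal is correct and follows essentially the same route as the paper: apply the subgradient inequality \eqref{eq: other subdiff NETV} with test function $\tilde\chi^D$, evaluate $\int \tilde\chi^D\,\nabla_g\cdot\xi\,da$ via the eigenfunction identity $\nabla_g\cdot\xi=\lambda\psi$, and then use Claim~\ref{cl: eigenset area tv} together with $\beta=|C|/|M\backslash C|$ to rearrange into the stated inequality. Your algebra merely collapses $|C|-\beta|D\backslash C|$ to $\beta|M\backslash D|$ before substituting $\lambda$, whereas the paper substitutes first; the content is identical.
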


\begin{proof}
By the eigenset Def. \ref{def: eigenfunction}, $\exists \psi$ as in \eqref{eq: psi def} with a field $\xi$ s.t. $\nabla_g \cdot \xi \in \partial NETV(\psi)$. Thus by Eq. \eqref{eq: other subdiff NETV} we have
\begin{equation}
\int_{\Omega} \tilde{\chi}^D\nabla_g \cdot \xi \,da \leq NETV(\tilde{\chi}^D),
\end{equation}
where $\tilde{\chi}^D$ is defined similarly to $\tilde{\chi}^C$. By definition of $\tilde{\chi}^D$ we have $\int_{\Omega} \tilde{\chi}^D (\cdot) \,da=\int_{\tilde{D}} (\cdot)\,da$. Using this and the notation of Eq. \eqref{eq: NETV of C}, we have
\begin{equation}\label{eq: on one ineq}
\int_{\tilde{D}}\nabla_g \cdot \xi \,da \leq NETV(D).
\end{equation}
On the other hand, we have the eigenset property  $\nabla_g \cdot \xi = \lambda \psi$. Plugging this to $\int_{\tilde{D}}\nabla_g \cdot \xi \,da$, we have
\begin{equation}\label{eq: on other ineq}
\int_{\tilde{D}}\nabla_g \cdot \xi \,da = \lambda \int_{\tilde{D}}\psi \,da = \lambda(|C| -\beta(|D|-|C|)).
\end{equation}
Note that we integrate over $D$, while $\psi$ is defined for $C$, $M$. The last equality uses $C \subset D$ which translates to $\tilde{C} \subset \tilde{D}$ by invertability of $S$. Thus by Eqs. \eqref{eq: on one ineq}, \eqref{eq: on other ineq}
\begin{equation}
\lambda(|C| -\beta(|D|-|C|)) \leq NETV(D).
\end{equation}
Plugging Eqs. \eqref{eq: eigenvalue NETV area}, \eqref{eq: beta} we get
\begin{equation}
\frac{NETV(C)}{|C|}\left(|C| -\frac{|C|}{|{M \backslash C}|}(|D|-|C|)\right) \leq NETV(D),
\end{equation}
\begin{equation}
NETV(C)\frac{{|C|}}{{|C|}}\left(\frac{|{M \backslash C}|-(|D|-|C|)}{|{M \backslash C}|}\right) \leq NETV(D),
\end{equation}
using $|{M \backslash C}|-(|D|-|C|)$ $=|M| - |C|-|D|+|C|$ $=|M \backslash D|$, we finally have
\begin{equation} \label{eq: theorem finally}
NETV(C) \leq \frac{|{M \backslash C}|}{|{M \backslash D}|} NETV(D).
\end{equation}
\end{proof}

 \begin{corollary} \label{cl: locally minimal}  Plugging Eq. \eqref{eq: indictr NETV} to Eq. \eqref{eq: theorem finally} we obtain

\begin{equation} 
per(C) = NETV(C) \leq \frac{|{M \backslash C}|}{|{M \backslash D}|} NETV(D) = \frac{|{M \backslash C}|}{|{M \backslash D}|} per(D) \,\, \forall D \supset C,
\end{equation}
thus any eigenset $C$, subset of a closed $M$, is a locally minimal perimeter set. 
\end{corollary}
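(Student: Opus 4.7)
The plan is to reduce the corollary to the preceding theorem (Theorem \ref{cl: locally minimal NETV}) by invoking the identity $NETV(C)=per(C)$ for indicator functions of sufficiently regular sets, which was established as the last property of the Indicator Functions subsection. Since the corollary is framed as a direct consequence of the theorem combined with this perimeter identity, the argument essentially amounts to a single substitution on each side of the already-proved inequality.

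Concretely, I would first apply $NETV(C)=per(C)$ to the left-hand side of the theorem's conclusion, rewriting $NETV(C)\le \tfrac{|M\setminus C|}{|M\setminus D|}\,NETV(D)$ as $per(C)\le \tfrac{|M\setminus C|}{|M\setminus D|}\,NETV(D)$. Then I would apply the same identity to $NETV(D)$ on the right, obtaining $per(C)\le \tfrac{|M\setminus C|}{|M\setminus D|}\,per(D)$ for every $D\supset C$. This matches the definition of a locally minimal perimeter set, Eq.~\eqref{eq: generalized minimal perimeter}, verbatim, so the conclusion that every eigenset $C$ is locally minimal-perimeter follows immediately.

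The only subtle point, and the closest thing to an obstacle, is a regularity check: the identity $NETV(\cdot)=per(\cdot)$ was derived under the hypothesis that the relevant set has a smooth boundary, so that the normal field $z=\tilde{n}$ is admissible in Eq.~\eqref{eq: NETV of C} and the divergence theorem \eqref{eq: par div th} realizes the supremum. To keep the statement clean I would restrict the class of competitors $D$ to sets whose boundary regularity matches that assumed in the perimeter identity, or else invoke a standard approximation/relaxation argument to extend $NETV(D)=per(D)$ to any $D$ of finite perimeter. No new analytic estimates are needed beyond those already assembled in Theorem \ref{cl: locally minimal NETV} and the indicator-function calculus.
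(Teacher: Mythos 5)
Your proposal is correct and follows essentially the same route as the paper: substitute the identity $NETV(\cdot)=per(\cdot)$ from Eq.~\eqref{eq: indictr NETV} into both sides of the inequality of Theorem~\ref{cl: locally minimal NETV} and observe that the result is exactly the definition \eqref{eq: generalized minimal perimeter} of a locally minimal perimeter set. Your added regularity caveat about the competitors $D$ is a reasonable refinement, but it does not change the argument, since the paper works throughout under the standing assumption of smooth boundaries for which the perimeter identity was established.
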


If $|M|=\infty$, then $C$ is a minimal perimeter set. This happens since locally minimal perimeter subsets of surfaces with $|M| =\infty$ automatically become minimal perimeter, by $\frac{|M \backslash C|}{|M \backslash D|}=1$.
Furthermore, if $M$ is Euclidean (with $|M|=\infty$), then this corollary shows that $C$ is convex - which is a well known property for Euclidean $TV$. Hence, the new notion of locally minimal perimeter is a generalization of set convexity, with regard to total-variation theory.


\begin{figure} [htbp]
\includegraphics[width=0.5\textwidth]{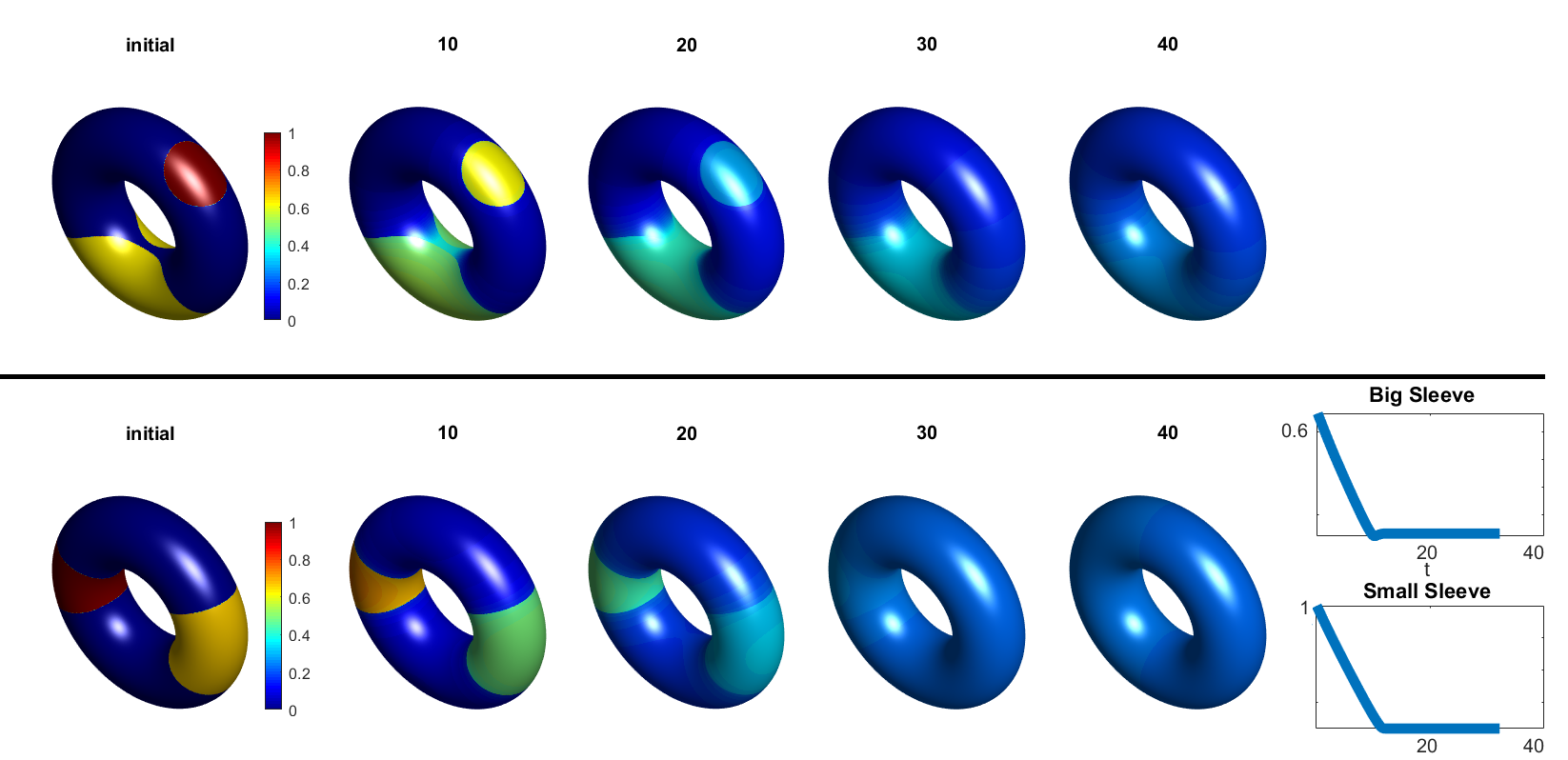}
\caption{Weighted indicator functions on Torus and their $NETV$ flow. First row: Two geodesical disks. The small disk is defined in "Euclidean position", i.e. results for Euclidean case carry over. Namely - the small disk is both geodesically convex, and a (locally) minimal perimeter set. The large disk is not in Euclidean position, and Euclidean laws do not carry over: It is not geodesically convex, nor is it of locally minimal perimeter. Throughout the flow it becomes a sleeve-like set, which is a minimal perimeter set, after which it decays completely. Bottom row: Indicator function of two sleeve sets $C_1, C_2$, weighted by the ratio of their areas: $f = \tilde{\chi}^{C_1} + \frac{|C_2|}{|C_1|}\tilde{\chi}^{C_2}$. The sleeve sets are eigensets, as shown in the appendix. By Eq. \eqref{eq: eigenvalue perimeter area}, and since $per(C_1)=per(C_2)$,  the eigenvalue ratio should be inverse the area ratio i.e. $\frac{\lambda_2}{\lambda_1}=\frac{|C_1|}{|C_2|}$. Thus we can reformulate as $f = \tilde{\chi}^{C_1} + \frac{\lambda_1}{\lambda_2}\tilde{\chi}^{C_2}$. If this is true, then by Eq.  \eqref{eq: linear decay} we expect the sets to completely decay at the same time - and indeed they do.} 
\label{Toroid}
\end{figure}

\subsection{Overview: Application of the Theory}
Eigenfunctions of the non-Euclidean total variation functional are stable, linearly decaying, modes of the minimizing flow. They are expected to be seen throughout the flow, especially upon convergence \cite{gilboa2013spectral}, \cite{bungert2020asymptotic}. Fig.  \ref{fig: spheres flow} demonstrates such linear decay - indicating that the decaying function is an eigenfunction.  Such eigenfunctions divide the non-Euclidean manifold to subsets, called "eigensets". For the Euclidean setting it is well established that such eigensets must be convex  - giving interpretability and understanding of the expected behaviour of the total-variation minimizing flow, and its regularization properties in general \cite{andreu2001minimizing}, \cite{bellettini2002total}. Until now, a fitting generalization was not proven for the non-Euclidean setting.

It was recently proposed \cite{fumero2020nonlinear}, based on empirical evidence, that eigensets of the non-Euclidean total variation are geodesically convex. However, Fig. \ref{fig: spheres flow} demonstrates numerically a linearly decaying set, i.e. an eigenset, that is not geodesically convex.

In our theoretical derivations  we begin by finding the expected values of the eigensets. These enable us to find the eigenvalue associated to the eigenfunction. The eigenvalue readily gives us understanding of the flow behaviour - as it is equal to the rate of linear decay of the eigenfunction.  Fig. \ref{Toroid} demonstrates numerically that the rate of decay is as expected. Finally - we derive a necessary condition for a set to be an eigenset, which we call the "locally minimal perimeter" condition. This condition is validated to be a non-Euclidean generalization of the well known Euclidean set convexity property.  In Fig. \ref{sinc_set}, we observe the locally minimal perimeter criterion on the stable modes of the flow.  The influence of the locally minimal perimeter criterion can also be seen across our shape processing experiments which follow, most notably inf Figs. \ref{sinc}, \ref{armadil0 filtering},  \ref{fig:deformation_process}, \ref{fig:stretch_flow}, \ref{sinc_z}.

\section{Spectral Decompositions}
 \label{sec: zero hom spect}
 Here we show straightforward extensions of some of the observations done in \cite{burger2016spectral} to our settings.
Let  $X$  be a space of functions on a surface $M$ equipped with a metric $g$, as described in Eqs. \eqref{eq: patmeteric manifold}, \eqref{eq: metric}.  Let $p:X \rightarrow X $ be a zero-homogeneous  operator, i.e.
\begin{equation}\label{eq: 0hom}
p(\alpha f)=\textrm{sign}(\alpha)p(f),\quad \alpha \in \mathbb{R}, f \in X,
\end{equation} 
with $p(0)=0$. 
We examine the following flow:
\begin{equation} \label{eq: flow}
u_t = -p(u(t)), \quad u(0) = f \in X, \,\, t\geq 0, 
\end{equation} 
 where $u_t=\frac{\partial u} {\partial t}$. We assume the flow exists and that the solution is unique. Unlike \cite{burger2016spectral}, here $p$ maps functions on non-Euclidean domains. Nevertheless, the time domain, denoted by $t$, is Euclidean. Note that elements in $\partial NETV$ are zero-homogeneous, thus, $NETV$ flow (subgradient descent of the energy) is a zero-homogeneous flow.  
We also assume the second time-derivative of $u$ exists in the distributional sense almost everywhere and define $\phi:t\rightarrow X$  as
\begin{equation} \label{eq: phi}
\phi(t) = t \cdot u_{tt}.
\end{equation}
Let $\psi$ be an eigenfunction with respect to $p$ with a positive eigenvalue, i.e. $\exists \lambda \in (0,\Lambda<\infty) : p(\psi) = \lambda \psi$. Let $f=\psi$, then the solution of Eq. \eqref{eq: flow} is
\begin{equation} \label{eq: linear decay}
 u(0)=\psi \Rightarrow u(t) = \begin{cases}
 (1-\lambda t) \psi\, & t \leq \frac{1}{\lambda}\\
 0 \, & t>\frac{1}{\lambda}
 	\end{cases}.
  \end{equation}
This can be verified by having, for $t < \frac{1}{\lambda}$, the relation 
\begin{equation}\label{eq:eig_decay}
p(u(t))=p((1-\lambda t) \psi)=p(\psi)=\lambda \psi = -u_t,
 \end{equation}
 where the second equality uses 0-homogeneity of $p$, the third equality uses the eigenfunction property, and the last equality is an evaluation of $u_t$ by \eqref{eq: linear decay}. For $t = \frac{1}{\lambda}$ we have a steady state since $p(0)=0$.  By uniqueness of the solution we are done.

 We note that since we are examining smoothing processes, $p$ in general is a positive semidefinite operator, $\langle f,p(f)\rangle \ge 0$, $\forall f \in X$. Thus the eigenvalues are positive. In the case of negative eigenvalues, the flow diverges (but for a finite stopping time can still have a solution).

  Thus, for eigenfunctions of positive eigenvalues  we get $\phi(t)=\delta(t-\frac{1}{\lambda})f$, i.e.  $\phi$'s energy is concentrated in a single scale (``frequency'') which corresponds to the eigenvalue of $f$,  $\lambda = \frac{1}{t}$. For a general $f\in X$, this motivates the interpretation of $\phi$ as a spectral transform of $f$, where the spectral components are positive eigenfunctions of $p$, in a similar manner to \cite{Gilboa_spectv_SIAM_2014,burger2016spectral,bungert2019nonlinear}.

We can compute the reconstruction formula, for a general stopping time $T$, using integration by parts (and assuming $u_t(0)$ is bounded), by
$\int_0^T \phi(t)\,dt = t u_t|_0^T - \int_0^T u_t\,dt$ $ =T u_t(T)-u(T)+f=-Tp(u(T))-u(T)+f.
$
Denoting the residual $R = T p(u((T)) + u(T)$, the following reconstruction identity holds $f=\int_0^T \phi(t)\,dt + R$. This gives rise to a filtered reconstruction via a filter $H:t\rightarrow \R$ as follows:
\begin{equation}\label{eq: nonlinear filtering}
f^{filtered}=\int_0^T H(t) \phi(t)\,dt + R.
\end{equation}

An important special case is the operator $p \in \partial NETV$, which is zero-homogeneous. Hence a $NETV$-minimizing flow can be used to filter signals by \eqref{eq: nonlinear filtering}. Moreover, in the case of an initialization with a single eigenfunction we expect a linear decay.  In Figs. \ref{Toroid}, \ref{sinc} we use these properties to demonstrate some of our theoretical findings from Sec. \ref{sec: thry}. In the following we demonstrate how this can be carried over to shape processing.





\begin{figure*} [htbp]
\includegraphics[width=\textwidth]{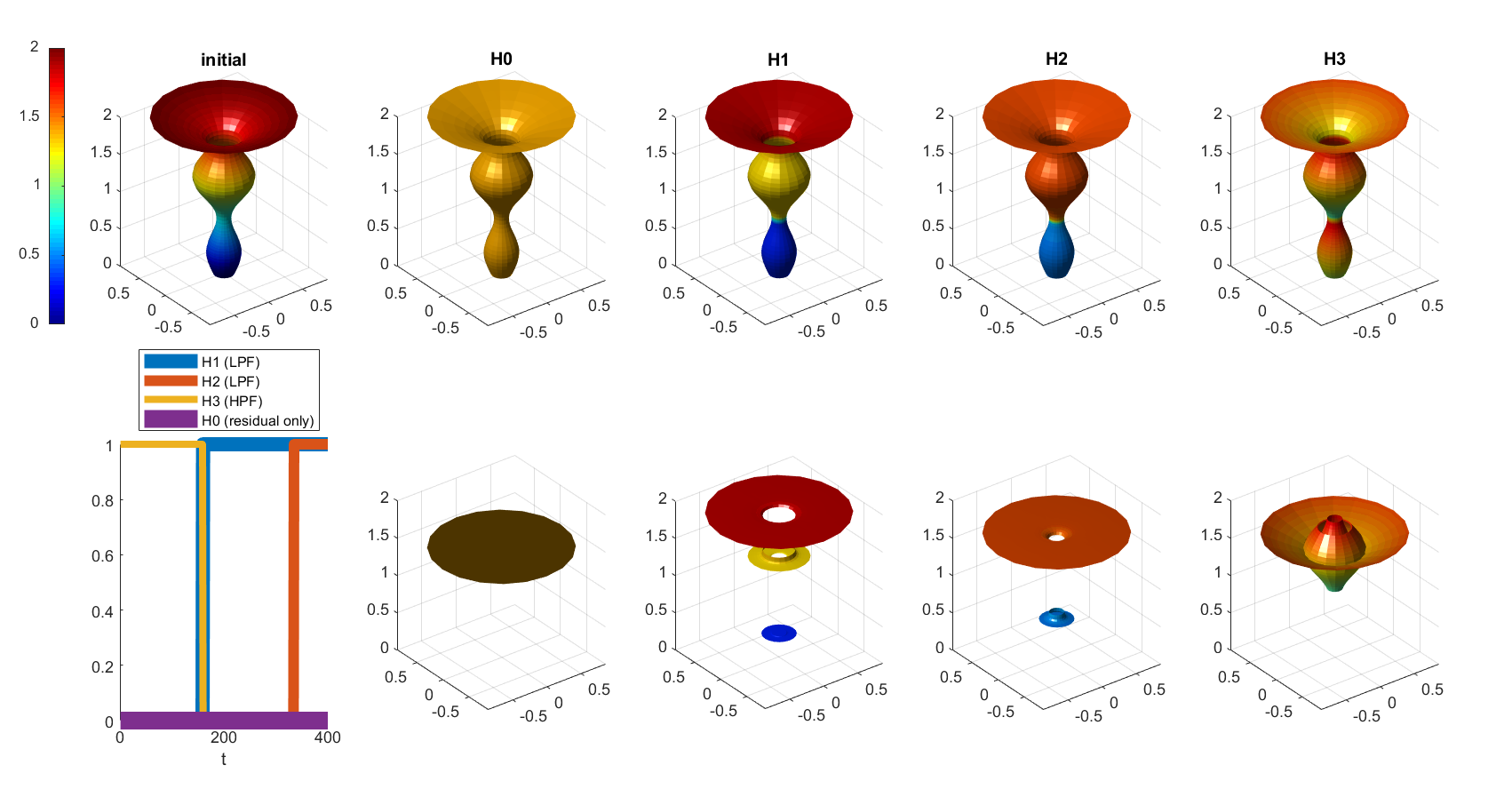}
\caption{Upper row: $M$ is the surface seen also in Fig. \ref{sinc_set}. $f$ is initialized as the z coordinate function, shown in color on the shape. A zero-homogeneous operator, $\partial NETV$, induces the spectral components - used for filtered reconstruction. Middle row depicts filtered coordinate function as color on shape. Bottom row: Having a geometrical meaning - the filtered coordinate function can replace the shape's coordinate -  Bottom row transforms middle row in this manner, followed by a simple graph cut for clearer visibility of vertex location. The  LPF (low pass filter) $H1$ exposes an underlying geometrical structure of three planes - corresponding to three "sleeve sets". The sleeve sets decay linearly throughout the flow, proving to be spectral components of $\partial NETV$. The geometrical meaning of linear decay is constant velocity. $H_2$ further filters and shrinks the upper two planes to one - showing that the bottom plane is of lowest eigenvalue, as it is the slowest decaying plane. $H_0$ shrinks the three underlying planes to a single one, the residual plane. $H_3$ adds the non-planar structures to the residual plane. The naive choice of coordinate function is not always natural to the processed shape.} 
\label{sinc}
\end{figure*}

\begin{figure} [htbp]
\includegraphics[width=0.5\textwidth]{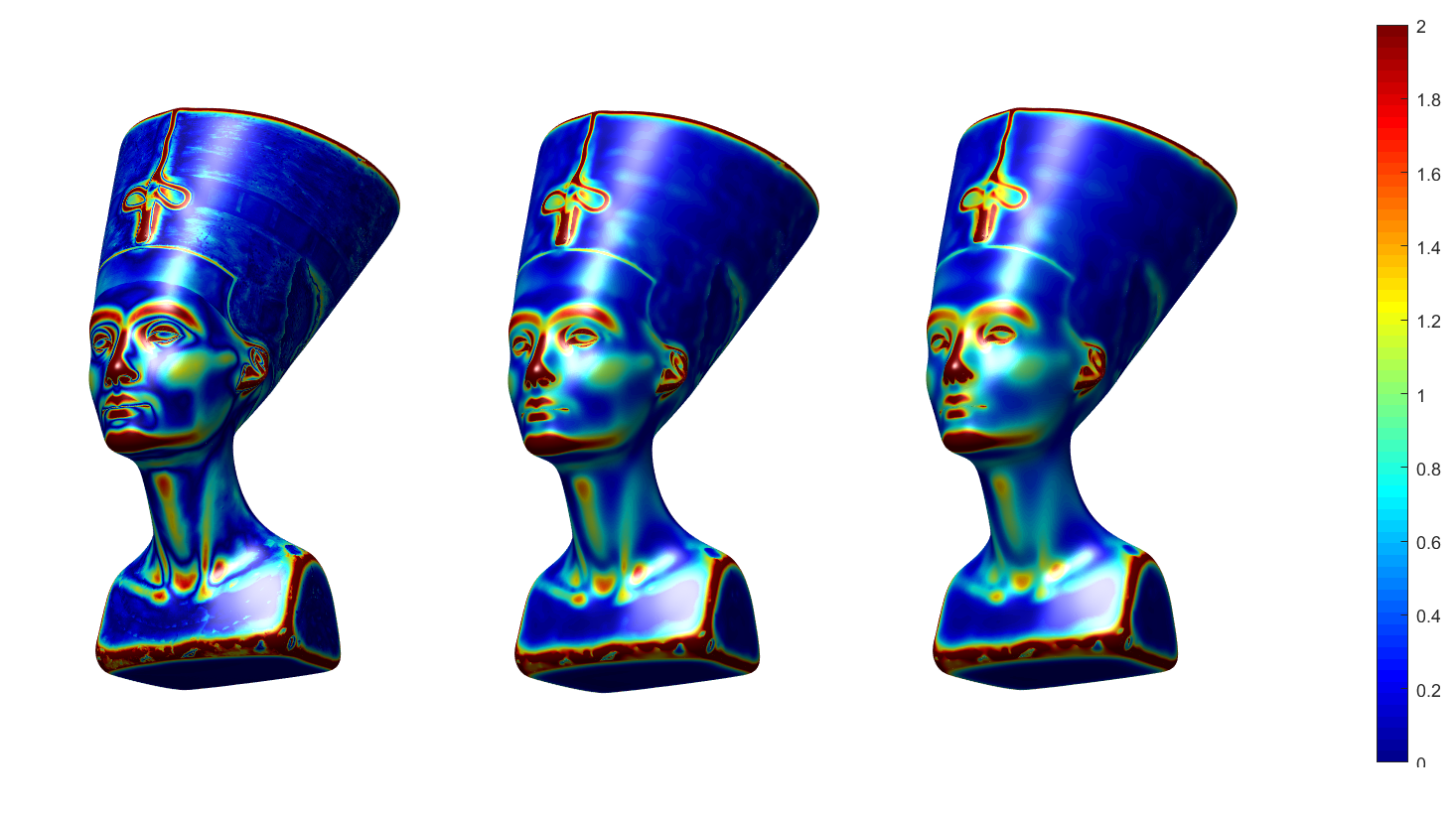}
\caption{Handling real world shapes, demonstrated on scan of the famous archeological statue "Bust of Queen Nefertity". In this example, geometric details are captured as a function that represents perturbation from an underlying smooth shape (further details in Sec. \ref{resmoothing method}). In color we see these geometric details experiencing a $NETV$ minimizing flow. To obtain the result of Fig. \ref{fig: nefertities}, we follow a procedure similar to the one presented in Fig. \ref{sinc}: Spectral representations are obtained from the flow and used for filtering. Followingly the original geometrical details are replaced by their filtered versions, which results with the final filtered shapes.}
\label{nefertity_scalar}
\end{figure}

\section{Shape Processing} \label{Methods}
 We suggest three methods for nonlinear filtering of shapes, in the framework of Sec.  \ref{sec: zero hom spect}. The methods differ by the choice of the operator $p$ of the respective flow. Each method is inspired by a different flow: $M1$ by the Heat Flow, $M2$ by cMCF and $M3$ by MCF. $M3$ uses $p \in \partial NETV$, hence theory of Sec. \ref{sec: thry} applies. $M1$ and $M2$ use different yet related operators, and theory regarding these is left as future work. Nevertheless - in all three methods we attain good feature control via manipulation of the spectral components. We will also demonstrate how the choice of different operators $p$ induces different qualities.

 So far, we processed a function $f$ on a 2D manifold $M$ embedded in 3D, via Eqs. \eqref{eq: flow}, \eqref{eq: phi}, \eqref{eq: linear decay}, \eqref{eq: nonlinear filtering}. Here we wish to process the manifold itself - and for this purpose we will choose a function $f$ that describes $M$. In the setting of parameterized surfaces, the surface function may be the function of choice, i.e. $f=S$, a vectorial function with three channels as the three coordinate functions $x_0(\omega_1,\omega_2), y_0(\omega_1,\omega_2), z_0(\omega_1,\omega_2)$. Note that $S$ also induces the intrinsic metric $g$. This choice is widely used for shape flows, thus it enables a comparison between our framework and the classical ones. Other representations can be used, see for example Fig. \ref{nefertity_scalar}. Denote the evolving shape at time $t$ as $S(t)$, and denote $c(t)$ as any evolving coordinate function of $S(t)$, that is $c(t)$ may assume $x(t)$, $y(t)$, or $z(t)$.

\textbf{}


\subsection{ Modifying flows for nonlinear spectral processing}
Our framework requires a zero-homogeneous flow evolving on a fixed metric, which is required to induce spectral linear decay of the eigenfunctions. Denote the metric induced by the initial shape $g_0$. It is fixed throughout the flow. Denote the evolving shape's metric $g_t$, which changes throughout the flow, i.e. it is not fixed.


Examining Heat Flow, MCF and cMCF we find that none of these flows is zero-homogeneous, and Heat Flow is the only one performed on a fixed metric. Hence, adaptations of these flows are required.

\subsection{Naive method: Unpaired Coordinate Spectral TV}


The naive approach utilizes a modification of Heat Flow for our framework. Heat Flow processes each coordinate function independently via Eq. \eqref{eq: lin diff}, utilizing the Laplace-Beltrami on the fixed metric $g_0$ throughout the flow. Thus it satisfies a fixed metric, but it is not zero-homogeneous, and a modification is required. By replacing the Laplace-Beltrami with the 1-Laplace-Beltrami of Eq. \eqref{eq: g p Laplacian} zero-homogeneity is achieved, which results in the operator
$-p_{Naive}(c) := \Delta_{g_0,1}c$,
and a per-coordinate flow is defined by setting $p(u(t))=-p_{Naive}(c(t))$ in Eq. \eqref{eq: flow}. Each channel evolves separately, hence the name "unpaired coordinates". We can now perform nonlinear spectral filtering as in Eq. \eqref{eq: nonlinear filtering}, demonstrated on the meteor model
in Fig. \ref{meteor}. Note the axis squaring effect, which violates rotation invariance, and also restricts the underlying structure, resulting with bad separation between structure and detail. With that said, it does provide relatively aesthetic results, if one desires such squaring effect.

\subsection{Method 1 (M1): Shape Spectral TV} \label{VTV method}

\begin{figure} [!htbp]
\includegraphics[width=0.5\textwidth]{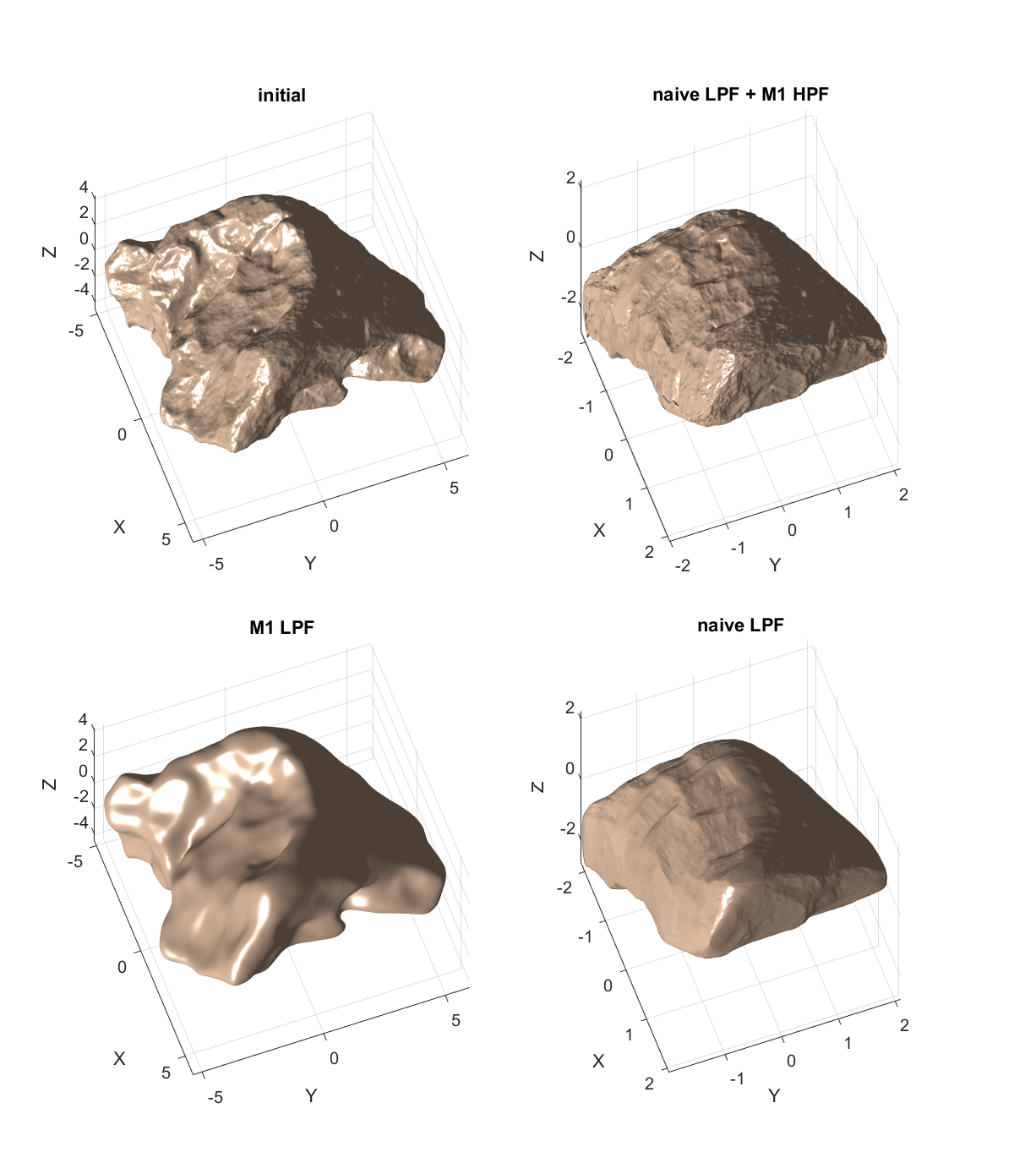}
\caption{A real-world 3D scan of a Meteorite\protect\footnotemark. In this example, the shape is described by its coordinate functions. Unlike linear shape processing, the nonlinear case requires special attention to deal with the correlation between the $X,Y$ and $Z$ coordinate functions. In Method 1 (M1) such considerations are accounted for. From top left, counter clockwise: 1) original model; 2) M1 low-pass filter (LPF); 3) Naive method low-pass filtering. ignoring the inter-coordinate correlation, this method has an axis-squaring effect. Even such harsh filtering with the naive method dose not yet fully smooth out details, suggesting lack of feature control. Contrary, M1 smooths out details without distorting underlying structure, as expected from a smoothing procesdure; 4) Details obtained via $M1$ high-pass filter (HPF) are added to the model after squaring it with the naive method LPF} 
\label{meteor}
\end{figure}
\begin{figure} [!htbp]
\includegraphics[width=0.5\textwidth]{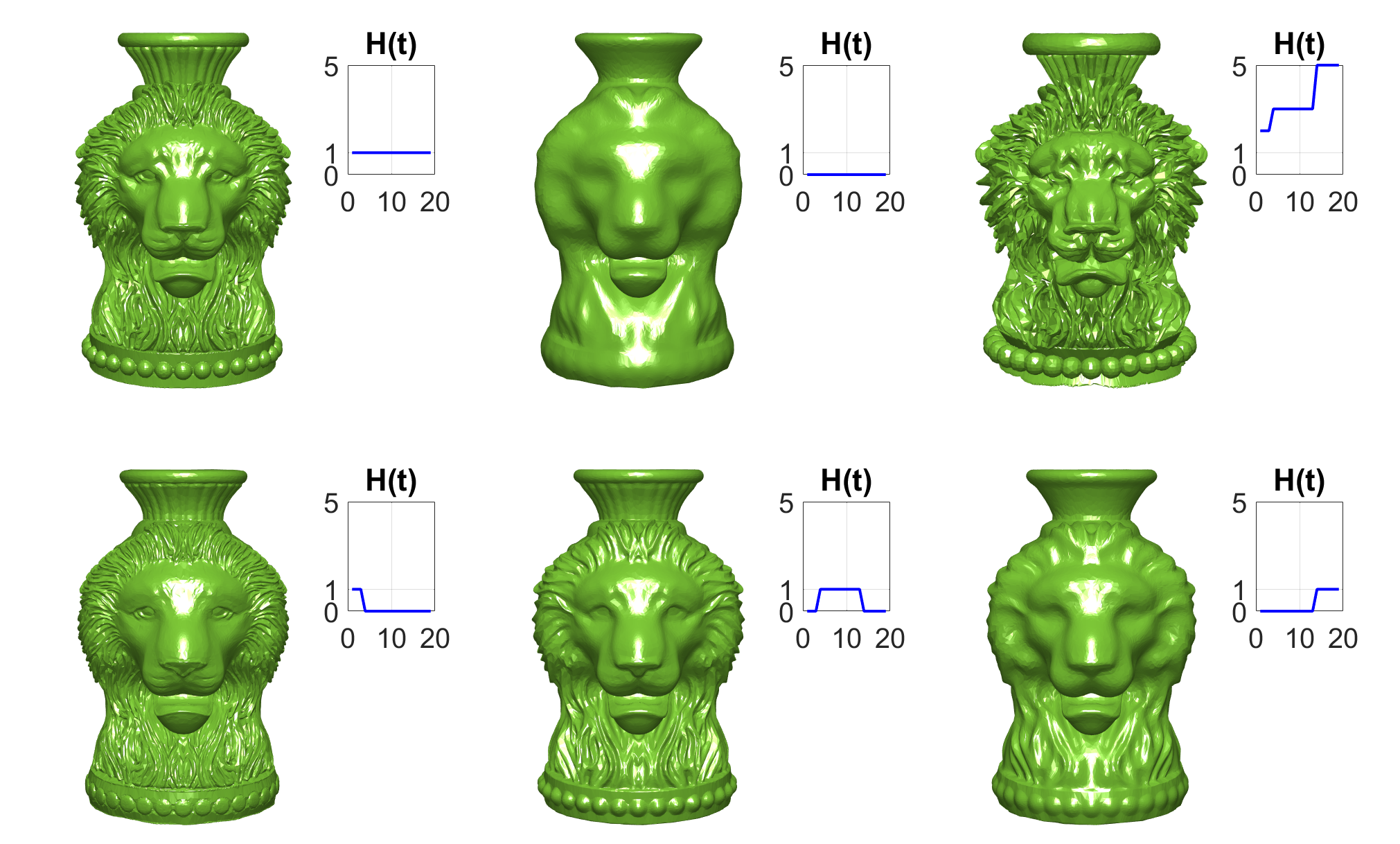}
\caption{$M1$ filtering. Top row, left to right: All-pass reconstruction; residual; 3-bands amplification. Bottom row: each of the 3 bands isolated on top of the residual. As $t$ grows, choosing $H(t)$ greater or less than 1, results in amplification or attenuation of courser details, as can be observed e.g. in the model's contour and clumping of hair-strands.} 
\label{lion vase bands}
\end{figure}

Here we take into account shape coordinate inter-correlations, i.e. we go from coordinate to shape processing. We apply a vectorial flow (as in VTV) on meshes, which results in the operator, 
\begin{equation}\label{eq:m1_operator}
	-p_{M1}(c) := \nabla_{g_0} \cdot \left(\frac{\nabla_{g_0}c}{\sqrt{\sum_{\tilde{c}=x,y,z}|\nabla_{g_0}\tilde{c}|^2}}\right).
\end{equation}
Note that the metric is fixed as $g_0$. We can also verify that the operator is zero-homogeneous.

Remark: Similar flows where proposed in the past. One important example is \cite{elmoataz2008nonlocal}, from which we borrow the combined gradient magnitude of the denominator - designed to account for the inter-correlation of the coordinate function. With that said, the gradient and divergence operators used in \cite{elmoataz2008nonlocal} are significantly different - as they are obtained for general graph structures, while we use operators which account for the surface properties of the mesh. For instance, the graph gradient used in \cite{elmoataz2008nonlocal} is calculated per-vertex and has a non-fixed dimensionality that is equal to the number of edges connected to said vertex, while the surface mesh gradient we use is obtained per-triangle, and is parallel to said triangle, i.e. it is embedded in $\mathbb{R}^3$.

%
%
The flow is followed by per-coordinate spectral processing as in Eq. \eqref{eq: phi}, \eqref{eq: nonlinear filtering} - 
thus $x,y,z$ inter-correlation is preserved. Compared to the naive approach, $M1$ preserves better the initial underlying structure, as demonstrated in Fig. \ref{meteor}. Good multi-scale feature control is demonstrated as well in Fig. \ref{lion vase bands}.

\subsection{Method 2 (M2):  Conformalized 3-Laplace } \label{conformalized method} 

Here we modify cMCF \cite{kazhdan2012can} to our framework, by presenting a  conformalized $\mathcal{P}$-Laplace, as described below. Our flow inherits cMCF's limb-head smoothing capabilities (Fig. \ref{dynos}), which we then use for shape filtering. The metric of cMCF is $\tilde{g}_t=\sqrt{|g_0^{-1}g_t|}g_0$, is not fixed. The operator driving the flow, the conformalized Laplace-Beltrami, $ \sqrt{\frac{|g_0|}{|g_t|}}\nabla_{g_0} \cdot \nabla_{g_0}$, depends on the evolving shape's metric $g_t$. To achieve a fixed metric, we re-interpret $|g_t|$ as an operator on the fixed metric $g_0$. This is valid since the diffused shape defines both the diffused function as well as the evolving metric. This affects homogeneity, as shown below. We define the conformalized $\mathcal{P}$-Laplace as,
\begin{equation}\tilde{\Delta}_{g,\mathcal{P}}(c): = \sqrt{\frac{|g_0|}{|g_t|}}\nabla_{g_0} \cdot (|\nabla_{g_0} c|^{\mathcal{P}-2}\nabla_{g_0} c ).
\end{equation}
By Eq. \eqref{eq: metric} we have that $|g_t|$ is absolutely 4-homogeneous, hence $\tilde{\Delta}_{g,\mathcal{P}}$ is $\mathcal{P}-3$ homogeneous,
$\tilde{\Delta}_{g,\mathcal{P}}(\alpha c)$ 
$ = \sqrt{\frac{|g_0|}{|\alpha|^4|g_t|}}\nabla_{g_0} \cdot (|\alpha|^{\mathcal{P}-2} |\nabla_{g_0} c|^{\mathcal{P}-2}) \alpha \nabla_{g_0} c$
 $= \frac{\alpha }{|\alpha |^{4-\mathcal{P}}}\tilde{\Delta}_{g,\mathcal{P}}(c).
$
Thus we choose $\tilde{\Delta}_{g,3}$  as a zero-homogeneous modification of the conformalized Laplace. 
Once again inter-correlations are accounted for, as in \cite{elmoataz2008nonlocal}, yielding the operator,
\begin{equation}
-p_{M2}(c) := \sqrt{\frac{|g_0|}{|g_t|}}\nabla_{g_0} \cdot \left(\sqrt{\sum_{\tilde{c}=x,y,z}|\nabla_{g_0} \tilde{c}|^2}\nabla_{g_0} c \right)
\end{equation}
\begin{figure}[!htbp]
\includegraphics[width=0.5\textwidth]{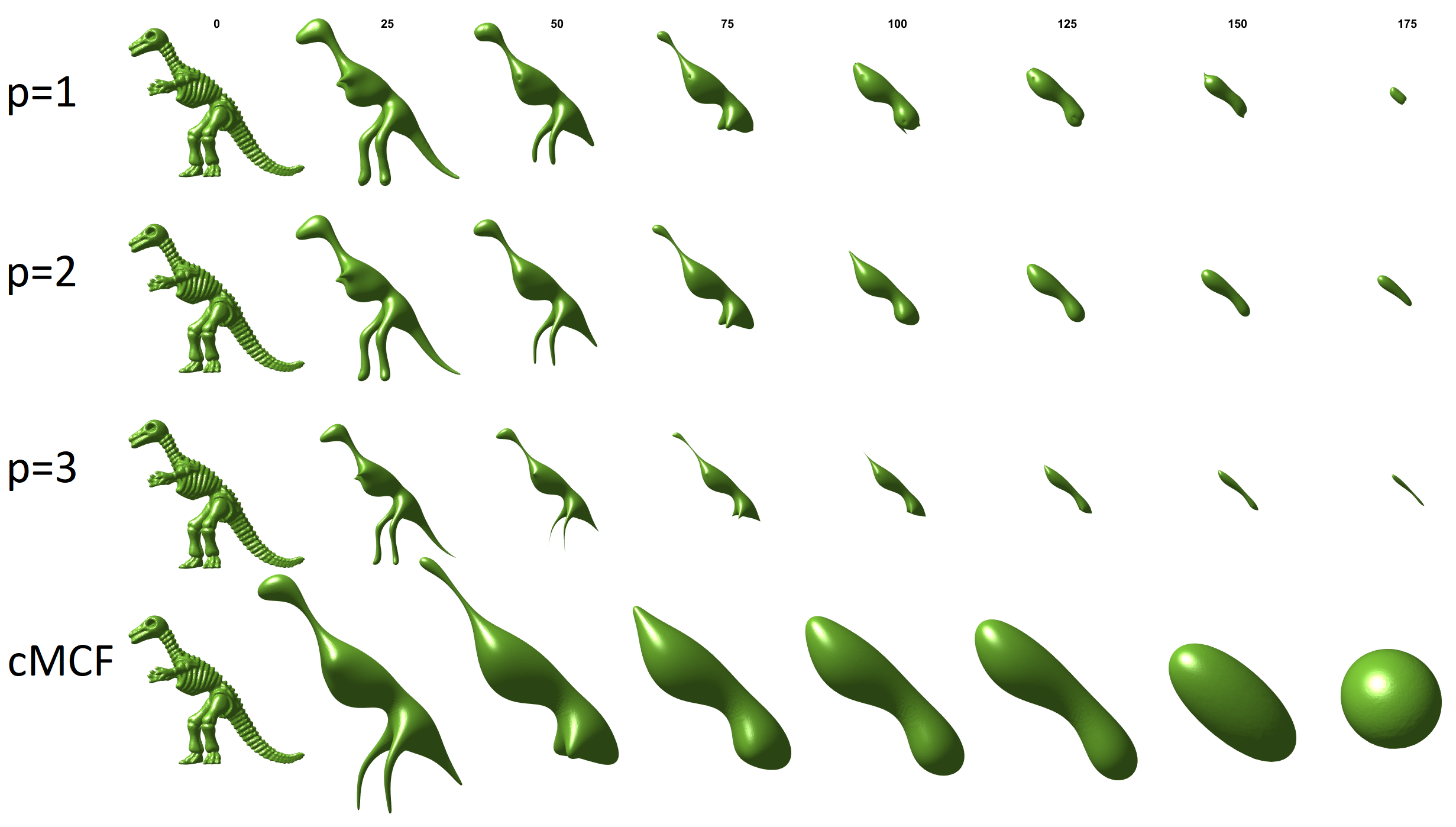}    
\caption{Three upper rows: M2 conformalized $\mathcal{P}$-Laplace flows. $\mathcal{P}=2$ is an unscaled version of the conformalized Mean Curvature Flow (cMCF). $\mathcal{P}=1$ is a new conformalized shape TV flow. For $\mathcal{P}=3$ the flow is zero-homogeneous. Bottom row: cMCF from \cite{kazhdan2012can}
} \label{dynos}
\end{figure}
The flow is followed by nonlinear spectral filtering, Eq. \eqref{eq: nonlinear filtering}. Editing extremities, a capability inherited from our conformal 3-Laplace flow, is demonstrated in Fig. \ref{victoria michael limbs}, where extremities are in the form of human limbs and head.

\begin{figure}[!htbp]
\includegraphics[width=0.5\textwidth]{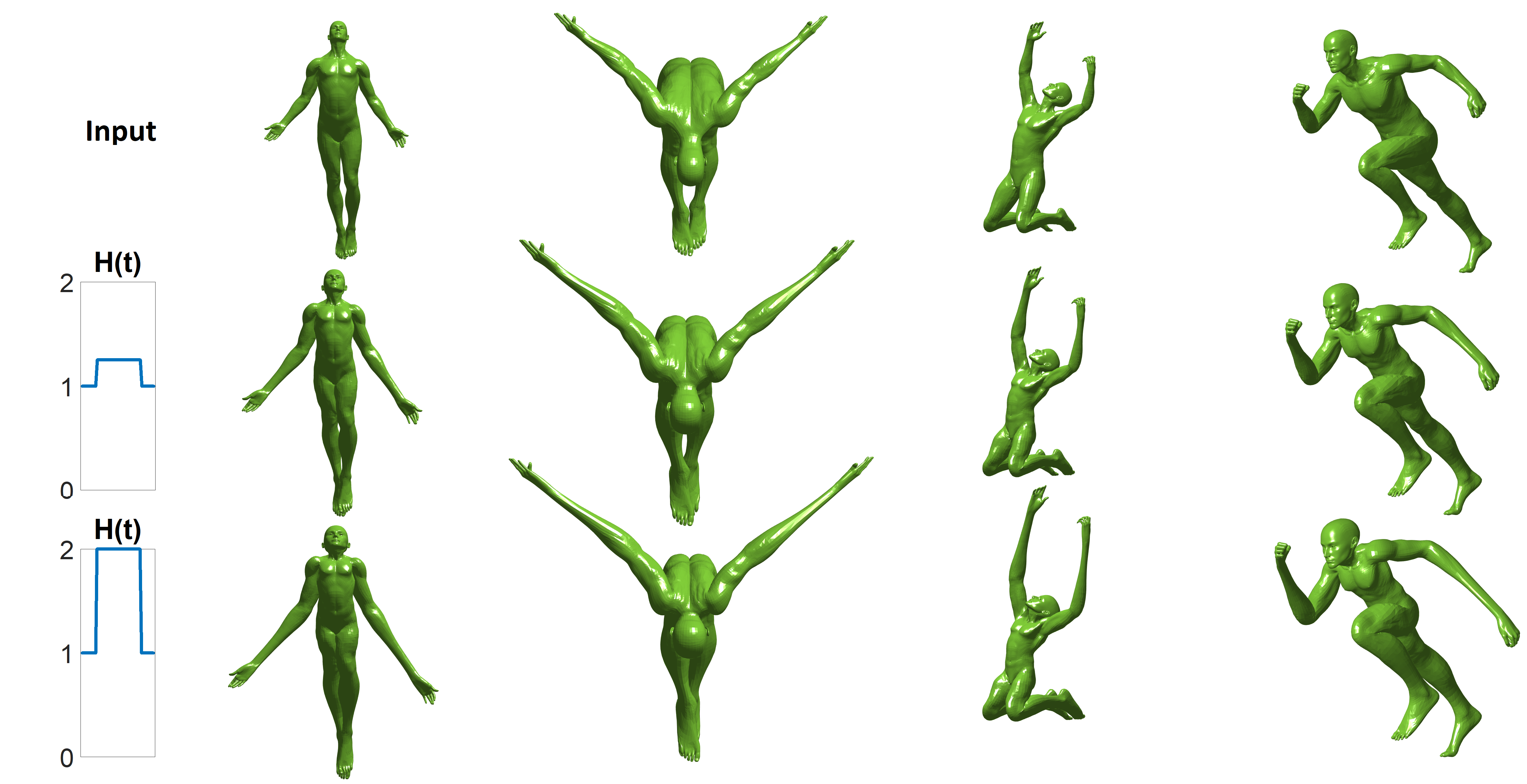}
\caption{Shape exaggeration by M2 spectral filtering. 
The filter inherits its properties from the conformalized 3-Laplace flow (Fig. \ref{dynos}), which translates to interesting limb-head editing. Isometry robustness is demonstrated as well. }\label{victoria michael limbs}
\end{figure}

\subsection{Method 3 (M3):  Directional Shape TV} \label{resmoothing method}
Mesh TV smoothing typically preserve pointy surface points, e.g. tip of chin \cite{fumero2020nonlinear} or ears \cite{elmoataz2008nonlocal}. Here we propose a method that preserves edges, e.g. muscle contour, similarly to TV processing of images. While M1 and M2 utilized modifications of Heat Flow and cMCF, M3 draws inspiration from MCF.

\footnotetext{\href{https://sketchfab.com/3d-models/cmu-edmore-meteorite-c20db897b4e14ef9a1cfe794e22037cc}{Edmore Meteorite 3D scan}, courtesy of Prof. \href{https://www.davis-sculpture.com/}{Jeremy Davis}, Central Michigan University, Department of Art and Design}

MCF already has a thoroughly researched fixed-metric zero homogeneous modification: The TV flow as applied to gray-scale images \cite{kimmel2000images}. 
For a surface represented as $S=(x,y,f(x,y))$, this modification entails constraining the evolved shape to be of the form $S(t)=(x,y,f(x,y,t))$ . This is enforced by constraining each point on the surface to evolve in direction $\hat{z}$ (perpendicular to the $x,y$ plane). We note that unconstrained MCF would necessarily violate this form of $S(t)$, as it theoretically converges to a singular point.

Our third  method aims to generalize the above direction-constraint to general shapes, hence the name "directional". The $x,y$ domain is generalized to be an over-smoothed version of the initial shape which we denote $\hat{S}$. Each $p \in S$ is mapped to a $\hat{p} \in \hat{S}$. The direction of evolution is fixed as $\hat{d}=\alpha\frac{S-\hat{S}}{|S-\hat{S}|}$, where $\alpha$ is a sign indicator which ensures $\hat{d}$ points "outwards".
Finally, the evolving initial surface  is represented as $f=\alpha|S-\hat{S}|$. Note that $S = \hat{S} + f\hat{d}$.
This method is a generalization in the following sense: Consider the form $S=(x,y,f(x,y))$, choosing $\hat{S}=(x,y,0)$, we have that $\hat{d}=\hat{z}$, and $\alpha|S-\hat{S}|=(0, 0, f(x,y))$.


The notion processing $d$ 
This method belongs to a vast family of shape processing techniques, operating on the displacement field



We advocate the choice of $\hat{S}$ as a cMCF smoothed version of $S$, since cMCF was shown to provide a conformal mapping from $S$ to $\hat{S}$.
By construction - the metric is fixed and inter-correlations are accounted for. The proposed zero-homogeneous operator (acting on a scalar-valued function $u$) is,
\begin{equation}
-p_{M3}(u): =  \nabla_{g_0}\cdot \frac{\nabla_{g_0}u}{|\nabla_{g_0}u|},
\end{equation}
where $u(t)$ is the evolution of $f$ at time t, which results in $\frac{\partial S}{\partial t} = \nabla_{g_0}\cdot \frac{\nabla_{g_0}u}{|\nabla_{g_0}u|}\hat{d}$, satisfying the imposed directionality. Finally, $f$ is filtered as in Eq. \eqref{eq: nonlinear filtering}, and a filtered shape is obtained by $\hat{S} + f^{filtered}\hat{d}$.
Being closely related to spectral TV on images, this method preserves detail well, as demonstrated in Fig. \ref{armadil0 filtering}. Though inspired by MCF, the proposed flow is substantially different.

\begin{figure} [!htbp]
\includegraphics[width=0.5\textwidth]{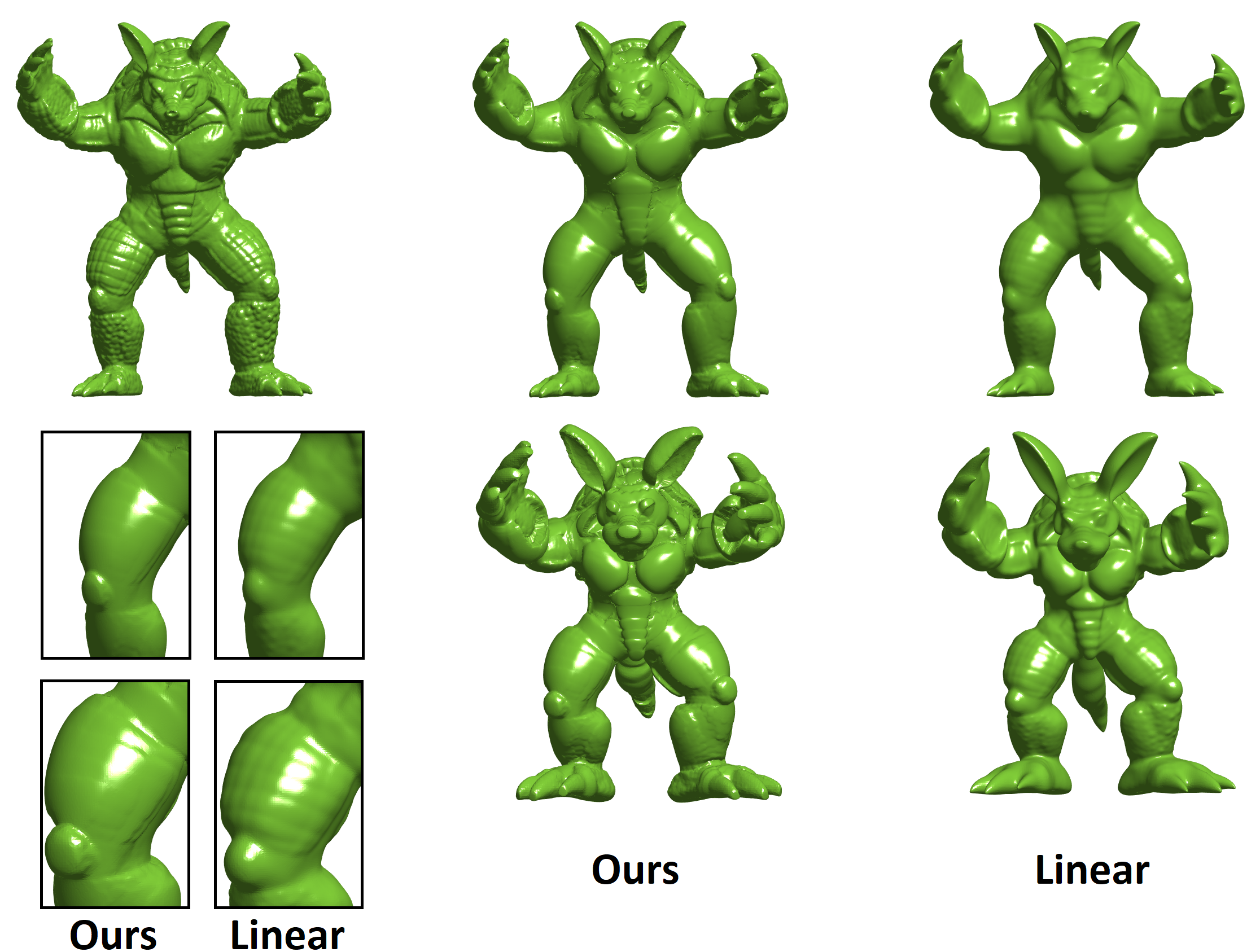}
\caption{Upper row: Low-pass filtering applied for smoothing. All our methods can utilize rough time discretization for runtime efficency at the expence of reconstruction error. In this example, M3 is used, and completed in approximately 30 seconds. In contrast, the Laplace-Beltrami method involves SVD for 2000 eigenvectors of a $[17\cdot 10^3\times 17\cdot 10^3]$ sparse matrix, requiring about 5 hours (2.5 orders of magnitude slower). Nonetheless, more efficient linear methods exist, such as in \cite{cignoni2005simple}, which take the same amount of time. Bottom row: Smooth caricaturization via bandpass exaggeration coupled with smoothing. We compare the exaggeration capabilities of M3 with those in \cite{cignoni2005simple}, which also exaggerates shapes using the deformation from an over-smoothed underlying shape. \\Bottom left (zoom-in): Our approach effectively smooths smaller details, such as the scales of the armadillo, while maintaining larger structures like the knees. Our theoretical findings correlate "resistance to smoothing" of geometrical structures with generalized convexity and the ratio of perimeter to area. The knees, being more convex-like with a lower perimeter to area ratio than the scales, indeed demonstrate greater resistance to smoothing. This results in superior separation of detail compared to both Laplace-Beltrami smoothing and the shape exaggeration method in \cite{cignoni2005simple}. For additional reference, exaggeration without smoothing is presented in the appendix, Fig.~\ref{fig:4_armadil_caricature}, where we also demonstrate consistency with the caricaturization principles discussed in \cite{sela2015computational}.}

\label{armadil0 filtering}
\end{figure}


\section{Shape Deformation}
Here we show a novel application. For the first time we perform total-variation for the shape deformation task. We will see that this induces a piecwise-constant deformation field, where the deformation concentrates on small-perimeter boundaries, showcasing the eigenset properties derived in Sec. \ref{sec: thry}.

The shape deformation problem involves finding a plausible transformation of a given shape while accommodating constraints. 
 The constraints are specified by the user as points, or regions of the shape, which are to be moved away from their original location. The deformation process should ensure that the resulting shape maintains its structural integrity while sufficing the constraints. 

 
To this end we propose a constrained minimization of the total-variation of the displacement field, defined as 
\begin{equation}\label{eq:deformation_field}
    d':=S'-S,
\end{equation}
where $S$ is the coordinate function of the original given shape, and $S'$ is the resulting shape. The minimization process is provided below as Algorithm \ref{alg:tv_deformation}. The algorithm uses a minimization process proposed by \cite{bronstein2016consistent} for minimzing L1 norms on manifolds. A slight difference is, that we use a vectorial version of \cite{bronstein2016consistent}. The minimizing operator of choice is $p_{M1}$ of Eq. \eqref{eq:m1_operator}. Minimization is performed while enforcing the deformation constraints.  As an initial solution to the process, we use the gradient-based linear deformation proposed in \cite{botsch2007linear}.  We use a matrix version of $p_{M1}$, denoted $P_{M1}^{matrix}$, which is updated at each iteration. Further details are available in Appendix \ref{app:shape_deformation}.

As illustrated in Fig. \ref{fig:deformation_process}, the displacement field $d'$ gradually becomes piecewise constant during the shape deformation process.  Interestingly, the deformation tends to concentrate on boundaries with small perimeters, suggesting a potential connection to eigenfunctions. We validate this hypothesis numerically in Fig. \ref{fig:stretch_flow} by analyzing $\|d'\|_g$,the magnitude of the field. See Fig. \ref{fig:snail}  for method comparison. For completeness, we added additional results in the appendix, Fig. \ref{fig:additional_stretch}.
 
\begin{algorithm}
\caption{Shape Deformation}\label{alg:tv_deformation}
\begin{algorithmic}
\State $S \gets \text{original uncostrained shape coordinate function}$
\State $b \gets \text{boundary indices}$
\State $\kappa \gets \text{boundary constraints}$
\State $w \gets \infty$ \Comment{ A high constraint weight}
\State $S' \gets \text{initial solution}$
\While{True}
        \State $S'_{cached} \gets S'$
        \State $\partial VNETV \gets \text{construct the minimizing operator considering }S, S'$
        \State $S' \gets \text{minimize($\| \partial VNETV(S'-S)\|_2^2 + w \cdot \|S'(b)-\kappa\|_2^2$}$)
        
        \If{$S'_{cached} \approx S'$} \Comment{Convergence stopping condition}
                \State break
        \EndIf
\EndWhile\\
\Return S'
\end{algorithmic}
\end{algorithm}


\begin{figure} [htbp]
\includegraphics[width=0.5\textwidth]{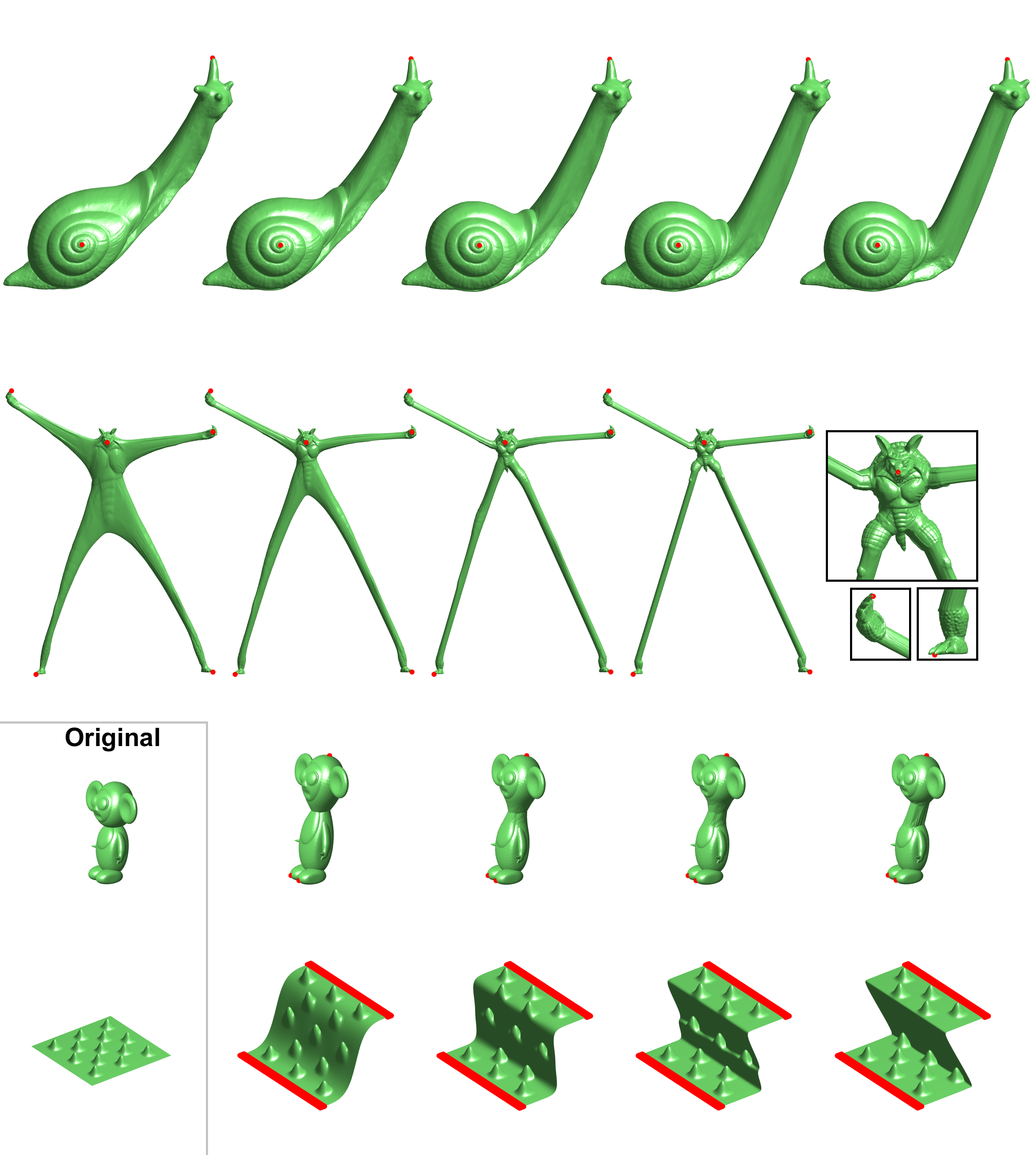}
\caption{Our TV regularization process of the displacement field in the shape deformation setting for Snail, Armadillo, Cheburashka and Knubble models. The deformation constraints, which are all visible here, are marked with red dots. All cases show the tendency for a piecewise constant deformation field. The sets of constant deformation inherit their properties from the eigensets of Sec. \ref{sec: thry}, namely - their boundaries have small perimeters, e.g., neck and elbows. Note in the zoomed-in windows (Second row, most-right),  how the not-stretched parts remain intact.}\label{fig:deformation_process}

\end{figure}


\begin{figure} [htbp]
\includegraphics[width=0.5\textwidth]{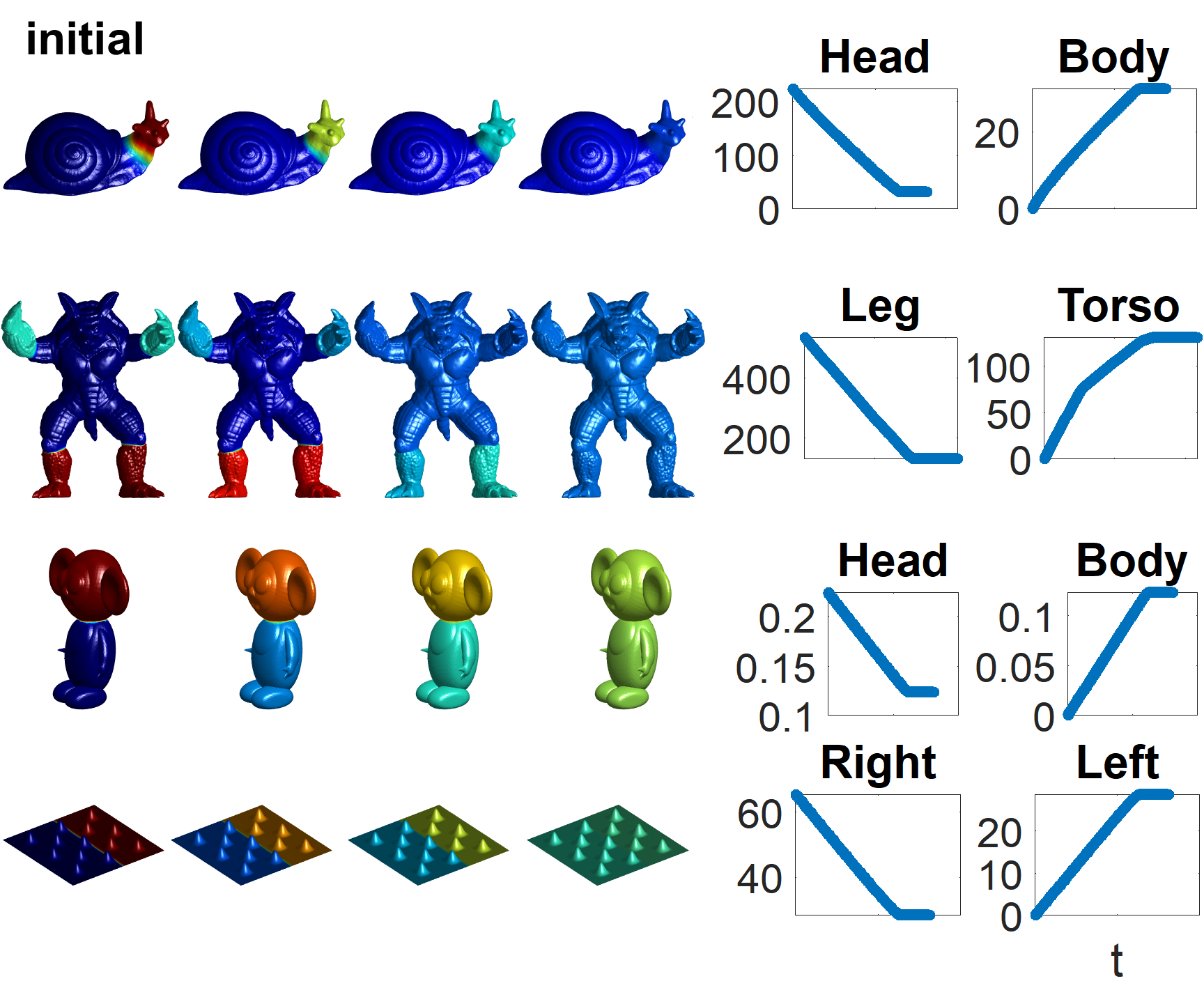}
\caption{ The magnitude of our resulting deformation field $d'$ tends to be constant on $NETV$ eigensets, as validated here numerically. Similarly to Figs. \ref{fig: spheres flow}, \ref{Toroid}, we check whether a set is an eigenset by testing whether it is a linearly decaying mode of the $NETV$ flow. Left Column: After obtaining the deformation (right column in Fig. \ref{fig:deformation_process}), we calculate the magnitude of the deformation field vectors, $\|d'\|_g$. We visualize these magnitudes using color, and it can be observed that they exhibit a piecewise-constant pattern.
Middle Columns: An $NETV$ minimizing flow is applied to the magnitudes $\|d'\|_g$. The sets of constant value remain intact until they completely decay. 
Right Columns: Here the numerical values of the sets are displayed throughout the flow. A linear decay can be observed - implying that these are indeed eigensets.
Observations: 
1) In the snail case, the initial $\|d'\|_g$ slightly deviates from the typical crisp boundaries observed in other shapes. Nonetheless, clearer boundaries appear early-on in the flow,  and the linear decay quickly becomes similar to the other shapes. 2) 
The decay of the Armadillo's torso is different since it exhibits two distinct rates. This can be explained by our theoretical findings: One rate happens before the arm sets coincide with the torso set, and another after. Following the coincidence of the arms and torso sets, the area increases while the perimeter decreases due to the disappearance of the arm perimeters. This leads to a reduction in the eigenvalue $\lambda$ according to Eq. \eqref{eq: eigenvalue perimeter area}, resulting in a decrease in the decay slope - as evident in the observed behavior and as anticipated by Eq. \eqref{eq:eig_decay}.
}
\label{fig:stretch_flow}
\end{figure}

\begin{figure} [htbp]
\includegraphics[width=0.5\textwidth]{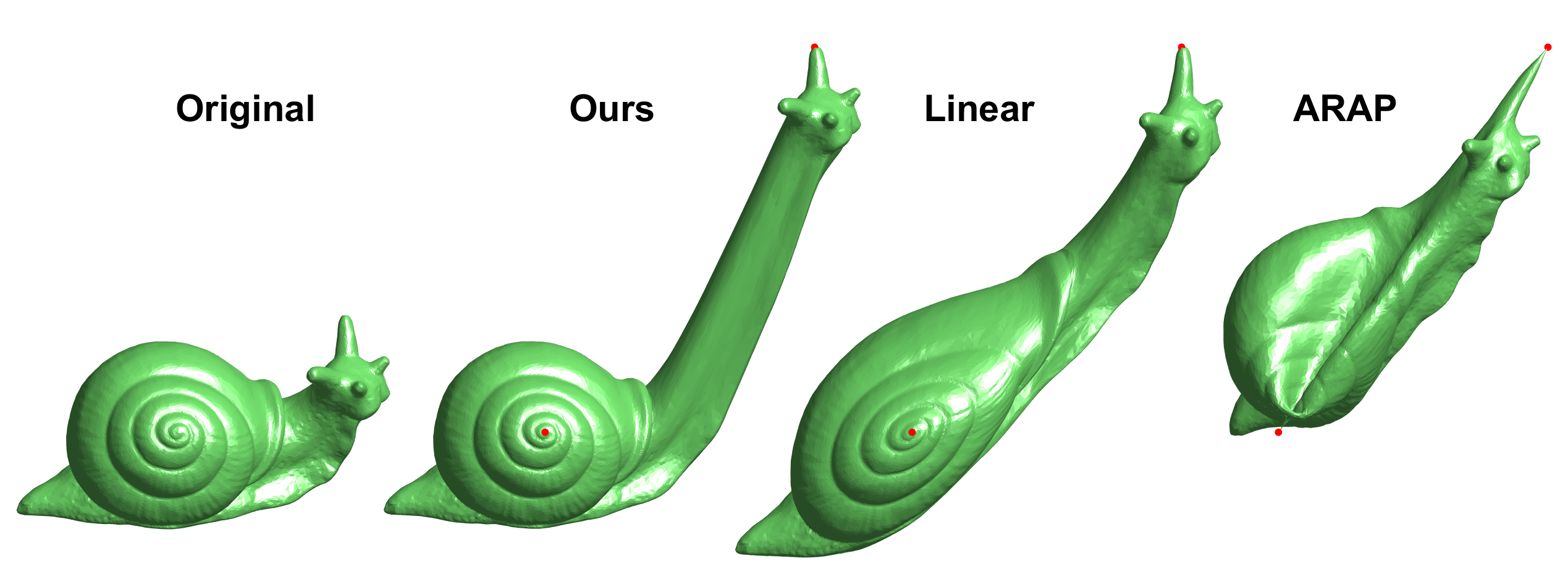}
\caption{A shape-deformation method comparison for the Snail\protect\footnotemark model, considering the same constraints in all methods. Upon using our method for shape deformation with total-variation, the stretch is concentrated on boundaries of plausible shape segments . In contrast -  linear methods stretch the shape globally, and As-Rigid-As-Possible methods (ARAP) globally penalize the local stretch. Other comparisons are available at Fig. \ref{fig:additional_stretch}.}\label{fig:snail}
\end{figure}

\section{Discussion}
\subsection{Non-differentiable shapes: A limitation of our theory}
Our theoretical findings rely on the notion of "good" metric spaces \cite{miranda2003functions}. In the context of our shape processing applications, these metrics are defined by the specific shapes being processed. However, within the domain of computer graphics, the shapes encountered can often exhibit high non-differentiability, which may even be enhanced by discretization. Consequently, our assumptions may not be suitable in such cases. While works concentrating on such considerations where conducted for the Laplacian operator (see for instance \cite{wardetzky2007discrete}, \cite{sharp2020laplacian}), this remains for future work regarding the operators used in this paper.

\begin{figure} [htbp]
\includegraphics[width=0.5\textwidth]{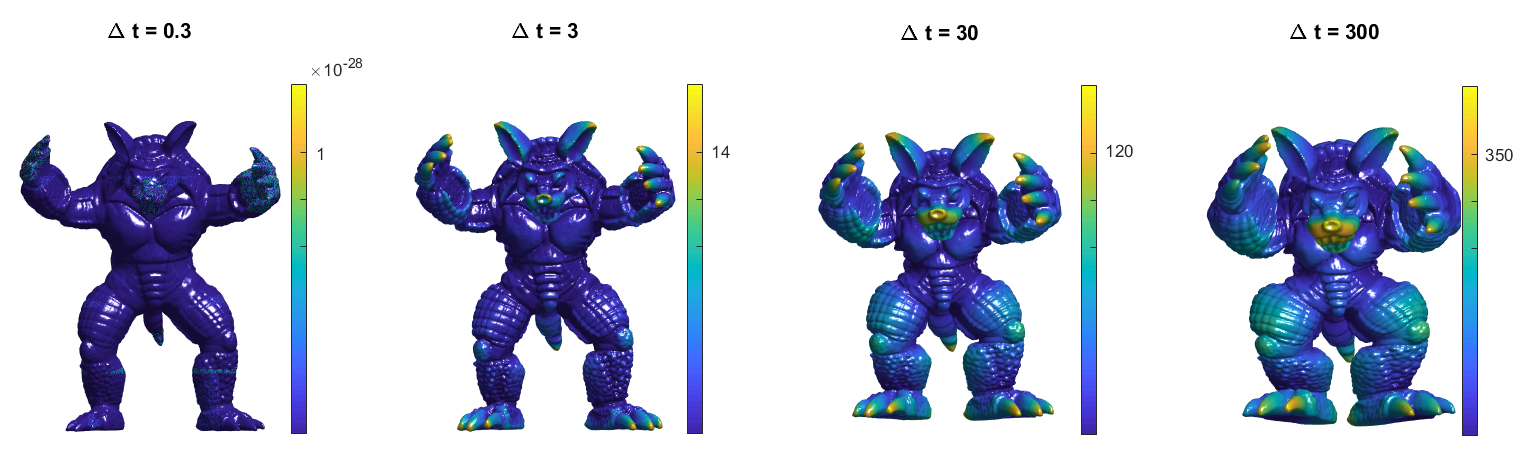}

\caption{Time step size of the discrete flow is an important parameter, controlling a trade off between flow accuracy, and execution complexity. An all-pass filter should perform reconstruction (in all our methods), however too large time steps result with inaccurate flow, which yields inaccurate spectral representations and reconstruction. Here we see all-pass filters of the Armadillo model using increasingly bigger time steps ($M1$). Reconstruction error is shown in color per-vertex, and is larger with time step size, as expected.}
\label{fig: rec err}
\end{figure}

\begin{figure} [htbp]
\includegraphics[width=0.45\textwidth]{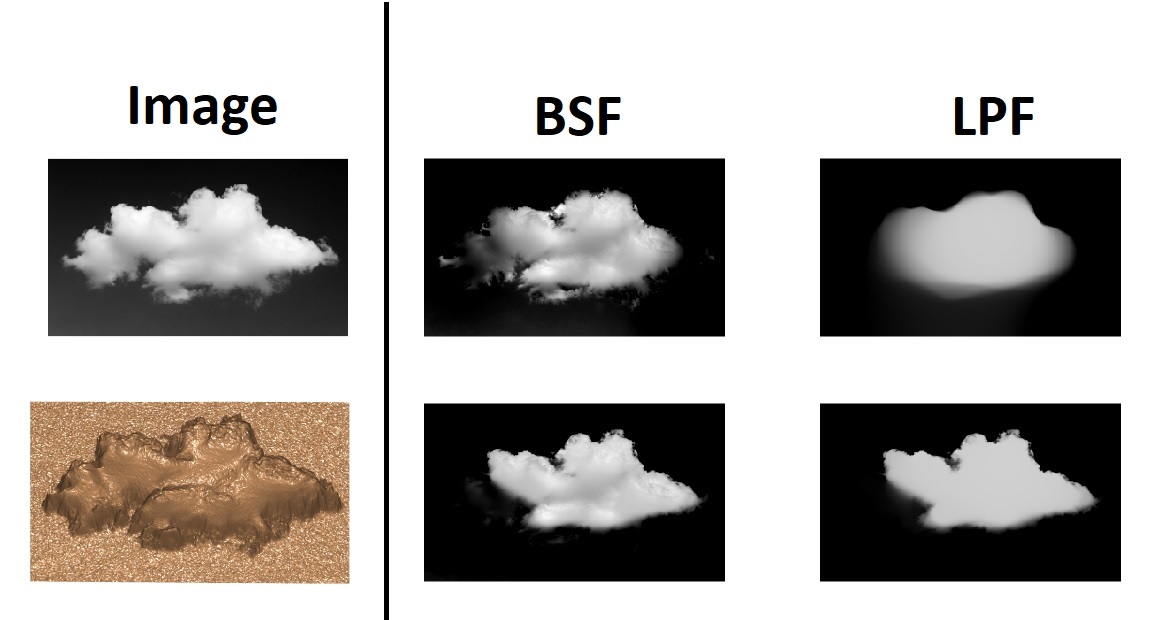}
\caption{Image Spectral $TV$ (upper row) and Spectral $ATV$ (bottom row) filtering, implemented as special cases of $M3$. Unlike the usual Spectral $ATV$, here the metric is induced by an explicit surface given by the image function (bottom left). Left-most column: Low-pass filter (LPF), demonstrating the difference between the "convexifying" effect of Euclidean spectral $TV$ compared to the minimal perimeter effect of the non-Euclidean spectral $NETV$ on surfaces. Middle column: Band stop filter (BSF), namely high-pass details where added to the LPF, resulting with a natural looking image}
\label{cloud_im}
\end{figure}

\begin{figure*} [htbp]
\includegraphics[width=\textwidth]{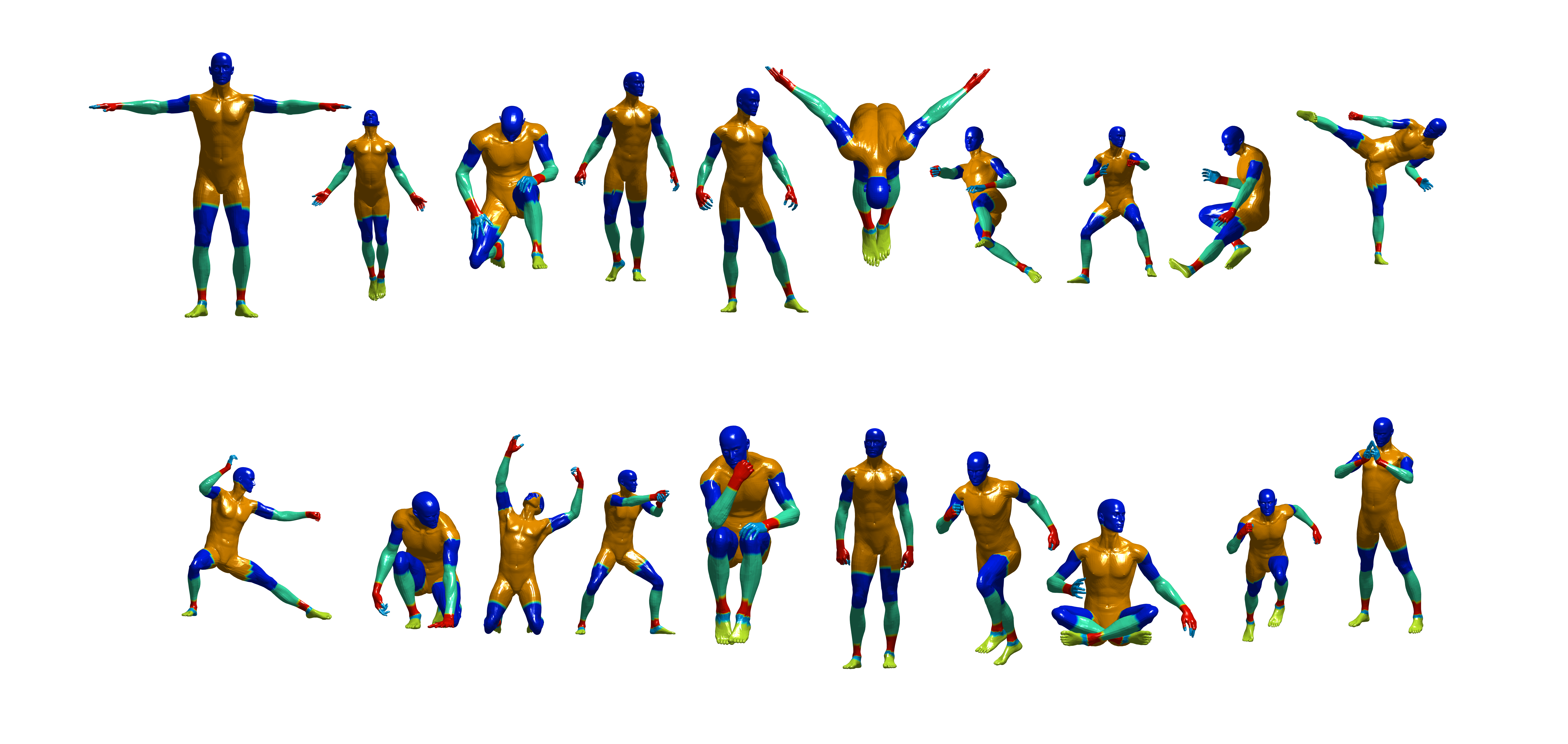}
\caption{Representations that enbale good feature separating filters are informative, hence may play a role in other tasks. Here we demonstrate an initial attempt at using $M2$'s spectral representation for co-segmentation of the Michael models in various  poses: Each point (vertex) on each pose (mesh) is assigned the representation $\phi(t)|_{q \in M}$. For co-segmentation we partitioned representation space using all representations from all poses - yielding segments that are shared across poses. Space partitioning was done via the k-means algorithm, using earth-movers distance instead of Euclidean distance. Earth-movers is a more appropriate measure since it is better at quantifying perturbations along the $t$ dimension.}
\label{michaels_seg}
\end{figure*}

\footnotetext{\href{https://www.thingiverse.com/thing:1700233}{Snail model}, courtesy of \href{https://3d-mon.com/}{3Demon} team. Licensed under the \href{https://creativecommons.org/licenses/by-sa/3.0/}{Creative Commons - Attribution - Share Alike}.}

\subsection{Flow-induced spectral representation considerations}
Our filtering framework is flow-based, where spectral representations are manifested as linear decaying components. Shape analysis is performed along the time domain. In contrast, the Laplace-Beltrami eigenfunctions are acquired by solving an eigenvalue problem on the non-Euclidean domain. One advantage of the flow-based framework is computational: The numerical simulation of a shape flow is often computationally cheaper than a numerical solution to an eigenvalue problem performed on the shape domain (e.g. solving the SVD of a discrete Laplace-Beltrami operator) - as demonstrated in Fig. \ref{armadil0 filtering}. This results with better filtering complexity. This is opposed to the Euclidean case - where eigenvalue decomposition of the Laplacian is not needed, since it is fixed and known (Fourier basis). Simulating a flow requires discrete time steps, resulting with a trade-off between computational complexity and simulation accuracy, controlled by the time step size. This is true for Euclidean and non-Euclidean settings alike. To measure the accuracy of the flow, we can test the reconstruction error of an all pass filter, i.e. $H(t)=1\forall t$, as demonstrated in Fig. \ref{fig: rec err}. Remark: Our semi-implicit implementation of the flows exhibits stability for large time-steps, even when they result with inaccuracies. This is attributed to the time-steps of both the filtering flows and the deformation flows being approximated as solutions to stable L2 minimization problems (similar stability may be found for instance in \cite{kazhdan2012can}).


Another consideration arises when comparing two representations, $\phi_1(t), \phi_2(t)$. The usual $L_2$ distance may not be compatible on time domains, and even more-so on discretized time domains. This is since $L_2$ does not express well the difference between small and large shifts in time. For example, consider an eigenfunction - which was shown to have a representation $\phi(t)=\delta(t-\frac{1}{\lambda})$, where $\lambda$ is the eigenvalue. Thus the $L_2$ measure will not be able to discriminate between eigenfunctions with similar eigenvalues to eigenfunctions with largely different eigenvalues. 
Thus appropriate distance measure must differentiate small from large time perturbations. One such measure is the earth movers' distance. An example using this distance on $\phi$ is portrayed in Fig. \ref{michaels_seg}, where co-segmentation takes place.

\subsection{Spectral image TV and shape spectral TV relation}
Our shape spectral $TV$ methods require a zero-homogeneous flow performed on a fixed metric. Considering the Beltrami and TV flow equivalence presented in \cite{kimmel1998image} (see a brief reminder of this equivalence in the appendix,  Sec. \ref{app:flow_equiv}), we have that applying spectral TV to images is a form of shape spectral TV. Let us have a closer look at this statement: To transition from MCF to the equivalent TV flow, the flow was rephrased on a fixed metric (the Euclidean pixel grid), which was absorbed as a nonlinearity of the operator, making it zero-homogeneous. Thus all requirements of the zero-homogeneous spectral framework were met.

With that said, this framework is more restrictive than our general framework in 3 ways: The operator suggested is one of a kind, the shape has to be parameterized as an image function, and the fixed metric is a Euclidean domain (the pixel-grid).

In $ATV$ the latter constraint is not required. As shown in \cite{biton2022adaptive}, $ATV$ can be re-interpreted using a generalization of the gradient , $\nabla_A f = A(x)\nabla f$. This gradient generalization can be obtained by considering the pixel grid as the $\Omega$ domain in Eq. \eqref{eq: patmeteric manifold}, and A as the metric $g$ from Eq. \eqref{eq: metric}, induced by some unspecified  $M$. While other aspects of $ATV$ do not coincide with the differential geometry framework we use here, it is certainly related to our work. Interestingly, in $ATV$ the importance of parameterization domain is greater than in our framework, as the signal lies in $\Omega$, and gradients are on  $\Omega$ as well, mapped from an unspecified non Euclidean domain. In contrast, our signal lies on an explicit manifold $M$.

$M3$ generalizes both spectral $TV$ and $ATV$  in the following sense: Considering the form $S=(x,y,f(x,y))$, choosing $\hat{S}=(x,y,0)$, we have that $\hat{d}=\hat{z}$, and $\alpha|S-\hat{S}|=(0, 0, f(x,y))$. Now consider two options - option 1: $\hat{S}$ induces $g$, resulting with a Euclidean flow of an image function. Option 2: $S$ induces $g$, resulting with a non Euclidean flow of an image function on an adapted metric. This is a form of $ATV$, but with a specified non-Euclidean surface domain. See example in Fig. \ref{cloud_im}.

\subsection{Future Ideas}
Our flow-based spectral framework can easily be adapted to a wide collection of operators that assume the required homogeneity and a fixed metric. Inevitably - neural-networks come to mind, where homogeneity can be taken care of using normalization layers.

Our new notion of non-Euclidean convexity, the locally minimal perimeter, might have an appropriate generalization to graphs - which are also non-Euclidean. This probably involves extending our theory from parametric surfaces to Riemannian manifolds of general dimensions.

The representations we use for filtering may be used for other tasks, such as classification and segmentation - see preliminary result in Fig. \ref{michaels_seg}.

Additonal key aspects from the Euclidean case may be generalized to our parametric surface setting, e.g. curvature bounds of eigensets, and analysis of vectorial functions.

\section{Summary}

We presented new nonlinear spectral theoretical analysis for surfaces, by generalizing nonlinear spectral theory of image processing. 
Based on our analysis, we proposed a general methodology for shape analysis and processing via nonlinear spectral filtering.

A key finding is our introduction of locally minimal perimeter sets, a novel generalization of conex sets to manifolds. It is derived by generalizing  properties of $NETV$ eigenfunctions. Our analysis is supported by numerical examples of minimizing flows, where numerical validation of eigenfunctions is performed by examining the decay near extinction, following the theory of \cite{bungert2020asymptotic}.

For shape nonlinear filtering our methods extract spectral representations from smoothing flows which satisfy two requirements: zero-homogeneity and a fixed metric. We choose to process the shape in its embedding space, providing unmediated nonlinear spectral representations, yielding good feature control. To showcase the general concept, three methods are proposed, where all three are based on the same mechanism, described in Eqs. \eqref{eq: phi}, \eqref{eq: nonlinear filtering}. Each method holds clear distinct properties induced by its flow, allowing various shape manipulations via spectral filtering. While possessing visibly distinct properties, all three methods demonstrate good smoothing and detail enhancement capabilities. Robustness to pose variations is demonstrated as well. With respect to processing time, we note that these methods are fairly fast, as they do not require solving an eigen-problem explicitly. Additionally, we present a Total-Variation approach for addressing the shape deformation problem. Our experiments show, that the deformation using our method is concentrated on plausible segment boundaries. Moreover - we have shown for several numerical cases, that these boundaries relate to our theoretical findings. \footnote{Models: Bust of Queen Nefertiti. Ägyptisches Museum und Papyrussammlung; Meteorite scan, courtesy of Jeremy Davis, Department of Art and Design, Central Michigan University; Trigon art; Stanford armadillo and poses by Belyaev, Yoshizawa, Seidel (2006);  Michaels from \cite{bronstein2008numerical}; various models from LIRIS database, Knubbel, Snail created by Thingiverse user 3D-mon, Cheburashka by Ilya Baran}



\bibliographystyle{ACM-Reference-Format}
\bibliography{sample-base}


\begin{thebibliography}{73}


\ifx \showCODEN    \undefined \def \showCODEN     #1{\unskip}     \fi
\ifx \showDOI      \undefined \def \showDOI       #1{#1}\fi
\ifx \showISBNx    \undefined \def \showISBNx     #1{\unskip}     \fi
\ifx \showISBNxiii \undefined \def \showISBNxiii  #1{\unskip}     \fi
\ifx \showISSN     \undefined \def \showISSN      #1{\unskip}     \fi
\ifx \showLCCN     \undefined \def \showLCCN      #1{\unskip}     \fi
\ifx \shownote     \undefined \def \shownote      #1{#1}          \fi
\ifx \showarticletitle \undefined \def \showarticletitle #1{#1}   \fi
\ifx \showURL      \undefined \def \showURL       {\relax}        \fi
\providecommand\bibfield[2]{#2}
\providecommand\bibinfo[2]{#2}
\providecommand\natexlab[1]{#1}
\providecommand\showeprint[2][]{arXiv:#2}

\bibitem[Aflalo et~al\mbox{.}(2013)]%
        {aflalo2013scale}
\bibfield{author}{\bibinfo{person}{Yonathan Aflalo}, \bibinfo{person}{Ron Kimmel}, {and} \bibinfo{person}{Dan Raviv}.} \bibinfo{year}{2013}\natexlab{}.
\newblock \showarticletitle{Scale invariant geometry for nonrigid shapes}.
\newblock \bibinfo{journal}{\emph{SIAM Journal on Imaging Sciences}} \bibinfo{volume}{6}, \bibinfo{number}{3} (\bibinfo{year}{2013}), \bibinfo{pages}{1579--1597}.
\newblock


\bibitem[Andreu et~al\mbox{.}(2001a)]%
        {tvFlowAndrea2001}
\bibfield{author}{\bibinfo{person}{F. Andreu}, \bibinfo{person}{C. Ballester}, \bibinfo{person}{V. Caselles}, {and} \bibinfo{person}{J.~M. Maz{\'o}n}.} \bibinfo{year}{2001}\natexlab{a}.
\newblock \showarticletitle{Minimizing total variation flow}.
\newblock \bibinfo{journal}{\emph{Differential and Integral Equations}} \bibinfo{volume}{14}, \bibinfo{number}{3} (\bibinfo{year}{2001}), \bibinfo{pages}{321--360}.
\newblock


\bibitem[Andreu et~al\mbox{.}(2001b)]%
        {andreu2001minimizing}
\bibfield{author}{\bibinfo{person}{Fuensanta Andreu}, \bibinfo{person}{Coloma Ballester}, \bibinfo{person}{Vicent Caselles}, {and} \bibinfo{person}{Jos{\'e}~M Maz{\'o}n}.} \bibinfo{year}{2001}\natexlab{b}.
\newblock \showarticletitle{Minimizing total variation flow}.
\newblock  (\bibinfo{year}{2001}).
\newblock


\bibitem[Aujol et~al\mbox{.}(2006)]%
        {aujol2006structure}
\bibfield{author}{\bibinfo{person}{Jean-Fran{\c{c}}ois Aujol}, \bibinfo{person}{Guy Gilboa}, \bibinfo{person}{Tony Chan}, {and} \bibinfo{person}{Stanley Osher}.} \bibinfo{year}{2006}\natexlab{}.
\newblock \showarticletitle{Structure-texture image decomposition—modeling, algorithms, and parameter selection}.
\newblock \bibinfo{journal}{\emph{International journal of computer vision}}  \bibinfo{volume}{67} (\bibinfo{year}{2006}), \bibinfo{pages}{111--136}.
\newblock


\bibitem[Bellettini et~al\mbox{.}(2002)]%
        {bellettini2002total}
\bibfield{author}{\bibinfo{person}{G. Bellettini}, \bibinfo{person}{V. Caselles}, {and} \bibinfo{person}{M. Novaga}.} \bibinfo{year}{2002}\natexlab{}.
\newblock \showarticletitle{The Total Variation Flow in $\mathbb{R}^N$}.
\newblock \bibinfo{journal}{\emph{Journal of Differential Equations}} \bibinfo{volume}{184}, \bibinfo{number}{2} (\bibinfo{year}{2002}), \bibinfo{pages}{475--525}.
\newblock


\bibitem[Ben-Artzi and LeFloch(2007)]%
        {ben2007well}
\bibfield{author}{\bibinfo{person}{Matania Ben-Artzi} {and} \bibinfo{person}{Philippe~G LeFloch}.} \bibinfo{year}{2007}\natexlab{}.
\newblock \showarticletitle{Well-posedness theory for geometry-compatible hyperbolic conservation laws on manifolds}. In \bibinfo{booktitle}{\emph{Annales de l'Institut Henri Poincar{\'e} C, Analyse non lin{\'e}aire}}, Vol.~\bibinfo{volume}{24}. Elsevier, \bibinfo{pages}{989--1008}.
\newblock


\bibitem[Benning et~al\mbox{.}(2017)]%
        {BenningFusion2017}
\bibfield{author}{\bibinfo{person}{M. Benning}, \bibinfo{person}{M. M{\"o}ller}, \bibinfo{person}{R.Z. Nossek}, \bibinfo{person}{M. Burger}, \bibinfo{person}{D. Cremers}, \bibinfo{person}{G. Gilboa}, {and} \bibinfo{person}{C.-B. Sch{\"o}nlieb}.} \bibinfo{year}{2017}\natexlab{}.
\newblock \showarticletitle{Nonlinear Spectral Image Fusion}. In \bibinfo{booktitle}{\emph{Proceedings of the 6th International (SSVM'17)}}. \bibinfo{pages}{41--53}.
\newblock


\bibitem[Biton and Gilboa(2022)]%
        {biton2022adaptive}
\bibfield{author}{\bibinfo{person}{Shai Biton} {and} \bibinfo{person}{Guy Gilboa}.} \bibinfo{year}{2022}\natexlab{}.
\newblock \showarticletitle{Adaptive Anisotropic Total Variation: Analysis and Experimental Findings of Nonlinear Spectral Properties}.
\newblock \bibinfo{journal}{\emph{Journal of Mathematical Imaging and Vision}} (\bibinfo{year}{2022}), \bibinfo{pages}{1--23}.
\newblock


\bibitem[Blatt(2009)]%
        {blatt2009singular}
\bibfield{author}{\bibinfo{person}{Simon Blatt}.} \bibinfo{year}{2009}\natexlab{}.
\newblock \showarticletitle{A singular example for the Willmore flow}.
\newblock  (\bibinfo{year}{2009}).
\newblock


\bibitem[Botsch and Sorkine(2007)]%
        {botsch2007linear}
\bibfield{author}{\bibinfo{person}{Mario Botsch} {and} \bibinfo{person}{Olga Sorkine}.} \bibinfo{year}{2007}\natexlab{}.
\newblock \showarticletitle{On linear variational surface deformation methods}.
\newblock \bibinfo{journal}{\emph{IEEE transactions on visualization and computer graphics}} \bibinfo{volume}{14}, \bibinfo{number}{1} (\bibinfo{year}{2007}), \bibinfo{pages}{213--230}.
\newblock


\bibitem[Bracha et~al\mbox{.}(2020)]%
        {bracha2020shape}
\bibfield{author}{\bibinfo{person}{Amit Bracha}, \bibinfo{person}{Oshri Halim}, {and} \bibinfo{person}{Ron Kimmel}.} \bibinfo{year}{2020}\natexlab{}.
\newblock \showarticletitle{Shape correspondence by aligning scale-invariant LBO eigenfunctions}.
\newblock  (\bibinfo{year}{2020}).
\newblock


\bibitem[Bronstein et~al\mbox{.}(2016)]%
        {bronstein2016consistent}
\bibfield{author}{\bibinfo{person}{Alex Bronstein}, \bibinfo{person}{Yoni Choukroun}, \bibinfo{person}{Ron Kimmel}, {and} \bibinfo{person}{Matan Sela}.} \bibinfo{year}{2016}\natexlab{}.
\newblock \showarticletitle{Consistent discretization and minimization of the l1 norm on manifolds}. In \bibinfo{booktitle}{\emph{2016 Fourth International Conference on 3D Vision (3DV)}}. IEEE, \bibinfo{pages}{435--440}.
\newblock


\bibitem[Bronstein et~al\mbox{.}(2008)]%
        {bronstein2008numerical}
\bibfield{author}{\bibinfo{person}{Alexander~M Bronstein}, \bibinfo{person}{Michael~M Bronstein}, {and} \bibinfo{person}{Ron Kimmel}.} \bibinfo{year}{2008}\natexlab{}.
\newblock \bibinfo{booktitle}{\emph{Numerical geometry of non-rigid shapes}}.
\newblock \bibinfo{publisher}{Springer Science \& Business Media}.
\newblock


\bibitem[Bungert and Burger(2020)]%
        {bungert2020asymptotic}
\bibfield{author}{\bibinfo{person}{Leon Bungert} {and} \bibinfo{person}{Martin Burger}.} \bibinfo{year}{2020}\natexlab{}.
\newblock \showarticletitle{Asymptotic profiles of nonlinear homogeneous evolution equations of gradient flow type}.
\newblock \bibinfo{journal}{\emph{Journal of Evolution Equations}} \bibinfo{volume}{20}, \bibinfo{number}{3} (\bibinfo{year}{2020}), \bibinfo{pages}{1061--1092}.
\newblock


\bibitem[Bungert et~al\mbox{.}(2019)]%
        {bungert2019nonlinear}
\bibfield{author}{\bibinfo{person}{Leon Bungert}, \bibinfo{person}{Martin Burger}, \bibinfo{person}{Antonin Chambolle}, {and} \bibinfo{person}{Matteo Novaga}.} \bibinfo{year}{2019}\natexlab{}.
\newblock \showarticletitle{Nonlinear spectral decompositions by gradient flows of one-homogeneous functionals}.
\newblock \bibinfo{journal}{\emph{Analysis \& PDE}} (\bibinfo{year}{2019}).
\newblock


\bibitem[Burger et~al\mbox{.}(2016)]%
        {burger2016spectral}
\bibfield{author}{\bibinfo{person}{Martin Burger}, \bibinfo{person}{Guy Gilboa}, \bibinfo{person}{Michael Moeller}, \bibinfo{person}{Lina Eckardt}, {and} \bibinfo{person}{Daniel Cremers}.} \bibinfo{year}{2016}\natexlab{}.
\newblock \showarticletitle{Spectral decompositions using one-homogeneous functionals}.
\newblock \bibinfo{journal}{\emph{SIAM Journal on Imaging Sciences}} \bibinfo{volume}{9}, \bibinfo{number}{3} (\bibinfo{year}{2016}), \bibinfo{pages}{1374--1408}.
\newblock


\bibitem[Burger et~al\mbox{.}(2006)]%
        {iss}
\bibfield{author}{\bibinfo{person}{M. Burger}, \bibinfo{person}{G. Gilboa}, \bibinfo{person}{S. Osher}, {and} \bibinfo{person}{J. Xu}.} \bibinfo{year}{2006}\natexlab{}.
\newblock \showarticletitle{Nonlinear inverse scale space methods}.
\newblock \bibinfo{journal}{\emph{Communications in Mathematical Sciences}} \bibinfo{volume}{4}, \bibinfo{number}{1} (\bibinfo{year}{2006}), \bibinfo{pages}{179--212}.
\newblock


\bibitem[Burger and Osher(2013a)]%
        {burger2013guide}
\bibfield{author}{\bibinfo{person}{Martin Burger} {and} \bibinfo{person}{Stanley Osher}.} \bibinfo{year}{2013}\natexlab{a}.
\newblock \showarticletitle{A guide to the TV zoo}.
\newblock In \bibinfo{booktitle}{\emph{Level set and PDE based reconstruction methods in imaging}}. \bibinfo{publisher}{Springer}, \bibinfo{pages}{1--70}.
\newblock


\bibitem[Burger and Osher(2013b)]%
        {BurgerOsher2013_TV_Zoo}
\bibfield{author}{\bibinfo{person}{Martin Burger} {and} \bibinfo{person}{Stanley Osher}.} \bibinfo{year}{2013}\natexlab{b}.
\newblock \showarticletitle{A guide to the TV zoo}.
\newblock In \bibinfo{booktitle}{\emph{Level set and PDE based reconstruction methods in imaging}}. \bibinfo{publisher}{Springer}, \bibinfo{pages}{1--70}.
\newblock


\bibitem[Cammarasana and Patan{\'e}(2021)]%
        {cammarasana2021localised}
\bibfield{author}{\bibinfo{person}{Simone Cammarasana} {and} \bibinfo{person}{Giuseppe Patan{\'e}}.} \bibinfo{year}{2021}\natexlab{}.
\newblock \showarticletitle{Localised and shape-aware functions for spectral geometry processing and shape analysis: A survey \& perspectives}.
\newblock \bibinfo{journal}{\emph{Computers \& Graphics}}  \bibinfo{volume}{97} (\bibinfo{year}{2021}), \bibinfo{pages}{1--18}.
\newblock


\bibitem[Chambolle(2004)]%
        {chambolle2004algorithm}
\bibfield{author}{\bibinfo{person}{Antonin Chambolle}.} \bibinfo{year}{2004}\natexlab{}.
\newblock \showarticletitle{An algorithm for total variation minimization and applications}.
\newblock \bibinfo{journal}{\emph{Journal of Mathematical imaging and vision}} \bibinfo{volume}{20}, \bibinfo{number}{1} (\bibinfo{year}{2004}), \bibinfo{pages}{89--97}.
\newblock


\bibitem[Chambolle et~al\mbox{.}(2010)]%
        {A2TV_Properties_chambolle2010introduction}
\bibfield{author}{\bibinfo{person}{Antonin Chambolle}, \bibinfo{person}{Vicent Caselles}, \bibinfo{person}{Daniel Cremers}, \bibinfo{person}{Matteo Novaga}, {and} \bibinfo{person}{Thomas Pock}.} \bibinfo{year}{2010}\natexlab{}.
\newblock \showarticletitle{An introduction to total variation for image analysis}.
\newblock \bibinfo{journal}{\emph{Theoretical foundations and numerical methods for sparse recovery}} \bibinfo{volume}{9}, \bibinfo{number}{263-340} (\bibinfo{year}{2010}), \bibinfo{pages}{227}.
\newblock


\bibitem[Chambolle and Pock(2011)]%
        {chambolle2011first}
\bibfield{author}{\bibinfo{person}{Antonin Chambolle} {and} \bibinfo{person}{Thomas Pock}.} \bibinfo{year}{2011}\natexlab{}.
\newblock \showarticletitle{A first-order primal-dual algorithm for convex problems with applications to imaging}.
\newblock \bibinfo{journal}{\emph{Journal of mathematical imaging and vision}} \bibinfo{volume}{40}, \bibinfo{number}{1} (\bibinfo{year}{2011}), \bibinfo{pages}{120--145}.
\newblock


\bibitem[Cignoni et~al\mbox{.}(2005)]%
        {cignoni2005simple}
\bibfield{author}{\bibinfo{person}{Paolo Cignoni}, \bibinfo{person}{Roberto Scopigno}, {and} \bibinfo{person}{Marco Tarini}.} \bibinfo{year}{2005}\natexlab{}.
\newblock \showarticletitle{A simple normal enhancement technique for interactive non-photorealistic renderings}.
\newblock \bibinfo{journal}{\emph{Computers \& Graphics}} \bibinfo{volume}{29}, \bibinfo{number}{1} (\bibinfo{year}{2005}), \bibinfo{pages}{125--133}.
\newblock


\bibitem[Cohen and Gilboa(2020)]%
        {cohen2020introducing}
\bibfield{author}{\bibinfo{person}{Ido Cohen} {and} \bibinfo{person}{Guy Gilboa}.} \bibinfo{year}{2020}\natexlab{}.
\newblock \showarticletitle{Introducing the p-Laplacian spectra}.
\newblock \bibinfo{journal}{\emph{Signal Processing}}  \bibinfo{volume}{167} (\bibinfo{year}{2020}), \bibinfo{pages}{107281}.
\newblock


\bibitem[Crane et~al\mbox{.}(2013)]%
        {crane2013robust}
\bibfield{author}{\bibinfo{person}{Keenan Crane}, \bibinfo{person}{Ulrich Pinkall}, {and} \bibinfo{person}{Peter Schr{\"o}der}.} \bibinfo{year}{2013}\natexlab{}.
\newblock \showarticletitle{Robust fairing via conformal curvature flow}.
\newblock \bibinfo{journal}{\emph{ACM Transactions on Graphics (TOG)}} \bibinfo{volume}{32}, \bibinfo{number}{4} (\bibinfo{year}{2013}), \bibinfo{pages}{1--10}.
\newblock


\bibitem[Desbrun et~al\mbox{.}(1999)]%
        {desbrun1999implicit}
\bibfield{author}{\bibinfo{person}{Mathieu Desbrun}, \bibinfo{person}{Mark Meyer}, \bibinfo{person}{Peter Schr{\"o}der}, {and} \bibinfo{person}{Alan~H Barr}.} \bibinfo{year}{1999}\natexlab{}.
\newblock \showarticletitle{Implicit fairing of irregular meshes using diffusion and curvature flow}. In \bibinfo{booktitle}{\emph{Proceedings of the 26th annual conference on Computer graphics and interactive techniques}}. \bibinfo{pages}{317--324}.
\newblock


\bibitem[Digne(2012)]%
        {digne2012similarity}
\bibfield{author}{\bibinfo{person}{Julie Digne}.} \bibinfo{year}{2012}\natexlab{}.
\newblock \showarticletitle{Similarity based filtering of point clouds}. In \bibinfo{booktitle}{\emph{2012 IEEE computer society conference on computer vision and pattern recognition workshops}}. IEEE, \bibinfo{pages}{73--79}.
\newblock


\bibitem[Dinesh et~al\mbox{.}(2019)]%
        {dinesh20193d}
\bibfield{author}{\bibinfo{person}{Chinthaka Dinesh}, \bibinfo{person}{Gene Cheung}, {and} \bibinfo{person}{Ivan~V Baji{\'c}}.} \bibinfo{year}{2019}\natexlab{}.
\newblock \showarticletitle{3D point cloud super-resolution via graph total variation on surface normals}. In \bibinfo{booktitle}{\emph{2019 IEEE International Conference on Image Processing (ICIP)}}. IEEE, \bibinfo{pages}{4390--4394}.
\newblock


\bibitem[Dinesh et~al\mbox{.}(2020)]%
        {dinesh2020super}
\bibfield{author}{\bibinfo{person}{Chinthaka Dinesh}, \bibinfo{person}{Gene Cheung}, {and} \bibinfo{person}{Ivan~V Baji{\'c}}.} \bibinfo{year}{2020}\natexlab{}.
\newblock \showarticletitle{Super-resolution of 3D color point clouds via fast graph total variation}. In \bibinfo{booktitle}{\emph{ICASSP 2020-2020 IEEE International Conference on Acoustics, Speech and Signal Processing (ICASSP)}}. IEEE, \bibinfo{pages}{1983--1987}.
\newblock


\bibitem[Do~Carmo(2016)]%
        {do2016differential}
\bibfield{author}{\bibinfo{person}{Manfredo~P Do~Carmo}.} \bibinfo{year}{2016}\natexlab{}.
\newblock \bibinfo{booktitle}{\emph{Differential geometry of curves and surfaces: revised and updated second edition}}.
\newblock \bibinfo{publisher}{Courier Dover Publications}.
\newblock


\bibitem[Duan et~al\mbox{.}(2019)]%
        {duan2019noise}
\bibfield{author}{\bibinfo{person}{Puhong Duan}, \bibinfo{person}{Xudong Kang}, \bibinfo{person}{Shutao Li}, {and} \bibinfo{person}{Pedram Ghamisi}.} \bibinfo{year}{2019}\natexlab{}.
\newblock \showarticletitle{Noise-robust hyperspectral image classification via multi-scale total variation}.
\newblock \bibinfo{journal}{\emph{IEEE Journal of Selected Topics in Applied Earth Observations and Remote Sensing}} \bibinfo{volume}{12}, \bibinfo{number}{6} (\bibinfo{year}{2019}), \bibinfo{pages}{1948--1962}.
\newblock


\bibitem[Elmoataz et~al\mbox{.}(2008)]%
        {elmoataz2008nonlocal}
\bibfield{author}{\bibinfo{person}{Abderrahim Elmoataz}, \bibinfo{person}{Olivier Lezoray}, {and} \bibinfo{person}{S{\'e}bastien Bougleux}.} \bibinfo{year}{2008}\natexlab{}.
\newblock \showarticletitle{Nonlocal discrete regularization on weighted graphs: a framework for image and manifold processing}.
\newblock \bibinfo{journal}{\emph{IEEE transactions on Image Processing}} \bibinfo{volume}{17}, \bibinfo{number}{7} (\bibinfo{year}{2008}), \bibinfo{pages}{1047--1060}.
\newblock


\bibitem[Fumero et~al\mbox{.}(2020)]%
        {fumero2020nonlinear}
\bibfield{author}{\bibinfo{person}{Marco Fumero}, \bibinfo{person}{Michael M{\"o}ller}, {and} \bibinfo{person}{Emanuele Rodol{\`a}}.} \bibinfo{year}{2020}\natexlab{}.
\newblock \showarticletitle{Nonlinear spectral geometry processing via the tv transform}.
\newblock \bibinfo{journal}{\emph{ACM Transactions on Graphics (TOG)}} \bibinfo{volume}{39}, \bibinfo{number}{6} (\bibinfo{year}{2020}), \bibinfo{pages}{1--16}.
\newblock


\bibitem[Gallot et~al\mbox{.}(1990)]%
        {gallot1990riemannian}
\bibfield{author}{\bibinfo{person}{Sylvestre Gallot}, \bibinfo{person}{Dominique Hulin}, {and} \bibinfo{person}{Jacques Lafontaine}.} \bibinfo{year}{1990}\natexlab{}.
\newblock \bibinfo{booktitle}{\emph{Riemannian geometry}}. Vol.~\bibinfo{volume}{2}.
\newblock \bibinfo{publisher}{Springer}.
\newblock


\bibitem[Gallot et~al\mbox{.}(2004)]%
        {gallot2004differential}
\bibfield{author}{\bibinfo{person}{Sylvestre Gallot}, \bibinfo{person}{Dominique Hulin}, {and} \bibinfo{person}{Jacques Lafontaine}.} \bibinfo{year}{2004}\natexlab{}.
\newblock \showarticletitle{Differential Manifolds}.
\newblock In \bibinfo{booktitle}{\emph{Riemannian Geometry}}. \bibinfo{publisher}{Springer}, \bibinfo{pages}{1--49}.
\newblock


\bibitem[Gilboa(2013)]%
        {gilboa2013spectral}
\bibfield{author}{\bibinfo{person}{Guy Gilboa}.} \bibinfo{year}{2013}\natexlab{}.
\newblock \showarticletitle{A spectral approach to total variation}. In \bibinfo{booktitle}{\emph{Int. Conf. on Scale Space and Variational Methods in Computer Vision}}. Springer, \bibinfo{pages}{36--47}.
\newblock


\bibitem[Gilboa(2014a)]%
        {gilboa2014total:30}
\bibfield{author}{\bibinfo{person}{Guy Gilboa}.} \bibinfo{year}{2014}\natexlab{a}.
\newblock \showarticletitle{A total variation spectral framework for scale and texture analysis}.
\newblock \bibinfo{journal}{\emph{SIAM journal on Imaging Sciences}} \bibinfo{volume}{7}, \bibinfo{number}{4} (\bibinfo{year}{2014}), \bibinfo{pages}{1937--1961}.
\newblock


\bibitem[Gilboa(2014b)]%
        {Gilboa_spectv_SIAM_2014}
\bibfield{author}{\bibinfo{person}{G. Gilboa}.} \bibinfo{year}{2014}\natexlab{b}.
\newblock \showarticletitle{A total variation spectral framework for scale and texture analysis}.
\newblock \bibinfo{journal}{\emph{SIAM Journal on Imaging Sciences}} \bibinfo{volume}{7}, \bibinfo{number}{4} (\bibinfo{year}{2014}), \bibinfo{pages}{1937--1961}.
\newblock


\bibitem[Gilboa(2018)]%
        {gilboa2018book}
\bibfield{author}{\bibinfo{person}{Guy Gilboa}.} \bibinfo{year}{2018}\natexlab{}.
\newblock \bibinfo{booktitle}{\emph{Nonlinear Eigenproblems in Image Processing and Computer Vision}}.
\newblock \bibinfo{publisher}{Springer}.
\newblock


\bibitem[Grasmair and Lenzen(2010)]%
        {grasmair2010anisotropic}
\bibfield{author}{\bibinfo{person}{Markus Grasmair} {and} \bibinfo{person}{Frank Lenzen}.} \bibinfo{year}{2010}\natexlab{}.
\newblock \showarticletitle{Anisotropic total variation filtering}.
\newblock \bibinfo{journal}{\emph{Applied Mathematics \& Optimization}} \bibinfo{volume}{62}, \bibinfo{number}{3} (\bibinfo{year}{2010}), \bibinfo{pages}{323--339}.
\newblock


\bibitem[G{\"u}neysu and Pallara(2015)]%
        {guneysu2015functions}
\bibfield{author}{\bibinfo{person}{Batu G{\"u}neysu} {and} \bibinfo{person}{Diego Pallara}.} \bibinfo{year}{2015}\natexlab{}.
\newblock \showarticletitle{Functions with bounded variation on a class of Riemannian manifolds with Ricci curvature unbounded from below}.
\newblock \bibinfo{journal}{\emph{Math. Ann.}} \bibinfo{volume}{363}, \bibinfo{number}{3} (\bibinfo{year}{2015}), \bibinfo{pages}{1307--1331}.
\newblock


\bibitem[Hait and Gilboa(2019)]%
        {hait2019spectral}
\bibfield{author}{\bibinfo{person}{Ester Hait} {and} \bibinfo{person}{Guy Gilboa}.} \bibinfo{year}{2019}\natexlab{}.
\newblock \showarticletitle{Spectral total-variation local scale signatures for image manipulation and fusion}.
\newblock \bibinfo{journal}{\emph{IEEE Trans. Image Processing}} \bibinfo{volume}{28}, \bibinfo{number}{2} (\bibinfo{year}{2019}), \bibinfo{pages}{880--895}.
\newblock


\bibitem[Huisken(1990)]%
        {huisken1990asymptotic}
\bibfield{author}{\bibinfo{person}{Gerhard Huisken}.} \bibinfo{year}{1990}\natexlab{}.
\newblock \showarticletitle{Asymptotic behavior for singularities of the mean curvature flow}.
\newblock \bibinfo{journal}{\emph{Journal of Differential Geometry}} \bibinfo{volume}{31}, \bibinfo{number}{1} (\bibinfo{year}{1990}), \bibinfo{pages}{285--299}.
\newblock


\bibitem[Kazhdan et~al\mbox{.}(2012)]%
        {kazhdan2012can}
\bibfield{author}{\bibinfo{person}{Michael Kazhdan}, \bibinfo{person}{Jake Solomon}, {and} \bibinfo{person}{Mirela Ben-Chen}.} \bibinfo{year}{2012}\natexlab{}.
\newblock \showarticletitle{Can mean-curvature flow be modified to be non-singular?}. In \bibinfo{booktitle}{\emph{Computer Graphics Forum}}, Vol.~\bibinfo{volume}{31}. Wiley Online Library, \bibinfo{pages}{1745--1754}.
\newblock


\bibitem[Kerautret and Lachaud(2020)]%
        {kerautret2020geometric}
\bibfield{author}{\bibinfo{person}{Bertrand Kerautret} {and} \bibinfo{person}{Jacques-Olivier Lachaud}.} \bibinfo{year}{2020}\natexlab{}.
\newblock \showarticletitle{Geometric Total Variation for Image Vectorization, Zooming and Pixel Art Depixelizing}. In \bibinfo{booktitle}{\emph{Pattern Recognition: 5th Asian Conference, ACPR 2019, Auckland, New Zealand, November 26--29, 2019, Revised Selected Papers, Part I 5}}. Springer, \bibinfo{pages}{391--405}.
\newblock


\bibitem[Kimmel et~al\mbox{.}(2000)]%
        {kimmel2000images}
\bibfield{author}{\bibinfo{person}{Kimmel}, \bibinfo{person}{Malladi}, {and} \bibinfo{person}{Sochen}.} \bibinfo{year}{2000}\natexlab{}.
\newblock \showarticletitle{Images as embedded maps and minimal surfaces: movies, color, texture, and volumetric medical images}.
\newblock \bibinfo{journal}{\emph{International Journal of Computer Vision}} \bibinfo{volume}{39}, \bibinfo{number}{2} (\bibinfo{year}{2000}), \bibinfo{pages}{111--129}.
\newblock


\bibitem[Kimmel et~al\mbox{.}(1998)]%
        {kimmel1998image}
\bibfield{author}{\bibinfo{person}{Ron Kimmel}, \bibinfo{person}{Ravi Malladi}, {and} \bibinfo{person}{N Sochen}.} \bibinfo{year}{1998}\natexlab{}.
\newblock \showarticletitle{Image processing via the Beltrami operator}. In \bibinfo{booktitle}{\emph{Asian Conference on Computer Vision}}. Springer, \bibinfo{pages}{574--581}.
\newblock


\bibitem[Lee(2013)]%
        {lee2013smooth}
\bibfield{author}{\bibinfo{person}{John~M Lee}.} \bibinfo{year}{2013}\natexlab{}.
\newblock \showarticletitle{Smooth manifolds}.
\newblock In \bibinfo{booktitle}{\emph{Introduction to Smooth Manifolds}}. \bibinfo{publisher}{Springer}, \bibinfo{pages}{1--31}.
\newblock


\bibitem[Leng et~al\mbox{.}(2021)]%
        {leng2021total}
\bibfield{author}{\bibinfo{person}{Chengcai Leng}, \bibinfo{person}{Hai Zhang}, \bibinfo{person}{Guorong Cai}, \bibinfo{person}{Zhen Chen}, {and} \bibinfo{person}{Anup Basu}.} \bibinfo{year}{2021}\natexlab{}.
\newblock \showarticletitle{Total variation constrained non-negative matrix factorization for medical image registration}.
\newblock \bibinfo{journal}{\emph{IEEE/CAA Journal of Automatica Sinica}} \bibinfo{volume}{8}, \bibinfo{number}{5} (\bibinfo{year}{2021}), \bibinfo{pages}{1025--1037}.
\newblock


\bibitem[Litany et~al\mbox{.}(2017)]%
        {litany2017deep}
\bibfield{author}{\bibinfo{person}{Or Litany}, \bibinfo{person}{Tal Remez}, \bibinfo{person}{Emanuele Rodola}, \bibinfo{person}{Alex Bronstein}, {and} \bibinfo{person}{Michael Bronstein}.} \bibinfo{year}{2017}\natexlab{}.
\newblock \showarticletitle{Deep functional maps: Structured prediction for dense shape correspondence}. In \bibinfo{booktitle}{\emph{Proceedings of the IEEE international conference on computer vision}}. \bibinfo{pages}{5659--5667}.
\newblock


\bibitem[Liu et~al\mbox{.}(2017)]%
        {liu2017dirac}
\bibfield{author}{\bibinfo{person}{Hsueh-Ti~Derek Liu}, \bibinfo{person}{Alec Jacobson}, {and} \bibinfo{person}{Keenan Crane}.} \bibinfo{year}{2017}\natexlab{}.
\newblock \showarticletitle{A Dirac operator for extrinsic shape analysis}. In \bibinfo{booktitle}{\emph{Computer Graphics Forum}}, Vol.~\bibinfo{volume}{36}. Wiley Online Library, \bibinfo{pages}{139--149}.
\newblock


\bibitem[Miranda~Jr(2003)]%
        {miranda2003functions}
\bibfield{author}{\bibinfo{person}{Michele Miranda~Jr}.} \bibinfo{year}{2003}\natexlab{}.
\newblock \showarticletitle{Functions of bounded variation on “good” metric spaces}.
\newblock \bibinfo{journal}{\emph{Journal de math{\'e}matiques pures et appliqu{\'e}es}} \bibinfo{volume}{82}, \bibinfo{number}{8} (\bibinfo{year}{2003}), \bibinfo{pages}{975--1004}.
\newblock


\bibitem[Parikh et~al\mbox{.}(2014)]%
        {parikh2014proximal}
\bibfield{author}{\bibinfo{person}{Neal Parikh}, \bibinfo{person}{Stephen Boyd}, {et~al\mbox{.}}} \bibinfo{year}{2014}\natexlab{}.
\newblock \showarticletitle{Proximal algorithms}.
\newblock \bibinfo{journal}{\emph{Foundations and trends{\textregistered} in Optimization}} \bibinfo{volume}{1}, \bibinfo{number}{3} (\bibinfo{year}{2014}), \bibinfo{pages}{127--239}.
\newblock


\bibitem[Pascal et~al\mbox{.}(2021)]%
        {pascal2021automated}
\bibfield{author}{\bibinfo{person}{Barbara Pascal}, \bibinfo{person}{Samuel Vaiter}, \bibinfo{person}{Nelly Pustelnik}, {and} \bibinfo{person}{Patrice Abry}.} \bibinfo{year}{2021}\natexlab{}.
\newblock \showarticletitle{Automated data-driven selection of the hyperparameters for total-variation-based texture segmentation}.
\newblock \bibinfo{journal}{\emph{Journal of Mathematical Imaging and Vision}} \bibinfo{volume}{63}, \bibinfo{number}{7} (\bibinfo{year}{2021}), \bibinfo{pages}{923--952}.
\newblock


\bibitem[Rudin et~al\mbox{.}(1992)]%
        {rof92}
\bibfield{author}{\bibinfo{person}{L. Rudin}, \bibinfo{person}{S. Osher}, {and} \bibinfo{person}{E. Fatemi}.} \bibinfo{year}{1992}\natexlab{}.
\newblock \showarticletitle{Nonlinear total variation based noise removal algorithms}.
\newblock \bibinfo{journal}{\emph{Physica D}}  \bibinfo{volume}{60} (\bibinfo{year}{1992}), \bibinfo{pages}{259--268}.
\newblock


\bibitem[Sawant and Prabukumar(2020)]%
        {sawant2020review}
\bibfield{author}{\bibinfo{person}{Shrutika~S Sawant} {and} \bibinfo{person}{Manoharan Prabukumar}.} \bibinfo{year}{2020}\natexlab{}.
\newblock \showarticletitle{A review on graph-based semi-supervised learning methods for hyperspectral image classification}.
\newblock \bibinfo{journal}{\emph{The Egyptian Journal of Remote Sensing and Space Science}} \bibinfo{volume}{23}, \bibinfo{number}{2} (\bibinfo{year}{2020}), \bibinfo{pages}{243--248}.
\newblock


\bibitem[Sela et~al\mbox{.}(2015)]%
        {sela2015computational}
\bibfield{author}{\bibinfo{person}{Matan Sela}, \bibinfo{person}{Yonathan Aflalo}, {and} \bibinfo{person}{Ron Kimmel}.} \bibinfo{year}{2015}\natexlab{}.
\newblock \showarticletitle{Computational caricaturization of surfaces}.
\newblock \bibinfo{journal}{\emph{Computer Vision and Image Understanding}}  \bibinfo{volume}{141} (\bibinfo{year}{2015}), \bibinfo{pages}{1--17}.
\newblock


\bibitem[Sharp and Crane(2020)]%
        {sharp2020laplacian}
\bibfield{author}{\bibinfo{person}{Nicholas Sharp} {and} \bibinfo{person}{Keenan Crane}.} \bibinfo{year}{2020}\natexlab{}.
\newblock \showarticletitle{A laplacian for nonmanifold triangle meshes}. In \bibinfo{booktitle}{\emph{Computer Graphics Forum}}, Vol.~\bibinfo{volume}{39}. Wiley Online Library, \bibinfo{pages}{69--80}.
\newblock


\bibitem[Sorkine and Alexa(2007)]%
        {sorkine2007rigid}
\bibfield{author}{\bibinfo{person}{Olga Sorkine} {and} \bibinfo{person}{Marc Alexa}.} \bibinfo{year}{2007}\natexlab{}.
\newblock \showarticletitle{As-rigid-as-possible surface modeling}. In \bibinfo{booktitle}{\emph{Symposium on Geometry processing}}, Vol.~\bibinfo{volume}{4}. \bibinfo{pages}{109--116}.
\newblock


\bibitem[Sorkine and Botsch(2009)]%
        {sorkine2009interactive}
\bibfield{author}{\bibinfo{person}{Olga Sorkine} {and} \bibinfo{person}{Mario Botsch}.} \bibinfo{year}{2009}\natexlab{}.
\newblock \showarticletitle{Interactive Shape Modeling and Deformation.}. In \bibinfo{booktitle}{\emph{Eurographics (Tutorials)}}. \bibinfo{pages}{11--37}.
\newblock


\bibitem[Sorkine et~al\mbox{.}(2004)]%
        {sorkine2004laplacian}
\bibfield{author}{\bibinfo{person}{Olga Sorkine}, \bibinfo{person}{Daniel Cohen-Or}, \bibinfo{person}{Yaron Lipman}, \bibinfo{person}{Marc Alexa}, \bibinfo{person}{Christian R{\"o}ssl}, {and} \bibinfo{person}{H-P Seidel}.} \bibinfo{year}{2004}\natexlab{}.
\newblock \showarticletitle{Laplacian surface editing}. In \bibinfo{booktitle}{\emph{Proceedings of the 2004 Eurographics/ACM SIGGRAPH symposium on Geometry processing}}. \bibinfo{pages}{175--184}.
\newblock


\bibitem[Spivak(2018)]%
        {spivak2018calculus}
\bibfield{author}{\bibinfo{person}{Michael Spivak}.} \bibinfo{year}{2018}\natexlab{}.
\newblock \bibinfo{booktitle}{\emph{Calculus on manifolds: a modern approach to classical theorems of advanced calculus}}.
\newblock \bibinfo{publisher}{CRC press}.
\newblock


\bibitem[Taubin(1995)]%
        {taubin1995signal}
\bibfield{author}{\bibinfo{person}{Gabriel Taubin}.} \bibinfo{year}{1995}\natexlab{}.
\newblock \showarticletitle{A signal processing approach to fair surface design}. In \bibinfo{booktitle}{\emph{Proceedings of 22nd annual conf. Computer graphics and techniques}}. \bibinfo{pages}{351--358}.
\newblock


\bibitem[Vallet and L{\'e}vy(2008)]%
        {vallet2008spectral}
\bibfield{author}{\bibinfo{person}{Bruno Vallet} {and} \bibinfo{person}{Bruno L{\'e}vy}.} \bibinfo{year}{2008}\natexlab{}.
\newblock \showarticletitle{Spectral geometry processing with manifold harmonics}. In \bibinfo{booktitle}{\emph{Computer Graphics Forum}}, Vol.~\bibinfo{volume}{27}. Wiley Online Library, \bibinfo{pages}{251--260}.
\newblock


\bibitem[Wardetzky et~al\mbox{.}(2007)]%
        {wardetzky2007discrete}
\bibfield{author}{\bibinfo{person}{Max Wardetzky}, \bibinfo{person}{Saurabh Mathur}, \bibinfo{person}{Felix K{\"a}lberer}, {and} \bibinfo{person}{Eitan Grinspun}.} \bibinfo{year}{2007}\natexlab{}.
\newblock \showarticletitle{Discrete Laplace operators: no free lunch}. In \bibinfo{booktitle}{\emph{Symposium on Geometry processing}}. Aire-la-Ville, Switzerland, \bibinfo{pages}{33--37}.
\newblock


\bibitem[Wei et~al\mbox{.}(2019)]%
        {wei2019regional}
\bibfield{author}{\bibinfo{person}{Wei Wei}, \bibinfo{person}{Bin Zhou}, \bibinfo{person}{Dawid Po{\l}ap}, {and} \bibinfo{person}{Marcin Wo{\'z}niak}.} \bibinfo{year}{2019}\natexlab{}.
\newblock \showarticletitle{A regional adaptive variational PDE model for computed tomography image reconstruction}.
\newblock \bibinfo{journal}{\emph{Pattern Recognition}}  \bibinfo{volume}{92} (\bibinfo{year}{2019}), \bibinfo{pages}{64--81}.
\newblock


\bibitem[Wetzler et~al\mbox{.}(2013)]%
        {wetzler2013laplace}
\bibfield{author}{\bibinfo{person}{Wetzler}, \bibinfo{person}{Aflalo}, \bibinfo{person}{Dubrovina}, {and} \bibinfo{person}{Kimmel}.} \bibinfo{year}{2013}\natexlab{}.
\newblock \showarticletitle{The Laplace-Beltrami operator: a ubiquitous tool for image and shape processing}. In \bibinfo{booktitle}{\emph{Int. Symp. on Mathematical Morphology and Applications to Signal and Image Processing}}. Springer, \bibinfo{pages}{302--316}.
\newblock


\bibitem[Yifan et~al\mbox{.}(2021)]%
        {yifan2021geometry}
\bibfield{author}{\bibinfo{person}{Wang Yifan}, \bibinfo{person}{Lukas Rahmann}, {and} \bibinfo{person}{Olga Sorkine-Hornung}.} \bibinfo{year}{2021}\natexlab{}.
\newblock \showarticletitle{Geometry-consistent neural shape representation with implicit displacement fields}.
\newblock \bibinfo{journal}{\emph{arXiv preprint arXiv:2106.05187}} (\bibinfo{year}{2021}).
\newblock


\bibitem[Zhang and Wang(2020)]%
        {zhang2020total}
\bibfield{author}{\bibinfo{person}{Huayan Zhang} {and} \bibinfo{person}{Chunxue Wang}.} \bibinfo{year}{2020}\natexlab{}.
\newblock \showarticletitle{Total variation diffusion and its application in shape decomposition}.
\newblock \bibinfo{journal}{\emph{Computers \& Graphics}}  \bibinfo{volume}{90} (\bibinfo{year}{2020}), \bibinfo{pages}{95--107}.
\newblock


\bibitem[Zhang et~al\mbox{.}(2015)]%
        {zhang2015variational}
\bibfield{author}{\bibinfo{person}{Huayan Zhang}, \bibinfo{person}{Chunlin Wu}, \bibinfo{person}{Juyong Zhang}, {and} \bibinfo{person}{Jiansong Deng}.} \bibinfo{year}{2015}\natexlab{}.
\newblock \showarticletitle{Variational mesh denoising using total variation and piecewise constant function space}.
\newblock \bibinfo{journal}{\emph{IEEE transactions on visualization and computer graphics}} \bibinfo{volume}{21}, \bibinfo{number}{7} (\bibinfo{year}{2015}), \bibinfo{pages}{873--886}.
\newblock


\bibitem[Zhang et~al\mbox{.}(2022)]%
        {zhang2022robust}
\bibfield{author}{\bibinfo{person}{Jianwei Zhang}, \bibinfo{person}{Jing Qi}, \bibinfo{person}{Zhaohui Zheng}, {and} \bibinfo{person}{Le Sun}.} \bibinfo{year}{2022}\natexlab{}.
\newblock \showarticletitle{A robust image segmentation framework based on total variation spectral transform}.
\newblock \bibinfo{journal}{\emph{Pattern Recognition Letters}}  \bibinfo{volume}{153} (\bibinfo{year}{2022}), \bibinfo{pages}{159--167}.
\newblock


\bibitem[Zhong et~al\mbox{.}(2018)]%
        {zhong2018mesh}
\bibfield{author}{\bibinfo{person}{Saishang Zhong}, \bibinfo{person}{Zhong Xie}, \bibinfo{person}{Weina Wang}, \bibinfo{person}{Zheng Liu}, {and} \bibinfo{person}{Ligang Liu}.} \bibinfo{year}{2018}\natexlab{}.
\newblock \showarticletitle{Mesh denoising via total variation and weighted Laplacian regularizations}.
\newblock \bibinfo{journal}{\emph{Computer Animation and Virtual Worlds}} \bibinfo{volume}{29}, \bibinfo{number}{3-4} (\bibinfo{year}{2018}), \bibinfo{pages}{e1827}.
\newblock


\end{thebibliography}

\appendix

\section{Complementary Experiments}

\begin{figure} [htbp]
\includegraphics[width=0.5\textwidth]{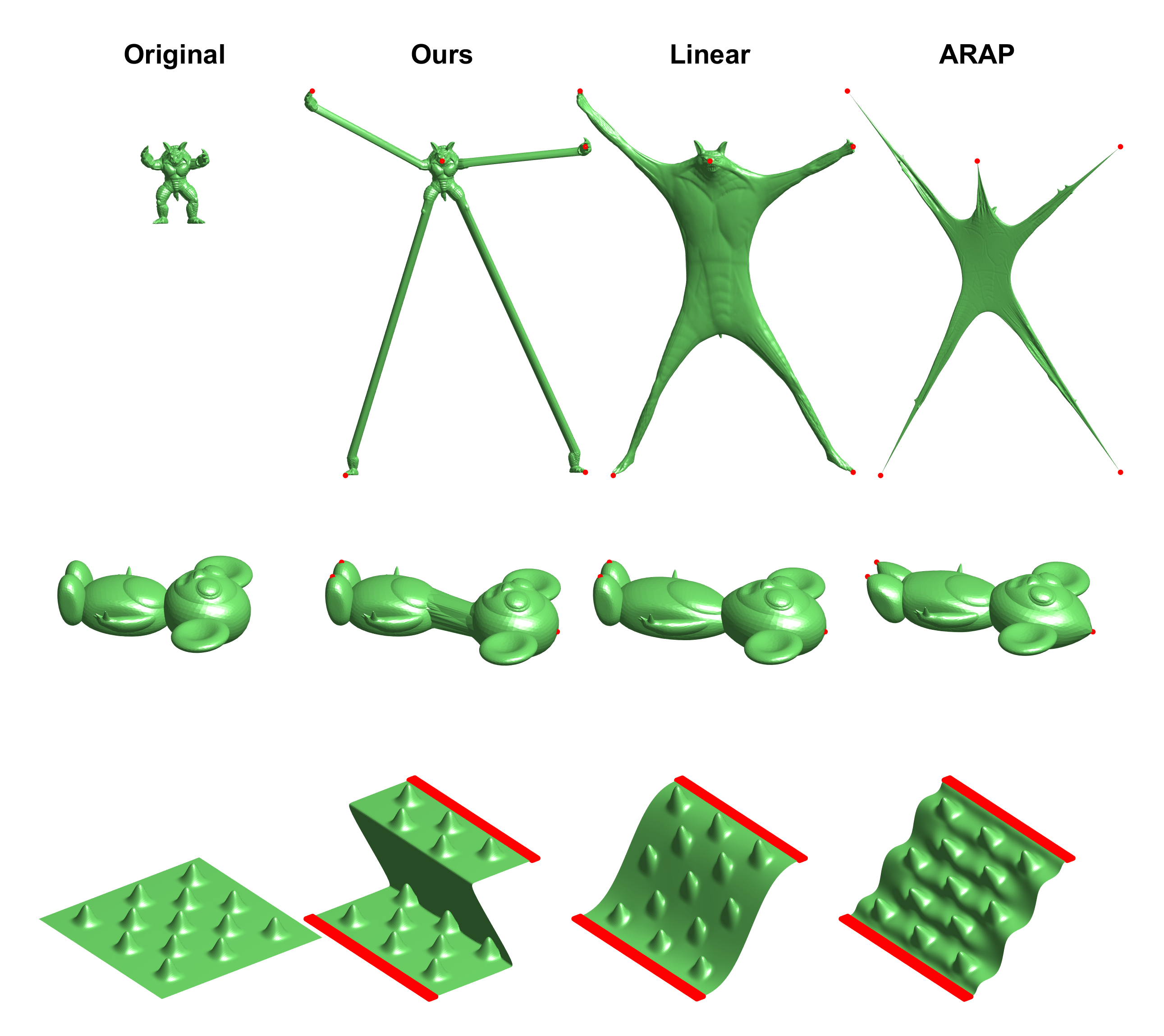}
\caption{The same experiment of Fig. \ref{fig:snail}, for the other shapes shown in Fig. \ref{fig:deformation_process}. As before the shape and deformation constraints are tested for the three methods. The same properties observed in Fig. \ref{fig:snail}can be seen here as well: The deformation field is piece-wise constant, and the deformation is concentrated on boundaries of plausible shape segments.}\label{fig:additional_stretch}
\end{figure}

Here we add experiments for completeness. Fig. \ref{sinc_z} bridges a gap beweeen Figs. \ref{sinc_set} and \ref{sinc}. Fig. \ref{fig:4_armadil_caricature} shows shape exaggeration similar to Fig. \ref{armadil0 filtering} on additional poses while not using smoothing. Fig. \ref{fig:additional_stretch} extends the method compariso performed for the Snail model in Fig. \ref{fig:snail} to three other models.
\begin{figure} [htbp]
\includegraphics[width=0.5\textwidth]{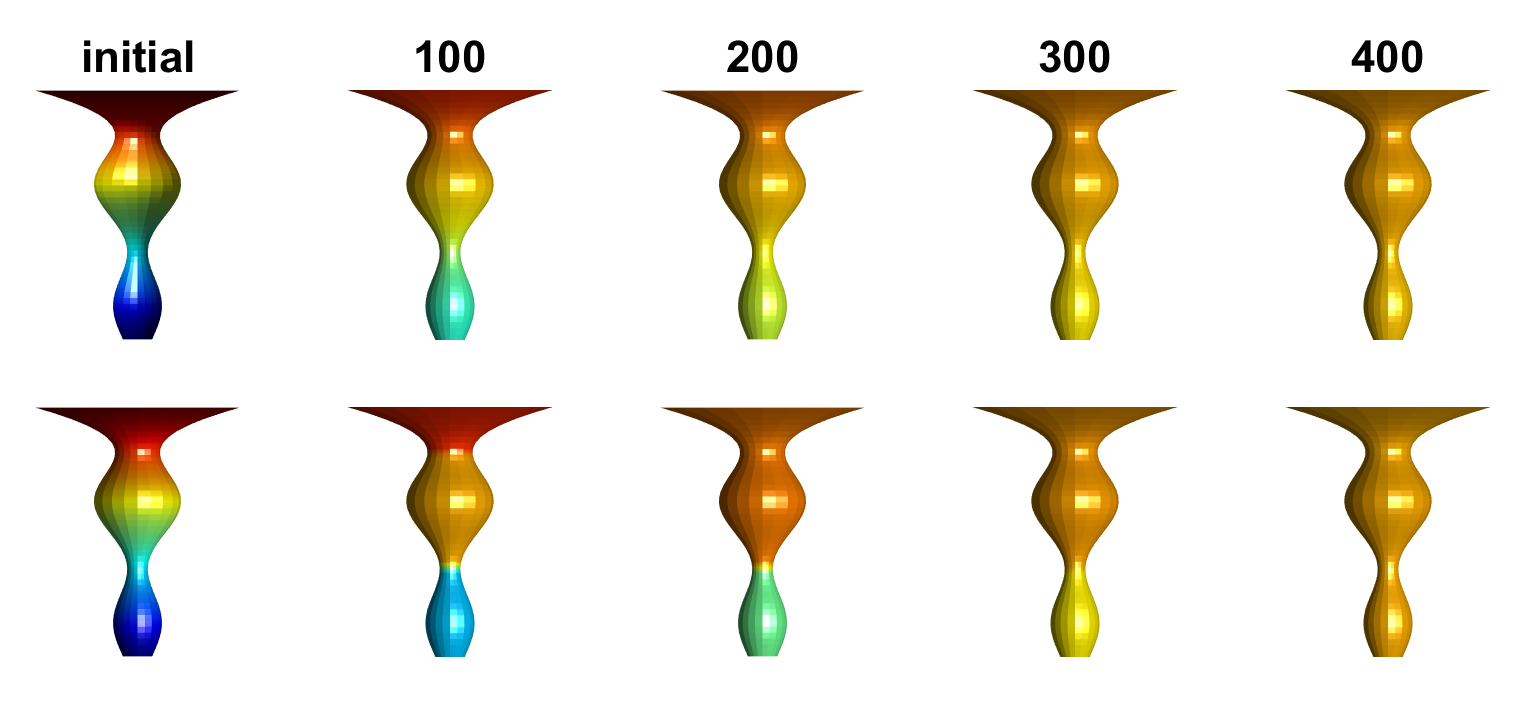}

\caption{A similar experiment to the one shown in Fig. \ref{sinc_set}, comparing linear diffusion (upper row) to the $NETV$ minimizing flow (bottom row). This experiment is performed using the  z-coordinate function. The resulting flow was used to induce the spectral representations filtered in  Fig. \ref{sinc}.}
\label{sinc_z}
\end{figure}

\begin{figure} [htbp]
\includegraphics[width=0.5\textwidth]{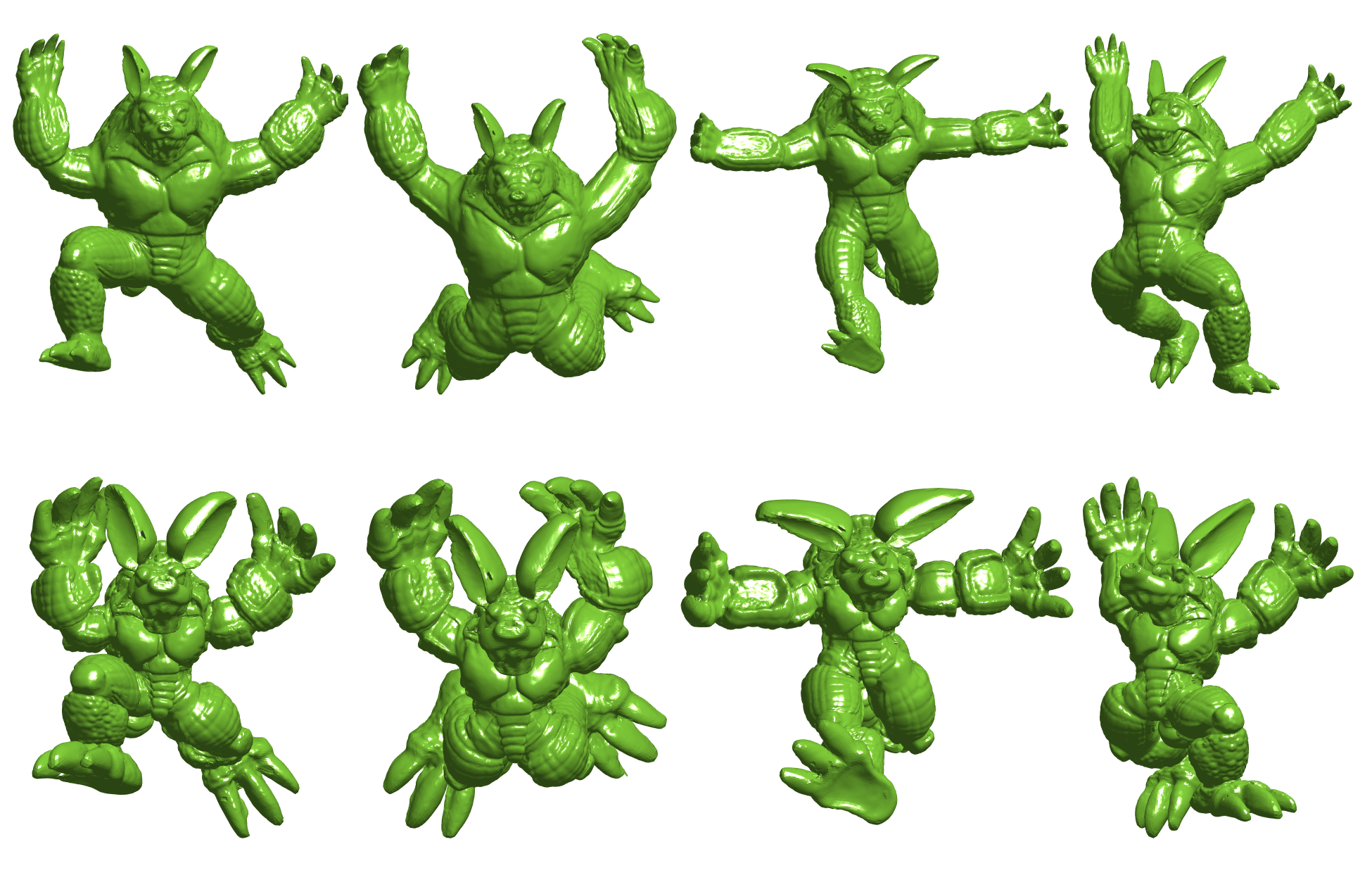}
\caption{Band-Pass exaggeration via $M3$ for the armadillo in various poses, where all poses experience the same filter. The band-pass exaggeration magnitude is identical to Fig. \ref{armadil0 filtering}. Unlike Fig. \ref{armadil0 filtering}, there is no smoothing of the high frequencies. The consistency across poses shows that our method can be considered as a charicaturization method by the properties posed in \cite{sela2015computational}.}
\label{fig:4_armadil_caricature}
\end{figure}


\section{Some proofs regarding Sec. \ref{sec: indicator theory}}\label{app:indicator_proofs}

\begin{claim*}
Let a field $z$ that is normal to the boundary of $C$ on (almost all) boundary points, and of norm less than or equal to one everywhere on $M$, i.e.
 \begin{equation}
 z=\tilde{n}^C \,\,for \,a.e.\, \omega_1, \omega_2 \in \partial \tilde{C},\,\,\,
 ||z||_g \leq 1 \,\forall \omega_1, \omega_2 \in \Omega,
 \end{equation}
where $a.e.$ stands for "almost every". Then $z$ admits the supremum of Eq. \eqref{eq: NETV of C} for $NETV(C)$\footnote{Requiring this almost everywhere on $\partial \tilde{C}$ is enough, as our proofs use $z$ under integration.}.
 \end{claim*}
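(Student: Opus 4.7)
The plan is to combine the parametric divergence theorem on manifolds (Eq. \eqref{eq: par div th}) with a Cauchy--Schwarz argument, in order to evaluate the integral $\int_{\tilde{C}} \nabla_g \cdot z \, da$ as a boundary integral and then maximize it over admissible fields. This reduces the variational supremum defining $NETV(C)$ to a pointwise computation along $\partial \tilde{C}$.

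The first step is to apply Eq. \eqref{eq: par div th} to any admissible field $z'$ with $\|z'\|_g \leq 1$ on $\Omega$ (and of compact support as required in Sec. \ref{sec: parametric settings}):
\begin{equation}
\int_{\tilde{C}} \nabla_g \cdot z' \, da = \int_{t_1}^{t_2} \langle z', \tilde{n}^C\rangle_g \, \|\tilde{\gamma}^C_t\|_g \, dt.
\end{equation}
Since $\tilde{n}^C$ corresponds via $J$ to a unit vector $n^C \in T_qM$, we have $\|\tilde{n}^C\|_g = 1$, and Cauchy--Schwarz with respect to the metric $g$ gives $\langle z', \tilde{n}^C\rangle_g \leq \|z'\|_g \cdot \|\tilde{n}^C\|_g \leq 1$ for almost every point on $\partial \tilde{C}$. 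Integrating and using Eq. \eqref{eq: boundary length} yields
\begin{equation}
\int_{\tilde{C}} \nabla_g \cdot z' \, da \leq \int_{t_1}^{t_2} \|\tilde{\gamma}^C_t\|_g \, dt = per(C).
\end{equation}

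The second step is to show the candidate $z$ in the claim attains this upper bound. Substituting $z = \tilde{n}^C$ a.e.\ on $\partial \tilde{C}$ into the same boundary integral, the inner product collapses to $\langle \tilde{n}^C, \tilde{n}^C\rangle_g = 1$ almost everywhere, and the boundary integral equals exactly $per(C)$. Thus $\int_{\tilde{C}} \nabla_g \cdot z \, da = per(C)$, matching the upper bound; together with the definition of $NETV(C)$ as a supremum, this proves both that $z$ admits the supremum and that $NETV(C) = per(C)$.

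The main obstacle is a regularity/extension issue: the candidate $z$ is specified only by its boundary behavior on $\partial \tilde{C}$, whereas the variational problem in Eq. \eqref{eq: NETV of C} requires $z$ to be a differentiable, compactly supported field on all of $\Omega$. I would resolve this by invoking a tubular neighborhood of $\partial \tilde{C}$ (using smoothness of the boundary, which is assumed) to extend $\tilde{n}^C$ smoothly to a neighborhood while preserving $\|z\|_g \leq 1$, and then multiply by a smooth cutoff that is $1$ on the boundary and vanishes outside a thin tube. Since the above Cauchy--Schwarz argument only needs the values of $z$ along $\partial \tilde{C}$, the precise extension inside $\tilde{C}$ (beyond the tube) does not affect the boundary integral, so the supremum is indeed attained in the limit of such extensions. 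A secondary technical point is that $\tilde{\chi}^C$ is not smooth, but Eq. \eqref{eq: NETV} was set up precisely to accommodate non-continuous $u$; the divergence theorem is applied to the smooth field $z$ and not to $\tilde{\chi}^C$, so no difficulty arises there.
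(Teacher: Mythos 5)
Your proof is correct and follows essentially the same route as the paper: both convert $\int_{\tilde{C}} \nabla_g \cdot z \, da$ into a boundary integral via the parametric divergence theorem, Eq. \eqref{eq: par div th}, and maximize the integrand pointwise under $\|z\|_g \leq 1$ (your Cauchy--Schwarz bound is exactly the paper's $\|z\|_g \cos\theta \leq 1$ argument), with the given $z=\tilde{n}^C$ attaining the bound and yielding $per(C)$. Your additional tubular-neighborhood extension remark addresses an existence point the paper treats only informally (and note any single such extension attains the supremum exactly, so no limit is needed), but the core argument is the same.
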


\begin{proof} 
Since $\tilde{\chi}^C$ is an indicator of $\tilde{C}$ we have
\begin{equation} \label{eq: proof one}
\int_\Omega \nabla_g \cdot z \tilde{\chi}^C \, da = \int_{\tilde{C}} \nabla_g \cdot z\, da.
\end{equation}
Using manifold divergence Thm., as stated in Eq.\eqref{eq: par div th}, we have
\begin{equation} \label{mid res 0}
 \int_\Omega \nabla_g \cdot z \tilde{\chi}^C \, da =\int_{t_1}^{t_2} \langle z, \tilde{n}^C \rangle_g||\tilde{\gamma}^C_t||_g\, dt,
\end{equation}
where $\tilde{\gamma}^C_t=\frac{d\tilde{\gamma}^C}{dt}$. For convenience - let us reformulate this as
\begin{equation} \label{eq: mid res}
 \int_\Omega \nabla_g \cdot z \tilde{\chi}^C \, da =\int_{t_1}^{t_2}||z||_gcos\theta||\tilde{\gamma}^C_t||_g\, dt,
\end{equation}
where $\theta = \sphericalangle_g(z, \tilde{n}^C)=cos^{-1}\frac{\langle \tilde{n}^C, \tilde{z}\rangle_g}{||\tilde{n}^C||_g||\tilde{z}||_g}$. Maximization under the constraint  $||z||_g\leq1$ is achieved for  $\theta=0$, and $||z||_g=1$ almost everywhere on the boundary, i.e.

\begin{equation} \label{eq: normals solution}
 z=\tilde{n}^C\,\,for\,a.e.\, \omega_1, \omega_2 \in \partial \tilde{C}\Rightarrow NETV(C) = \int_\Omega \nabla_g \cdot z \tilde{\chi}^C \, da.
\end{equation}
\end{proof}
Remark: The "almost all" condition allows robustness to a zero-measure subset of $\partial C$ in which boundary normals are not defined (namely points of non-differentiable $\partial C$). In such a case  $\tilde{n}^C$ may be extended to satisfy
\begin{equation} \label{eq: n tilde}
    \begin{cases}
      J \tilde{n}^C = n^C \,\forall \omega_1, \omega_2 \in \partial \tilde{C} & \text{if}\ n^C \,\text{exists}\\
      ||\tilde{n}^C||_g \leq 1 & \text{otherwise},
    \end{cases}
  \end{equation}
  while keeping the proof intact.
\begin{claim*}  

\begin{equation} \label{eq: indictr NETV}
NETV(C)=per(C).
\end{equation}
\end{claim*}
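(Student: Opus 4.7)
The plan is to combine the preceding claim, which identifies an optimal field $z$ for the supremum defining $NETV(C)$, with the parametric divergence theorem stated in Eq.~\eqref{eq: par div th} and the definition of perimeter given in Eq.~\eqref{eq: boundary length}. The chain of identifications should be short: optimize, apply divergence, recognize the perimeter integral.

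First, I would invoke the previous claim to select a specific field $z$ achieving the supremum in $NETV(C)=\sup_z \int_{\tilde{C}} \nabla_g \cdot z \, da$, namely any $z$ with $\|z\|_g \le 1$ on $\Omega$ and $z=\tilde{n}^C$ almost everywhere on $\partial \tilde{C}$ (such a $z$ exists because $\partial C$ is assumed smooth, so $\tilde{n}^C$ is well-defined a.e. and can be extended to all of $\Omega$ without increasing its $g$-norm beyond $1$, e.g.\ via a bump extension). By that claim we then have $NETV(C) = \int_\Omega \tilde{\chi}^C \nabla_g \cdot z \, da = \int_{\tilde{C}} \nabla_g \cdot z \, da$.

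Second, I would apply the parametric manifold divergence theorem, Eq.~\eqref{eq: par div th}, to the set $\tilde{C}$ with boundary traversed by $\tilde{\gamma}^C$: this rewrites the surface integral as a boundary integral $\int_{t_1}^{t_2}\langle z, \tilde{n}^C\rangle_g \|\tilde{\gamma}^C_t\|_g \, dt$. Because $z = \tilde{n}^C$ a.e.\ on $\partial \tilde{C}$ and $\|\tilde{n}^C\|_g = 1$ (it is a unit normal), the inner product $\langle z,\tilde{n}^C\rangle_g$ equals $1$ a.e.\ on the boundary curve, so the integral collapses to $\int_{t_1}^{t_2} \|\tilde{\gamma}^C_t\|_g \, dt$. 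This is precisely $per(C)$ by Eq.~\eqref{eq: boundary length}, finishing the proof.

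The only mildly delicate point is the a.e.\ nature of $\tilde{n}^C$: I would handle it exactly as in the remark after the previous claim, extending $\tilde{n}^C$ on the measure-zero exceptional set by the rule of Eq.~\eqref{eq: n tilde} so that the integrand is well-defined, with the exceptional set contributing zero to the boundary integral. No further obstacle is anticipated; the statement is essentially a corollary of the optimal-$z$ characterization already proven and the divergence theorem on manifolds.
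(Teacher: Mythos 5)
Your proof is correct and follows essentially the same route as the paper's: both rely on the preceding claim that the supremum is attained by a field with $z=\tilde{n}^C$ a.e.\ on $\partial\tilde{C}$, the parametric divergence theorem \eqref{eq: par div th}, and the identification of the resulting boundary integral with $per(C)$ via \eqref{eq: boundary length}. The only difference is the immaterial order of steps (the paper first passes to the boundary integral and then substitutes the optimal $z$, while you substitute first and then apply the divergence theorem), and your handling of the a.e.\ issue matches the paper's remark on extending $\tilde{n}^C$.
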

This claim can be found for instance in  \cite{fumero2020nonlinear}. Let us re-prove it in our setting:

\begin{proof}
By Eq. \eqref{mid res 0}  we have

\begin{equation}
NETV(C) = \sup_{z}\int_{t_1}^{t_2} \langle z, \tilde{n}^C \rangle_g||\tilde{\gamma}^C_t||_g\, dt	\,\, s.t. \,\,||z||_g \leq 1 \,\forall \omega_1, \omega_2 \in \Omega.
\end{equation}
By Eq. \eqref{eq: normals solution} we can obtain a supremum by assigning $z=\tilde{n}^C\,\forall \omega_1, \omega_2 \in \partial \tilde{C}$ and have
\begin{equation}\label{eq: NETV as perimeter}
NETV(C)= \int_{t_1}^{t_2} ||\tilde{n}^C||^2_g\,||\tilde{\gamma}^C_t||_g\, dt = \int_{t_1}^{t_2} ||\tilde{\gamma}^C_t||_g\, dt=per(C),
\end{equation}
where the second equality uses $||\tilde{n}^C||_g=1 \,\forall \omega_1, \omega_2$, and  the last equality comes from Def. \eqref{eq: boundary length}.
 \end{proof}

\section{Sleeve sets as eigensets on the Torus}
Here we show that sleeve sets of the torus are eigensets, considering $M$ to be a torus with big and small radiis $R$, $r$.
\subsection{Torus preliminaries}
We choose its parametric formulation $S$ as follows: $S(\omega_1, \omega_2) =$ $\left((R+rcos(\omega_1))cos(\omega_2), (R+rcos(\omega_1))sin(\omega_2), r\,sin(\omega_1)\right)^T$, where $\omega_1,\omega_2 \in \Omega$ and $\Omega= [-\pi,\pi) \times [-\pi,\pi)$, inducing a metric $g(\omega_1, \omega_2) = \left(\begin{matrix}
\left(R + r\,cos(\omega_1)\right)^2 & 0 \\
0 & r^2
\end{matrix}\right)$, resulting with 
\begin{equation}
\sqrt{|g|} =\left(R + r\,cos(\omega_1)\right)r .    
\end{equation}

Let $z$ be a field on the torus, then its squared norm function  is
\begin{equation}
||z||_g =\sqrt{ \left(R + r\,cos(\omega_1)\right)^2\, z[1] + r^2\,z[2]},
\end{equation}

where $z[1], \,z[2]$ are components of the field. Remembering the divergence formula
\begin{equation} \label{eq:ddiivv}
\nabla_g\cdot \tilde{F} = \frac{1}{\sqrt{|g|}}\nabla_{\omega_1, \omega_2}\cdot(\sqrt{|g|}\tilde{F}),
\end{equation}
(see for instance \cite{do2016differential}) we have for the torus:

\begin{equation}
\nabla_g \cdot z = \frac{1}{\left(R + r\,cos(\omega_1)\right) \cancel{r}}\nabla_{\omega_1,\omega_2} \cdot \left[\left(R + r\,cos(\omega_1)\right) \cancel{r} z\right]
\end{equation}

\begin{equation} \label{eq: torus diver}
\nabla_g \cdot z = \frac{\partial }{\partial  \omega_1} z[1] + \frac{-r\,sin(\omega_1)}{R + r\,cos(\omega_1)} \, z[1] + \frac{\partial }{\partial  \omega_2}   z[2].
\end{equation}
The sleeve set of angle-length $l$, and center at $\omega_2 = c_0$ is denoted in parameterization domain as $\tilde{C}$, and defined as follows:
\begin{equation}
\tilde{C} = \{\omega_1, \omega_2: |\omega_2 - c_0 |\leq \frac{l}{2} \}.
\end{equation}
For convenience, W.T.L.O.G. we consider the sleeve set to have center at $\omega_2=0$, i.e.
\begin{equation}
\tilde{C} = \{\omega_1, \omega_2: |\omega_2 |\leq \frac{l}{2} \},
\end{equation}
resulting with
\begin{equation}\label{eq:torus sleeve boundary}
\partial \tilde{C} = \{\omega_1,\omega_2:\,|\omega_2|=\frac{l}{2}\}.
\end{equation}

\subsection{Finding a field of required properties}
The first property we are looking for is orthogonality to the boundary, on all boundary points. Since $g$ is diagonal, we have that $S$ preserves angles - hence orthogonality may be tested in parameterization domain. Furthermore, we need the field to be of unit norm on the boundary. By \eqref{eq:torus sleeve boundary} we have that $z$ is orthogonal to the boundary and of unit norm on the boundary if 
$z(\omega_1,\omega_2=\frac{l}{2})[1]=\frac{1}{R+rcos(\omega_1)}$,  $z(\omega_1,\omega_2=\frac{-l}{2})[1]=-\frac{1}{R+rcos(\omega_1)}$
, and
$z(\omega_1,|\omega_2=\frac{l}{2})[2]=0\,\forall \omega_1$.

Let $\Theta(\omega_2) =\begin{cases}
\omega_2-\frac{l}{2} & \omega_2 \in [0,\pi)]\\
\omega_2+\frac{l}{2} & \omega_2 \in (-\pi,0)
\end{cases} $. The choice
\begin{equation} \label{eq:torus good z}
z = \left(\frac{\frac{2}{l}\omega_2}{R+rcos(\omega_1)},\,\alpha 
\Theta(\omega_2)
\right)^T,
\end{equation}
satisfies above properties. Another required property is a constant divergence inside $C$. Let us show that this choice satisfies that as well, except for a zero-measure set of points: Plugging to \eqref{eq: torus diver} we have
\begin{equation}
\nabla_g \cdot z = \cancel{-\frac{2}{l}\omega_2\frac{-r\,sin(\omega_1)}{\left(R + r\,cos(\omega_1)\right)^2}} + \cancel{\frac{-r\,sin(\omega_1)}{R + r\,cos(\omega_1)} \, \frac{\frac{2}{l}\omega_2}{R+rcos(\omega_1)}} + \alpha\frac{\partial  
\Theta(\omega_2)}{\partial  \omega_2},   
\end{equation}
where $\frac{\partial  
\Theta(\omega_2)}{\partial  \omega_2} $ $=1\forall \omega_2 \neq 0,\pi$, thus
\begin{equation}
\nabla_g \cdot z = \alpha \,\forall \omega_2 \neq 0,\pi.
\end{equation}
\subsection{Demonstrating eigensets}
To prove an eigenset, we need to construct a field $\xi$, that satisfies the eigenfunction properties \eqref{def: eigenfunction} for a $\psi$ as in \eqref{eq: psi def}. In the current case $\psi$ is defined for a sleeve set $C$ on a torus.

First we note, that the appropriate field  $z$ of \eqref{eq:torus good z} can be similarly defined for the complement set - which is a sleeve set as well, but with its center at $\omega_2 = \pi$. It turns out, that the fields for center at either $\omega_2=\pi$, or at $\omega_2=0$, are of the same form, up to a sign factor. Thus we define $\xi$ to admit \eqref{eq:torus good z} for $\tilde{C}$ inside $\tilde{C}$, and the minus of \eqref{eq:torus good z} for $\Omega \backslash \tilde{C}$, i.e.
\begin{equation}
\xi = \begin{cases}
\left(\frac{\frac{2}{l}\omega_2}{R+rcos(\omega_1)},\,\alpha_1 
\Theta(\omega_2)
\right)^T  & \omega_1,\omega_2 \in \tilde{C}\\
\left(\frac{\frac{2}{l}\omega_2}{R+rcos(\omega_1)},\,\alpha_2 
\Theta(\omega_2)
\right)^T & \omega_1,\omega_2 \in \Omega \backslash \tilde{C}
\end{cases},
\end{equation}
which has $\nabla_g \cdot \xi = \begin{cases}
\alpha_1 & \omega_1,\omega_2 \in \tilde{C}\\
\alpha_2 & \omega_1,\omega_2 \in \Omega \backslash \tilde{C}
\end{cases}$, and is unit-orthogonal to the boundary, i.e. $z=\tilde{n}^C\,\forall \omega_1, \omega_2 \in \partial \tilde{C}$. If we set $\alpha_1=1,\,\alpha_2=\beta$, where $\beta$ is as in \eqref{eq: beta}, then indeed we have the eigenfunction property $\psi = \lambda \nabla_g \cdot \xi$.

By \eqref{def: eigenfunction} it is left to show that $\nabla_g \cdot \xi \in \partial NETV(\psi)$. To show this, it is sufficient, by Eq. \eqref{eq: converse subdiff}, to show that $\int_\Omega \nabla_g \cdot \xi \psi \, da =NETV(\psi)$. Let us begin: By one-homogeneity of the $NETV$ - we have that $NETV(\psi)=(1+\beta)NETV(C)$.

Since  $\xi=\tilde{n}^C \forall \omega_1, \omega_2 \in \partial \tilde{C}$, we have by Eq. \eqref{eq: normals solution} that $NETV(C) = \int_\Omega \nabla_g \cdot \xi \tilde{\chi}^C \, da$. 

Thus by the definition of $\psi$ \eqref{eq: psi def}, we have

$\int_\Omega \nabla_g \cdot \xi \psi \, da = $

$\int_\Omega \nabla_g \cdot \xi  (\tilde{\chi}^C - \beta \tilde{\chi}^{{M\backslash C}}) \, da = $

$\int_\Omega \nabla_g \cdot \xi ( \tilde{\chi}^C - \beta (1-\tilde{\chi}^{C})) \, da =$

$(1+\beta)\int_\Omega \nabla_g \cdot \xi \tilde{\chi}^C \, da - \beta \int_\Omega \nabla_g \cdot \xi  \, da = $

$(1+\beta)\int_\Omega \nabla_g \cdot \xi \tilde{\chi}^C \, da - \cancel{\lambda \beta \int_\Omega \psi  \, da} = $

$(1+\beta)\int_\Omega \nabla_g \cdot \xi \tilde{\chi}^C \, da$,

where the last cancellation uses the Neumann boundary condition assumption.  Thus we have $\int_\Omega \nabla_g \cdot \xi \psi \, da = (1+\beta)\int_\Omega \nabla_g \cdot \xi \tilde{\chi}^C \, da=(1+\beta)NETV(C)=NETV(\psi)$. By Eq. \eqref{eq: converse subdiff} we know that this is sufficient for $\xi \in \partial NETV(\psi)$.

\section{Shape Deformation Details}\label{app:shape_deformation}

First, let us consider for simplicity the scalar constrained $NETV$ flow: Consider a manifold $M$ and a function $f(q)$ on the manifold. Let us perform a total variation minimzing flow, initialized with $f$. The evolving function is constrained throughout the flow, i.e. denote $u(q, t)$ as the function at time $t$ of the flow, and the constrains $\{u(q_i, t)=c(q_i),\,q_i\in M\}_{i=1}^n$. Denote $u(q):=u(q,t=\infty)$.




Similarly, we may assume a surface $M$, with $x$ as its coordinate functions. The shape deformation problem requires to find some new surface with coordinates $x'$, which preserves the natural structure of $x$,  under the constraints that some points are pre-determined.  To this end we define the deformation field of a proposed $x'$ as $u=x'-x$ (from which $x'$ is reconstructed as $x'=x+u$). We perform constrained total-variation minimization on $u$ , where each nonzero constraint in  $\{c(q_i)\}_{i=1}^n$ is translated to a point $x'(q_i) \neq x(q_i)$. For the minimization process, we adapt a vectorial version of the process introduced by \cite{bronstein2016consistent}, where we use $p_{M1}$ of \eqref{eq:m1_operator} as the sub-differential. See Figs. \ref{fig:snail}, \ref{fig:deformation_process}, \ref{fig:additional_stretch} for the results.

\section{Shape flows and their equivalence sets}\label{app:flow_equiv}
This section serves as a reminder of observations on equivalent flows, namely the equivalence presented in \cite{kimmel1998image}. Consider the flow equation,
\begin{equation}
\frac{\partial f}{\partial t} =  p(f(t)),
\end{equation}
where $p$ is some operator. E.g., choosing $p=\Delta_{g,2}$ we obtain Eq. \eqref{eq: lin diff}. Suppose that $M$ is a manifold of some shape, and we would like to process this shape via a flow. To do so we can initialize the flow with the shape's coordinate function, i.e.  $f(0)=S$. Doing so with Eq. \eqref{eq: lin diff} is a classical shape smoothing flow. During the shape flow, at a given time $t$, we have  $p(f(t)): M \rightarrow \R^3$ , a field which describes the velocity of the evolving shape's points. Recalling the normal
\begin{equation} \label{Eq: normal}
N = \frac{S_{\omega_1} \times S_{\omega_2}}{|S_{\omega_1} \times  S_{\omega_2}|},
\end{equation}
a vector $\vec{V}\in p(f(t))$ may be decomposed to its normal and tangential component as  follows: $\vec{V}_N =  \langle\vec{V}, \vec{N}\rangle \vec{N}, \vec{V}_T= \vec{V}-\vec{V}_N$,  yielding$\frac{\partial f}{\partial t} =  p(f)_N+ p(f)_T$,  The normal component accounts for the change of the shape in time, and the tangential movement is merely a change of the shape's parameterization in time. Thus two shape flows are considered equal if their normal components are equal, and an equivalence set of shape flows is defined as:
\begin{equation} \label{Eq: equivalent flows}
\{\frac{\partial f}{\partial t} = q(f) : \langle q(f), \vec{N}\rangle = \langle p(f), \vec{N} \rangle \}.
\end{equation}
We note that parameterization constraints may cause differences between two flows from the same equivalence set.
\subsection{Beltrami -TV Flow equivalence}
In \cite{kimmel1998image}, an equivalence between MCF and the TV-flow was shown:
On one hand they define the Beltrami flow $\frac{\partial f}{\partial t} = \frac{H(t)}{<\vec{N}(t), \hat{z}>}\hat{z}$, which is obviously equivalent to MCF, in the sense of Eq. \eqref{Eq: equivalent flows}. On the other hand, consider a shape parameterized as an "image function", i.e. $S = (u, v, f(u,v))$, where the image is given by $f(u,v)$, and $u, v$ are a 2D Euclidean domain discretized as the pixel grid. In this case they show that plugging Eq. \eqref{Eq: normal} in to the Beltrami flow is equivalent to the image TV-flow.

Remark: equivalence by Eq. \eqref{Eq: equivalent flows} does not account for parameterization, which may induce implicit constraints - as is the case here: During Beltrami flow, the evolving shape's points are constrained to move in the $\hat{z}$ direction, thus keeping the parameterization $S = (u, v, f(u,v))$, contrary to MCF, where no such constraint exists. This is the reason MCF convergenes to a point,  while image TV flow converges to a plane $(u, v, const)$.

\end{document}